\newtheorem{thm}{Theorem}[section]
\newtheorem{df}[thm]{Definition}
\newtheorem{proposition}[thm]{Proposition}
\newtheorem{remark}[thm]{Remark}
\newtheorem{lemma}[thm]{Lemma}
\newtheorem{theorem}[thm]{Theorem}
\newtheorem{definition}[thm]{Definition}
\newtheorem{conjecture}[thm]{Conjecture}
\newtheorem{problem}[thm]{Problem}
\newcommand{\m}[1]{\mathbb{#1}}
\newcommand{\q}[1]{\mathcal{#1}}
\newcommand{\wht}[1]{\widetilde{#1}}
\newcommand{\gra}[1]{\mathbf{#1}}
\newcommand{\ep}{\varepsilon}
\newcommand{\f}{\frac}
\newcommand{\rd}{\partial}
\newcommand{\alp}{\alpha}
\newcommand{\bt}{\beta}
\newcommand{\bA}{{\bf A}}
\newcommand{\bB}{{\bf B}}
\newcommand{\mfg}{\mathfrak g}
\newcommand{\ls}{\lesssim}
\newcommand{\de}{\delta}
\newcommand{\om}{\omega}
\def\th {\theta}
\def\i {\infty}
\newcommand{\ud}{\mathrm{d}}
\newcommand{\tBox}{\widetilde{\Box}}
\newcommand{\la}{\langle}
\newcommand{\ra}{\rangle}
\newcommand{\calS}{\mathcal S}
\newcommand{\RR}{\mathbb R}
\newcommand{\nub}{\underline{\nu}}
\newcommand{\trchb}{\slashed{\mathrm{tr}}\chib}
\def\trch{\slashed{\mathrm{tr}}\chi}
\def\ub {\underline{u}}
\def\Hb {\underline{H}}
\def\chib {\underline{\chi}}
\def\chih {\hat{\chi}}
\def\chibh {\hat{\underline{\chi}}}
\def\omegab {\underline{\omega}}
\def\etab {\underline{\eta}}
\def\alp {\alpha}
\def\bt {\beta}
\def\ep {\epsilon}
\def\om {\omega}
\def\Om {\Omega}
\def\Omg{\Omega}
\numberwithin{equation}{section}
\begin{document}

\title{High-frequency solutions to the Einstein equations}

\begin{abstract}
We review recent mathematical results concerning the high-frequency solutions to the Einstein vacuum equations and the limits of these solutions. In particular, we focus on two conjectures of Burnett, which attempt to give an exact characterization of high-frequency limits of vacuum spacetimes as solutions to the Einstein--massless Vlasov system. Some open problems and future directions are discussed.
\end{abstract}

\author{C\'ecile Huneau}
\address{CMLS, Ecole Polytechnique, 91120 Palaiseau, France}
\email{cecile.huneau@polytechnique.edu}
\author{Jonathan Luk}
\address{Department of Mathematics, Stanford University, CA 94304, USA}
\email{jluk@stanford.edu}
	
	\maketitle

\tableofcontents

\section{Introduction}

In general relativity, the evolution of spacetime $(\mathcal M, g)$ is governed by the Einstein equations
\begin{equation}\label{eq:Einstein}
\mathrm{Ric}(g) - \f 12 R(g) g = 8\pi T,
\end{equation}
where $\mathrm{Ric}(g)$ and $R(g)$ are the Ricci and scalar curvature of $g$, respectively, and $T$ is the stress-energy-momentum tensor describing the matter content of the spacetime. The equations \eqref{eq:Einstein}, even in vacuum, i.e., when $T \equiv 0$, are highly nonlinear. When the gravitational radiation is sufficiently weak, the linearized Einstein equations may provide a good approximation, but in more general settings the nonlinear features of the equation play an important role. In 1968, Isaacson initiated a new perturbative scheme, inspired by the WKB analysis, to study gravitational radiation in the limit of small amplitude but high frequency (see \cite{Isaacson1, Isaacson2}). Around the same time, formal approximate solutions have been derived by Choquet-Bruhat \cite{CB.HF} in the language of geometrical optics. We refer the reader also to \cite{AliHunt, MacCallumTaub} for related constructions of approximate solutions. In these constructions, it is seen that in the high-frequency limit of vacuum solutions, a non-trivial ``effective'' stress-energy-momentum tensor is generated.

In 1989, Burnett \cite{Burnett} formulated a conjecture on the possible effects of small amplitude and high frequency perturbations in general relativity, based on the following example. This example in particular shows that high-frequency limits of vacuum solutions need not satisfy the Einstein vacuum equations. Consider a sequence of vacuum plane wave solutions
\begin{equation}
	\label{planewavl} g_\lambda=- \ud u\ud v+B_\lambda(u)^2 (e^{\omega_\lambda(u)}\,\ud x^2+e^{-\omega_\lambda(u)} \, \ud y^2),
\end{equation}
with the function $\omega_\lambda$ chosen to be of the form $\omega_\lambda(u) =\lambda\alpha(u)\cos \left(\frac{u}{\lambda}\right)$ (for some fixed function $\alp$). The Einstein vacuum equations are equivalent to the ordinary differential equation for $B_\lambda$
\begin{equation*}\label{eqB}
	B_\lambda''(u) -\omega_\lambda(u)^2 B_\lambda(u)=0.
\end{equation*}
Taking the high frequency limit, i.e., letting $\lambda \to 0$, we obtain $g_\lambda \to g_0$ uniformly, where
$$g_0=- \ud u \ud v+B_0(u)^2 (\ud x^2 + \ud y^2),$$
and
$$	B_0''(u) -\frac{1}{2}\alpha(u)^2 B_0(u)=0.$$
It can be checked that $g_0$ is no longer a solution to Einstein vacuum equations, but instead
$$\mathrm{Ric}_{\mu \nu}[g_0]=\frac{1}{2}\alpha(u)^2 \rd_\mu u \rd_\nu u.$$
The right-hand side corresponds to the stress-energy-momentum tensor of a null dust, i.e., a massless fluid without pressure; in fact, effective matter field also satisfies the propagation equation of a null dust. This can be considered as a particular case of a (measured-valued) Vlasov field.

Burnett then introduced a conjecture characterizing a general class of high-frequency limits of vacuum spacetimes. One possibility to define ``high-frequency limit'' is that in a coordinate system\footnote{The work of Burnett \cite{Burnett} only takes $K = 1$, while $K = 2$ is in the spirit of Isaacson's perturbations \cite{Isaacson1}.}
\begin{equation}\label{eq:intro.HF.def}
|\rd^k(g_i -g_0)| \ls \lambda_i^{1-k},\quad k=0,1,\cdots, K
\end{equation}
for a sequence of decreasing numbers $\{ \lambda_i\}_{i=1}^\infty$ with $\lim_{i\to \infty}\lambda_i =0$, consistent with the example considered above. Supposing that $\mathrm{Ric}(g_i) = 0$ for all $i \in \mathbb N$, we would like to understand the stress-energy-momentum tensor of the limit $g_0$. In \cite{Burnett}, Burnett made the following conjecture:
\begin{conjecture}\label{conj:forward}
A high-frequency limit of vacuum spacetimes must be isometric to a solution to the Einstein--massless Vlasov system (for a suitable Vlasov field).
\end{conjecture}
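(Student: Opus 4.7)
The plan is to treat $h_i := g_i - g_0$ as a rapidly oscillating tensor with $|\partial^k h_i| \lesssim \lambda_i^{1-k}$, so that first derivatives are bounded but second derivatives blow up like $\lambda_i^{-1}$, and to extract an effective source from the nonlinear part of $\mathrm{Ric}$ that does not pass to the limit. Expanding $0 = \mathrm{Ric}(g_0 + h_i)$ about $g_0$, one gets schematically
\[
\mathrm{Ric}(g_0) + L_{g_0}(h_i) + Q_{g_0}(\partial h_i, \partial h_i) = O(\text{quadratic in } h_i \text{ and its first derivatives, times } h_i),
\]
where $L_{g_0}$ is a linear second-order operator in $h_i$ and $Q_{g_0}$ is the universal quadratic form in first derivatives of $h$ coming from the expansion of the Ricci tensor. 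Under the a priori bounds \eqref{eq:intro.HF.def}, the linear term $L_{g_0}(h_i)$ tends to $0$ in the sense of distributions because $h_i \to 0$ uniformly, while $\mathrm{Ric}(g_0)$ is fixed, so in the distributional limit one expects $\mathrm{Ric}(g_0) = -\lim_i Q_{g_0}(\partial h_i, \partial h_i)$, with the right-hand side needing to be identified as the stress-energy of a massless Vlasov field.

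The first step will be to fix a good gauge. I would impose a wave/harmonic-type coordinate condition (or a geometric generalized wave gauge) adapted to $g_0$ so that both $g_i$ and $g_0$ satisfy a reduced system whose principal part is $\Box_{g}$. In this gauge the vacuum equation becomes a quasilinear wave equation, and the equation satisfied by $h_i$ takes the schematic form $\Box_{g_0} h_i = F_i$, with $F_i$ bounded in $L^\infty$ and oscillating at frequency $\lambda_i^{-1}$. The principal symbol $p(x,\xi) = g_0^{\alpha\beta}(x)\xi_\alpha\xi_\beta$ will then dictate, via propagation of singularities, that the oscillations of $h_i$ concentrate on the null cone $\{p = 0\} \subset T^*\mathcal{M}$.

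The second and central step is to define a microlocal defect (Wigner) measure $\mu$ on $T^*\mathcal{M}$ associated to the sequence $\{\partial h_i\}$, so that for every zeroth-order symbol $a(x,\xi)$,
\[
\int a(x,\xi)\,\mathrm{d}\mu(x,\xi) = \lim_{i\to\infty}\langle a(x,\lambda_i D) \partial h_i,\partial h_i\rangle,
\]
after extracting a subsequence and after appropriate rescaling, and so that the quadratic limit is recovered as
\[
\lim_{i\to\infty} Q_{g_0}(\partial h_i,\partial h_i)(x) = \int_{T^*_x\mathcal{M}} q(x,\xi)\,\mathrm{d}\mu(x,\xi),
\]
with $q$ a fixed tensor-valued symbol homogeneous of degree $2$ in $\xi$. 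Using the reduced vacuum equation and a standard commutator/energy argument, I expect to show (a) $\mu$ is supported on $\{p = 0\}$ and (b) $\mu$ is invariant under the geodesic (Hamiltonian) flow of $p$. Provided that, after passing to a suitable quotient by the gauge-theoretic polarization constraints arising from the linearized Bianchi identity, the symbol $q$ reduces (on the support of $\mu$) to $\xi_\mu\xi_\nu$ up to a non-negative scalar density, the measure $\mu$ pushed onto the future null shell defines a Vlasov field $f \ge 0$ whose stress-energy is exactly $-\mathrm{Ric}(g_0) + \tfrac{1}{2}R(g_0)g_0$, and the flow invariance is the massless Vlasov equation.

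I expect the main obstacle to lie in the algebraic/tensorial analysis of the quadratic form $Q_{g_0}$ modulo gauge: one must show that all components of the limiting measure other than those along the polarization tensor of a transverse-traceless mode are killed (by the linearized constraint equations applied to $h_i$), so that the limit really has the form $\int \xi_\mu\xi_\nu\,f\,\mathrm{d}\text{vol}$ rather than a more general rank-two symmetric tensor. A secondary but serious difficulty is that the hypothesis \eqref{eq:intro.HF.def} controls only finitely many derivatives, so the Wigner measure formalism has to be set up at low regularity, using only Calder\'on--Vaillancourt-type bounds, and one must prevent loss of mass at frequency infinity through the bound $|\partial h_i|\lesssim 1$, the uniform boundedness of $g_i$, and a careful localization of the symbol class. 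Once these two points are overcome, the propagation statement and the non-negativity of $f$ follow by classical microlocal arguments, and Conjecture~\ref{conj:forward} is established.
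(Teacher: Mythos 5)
The statement you are attempting to prove is a conjecture: the paper does not prove it in full generality, and it remains open. What the paper surveys are gauge-specific partial results --- in double null coordinates with angular regularity (Section~\ref{sec:angularly.regular}), in $\mathbb U(1)$ symmetry with an elliptic gauge (Section~\ref{sec:U1.forward}, \cite{HL.Burnett}), and in generalized wave coordinates (Section~\ref{sec:wave.Burnett}, \cite{HL.wave}). Your outline, with its wave-coordinate reduction and microlocal defect measure, most closely resembles the Section~\ref{sec:wave.Burnett} approach, so I will compare against that.

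The most serious gap is the sentence ``Once these two points are overcome, the propagation statement and the non-negativity of $f$ follow by classical microlocal arguments.'' They do not. The classical theorem you invoke --- that a microlocal defect measure associated to a sequence of solutions to a wave equation is supported on the light cone and invariant under Hamiltonian flow --- applies to a \emph{fixed linear} operator with compact inhomogeneity. Here $h_i=g_i-g_0$ satisfies a \emph{nonlinear} wave equation: the coefficients of the principal part involve $h_i$ itself (quasilinear term), and the source contains quadratic terms $Q(\partial h_i,\partial h_i)$ that are merely bounded in $L^1$ and do not vanish in any useful topology. Consequently $\tBox_{g_0}h_i$ is \emph{not} compact in $H^{-1}$, and the propagation theorem cannot be applied off the shelf. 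Establishing the transport equation for $\mu$ is precisely where the real work lies in \cite{HL.wave}: the right-hand side of the ``energy-type'' identity \eqref{eq:wave.Burnett.main} contains all the nonlinear terms, and one must show that each vanishes in the weak limit using (i) trilinear compensated compactness for the classical null forms $Q_0$, $Q_{\alpha\beta}$ (including normal-form estimates for $Q_{\alpha\beta}$), (ii) a specific trilinear integration by parts that reveals hidden null structure in the term $P_{\mu\nu}$ that fails the classical null condition, and (iii) an exact algebraic cancellation between the linear $L_{\mu\nu}$ contribution and the commutator term, which requires taking the specific contracted combination \eqref{def:wave.coord.mu} of the measures $\mu_{\alpha\beta\rho\sigma}$ (the transport equation is not expected componentwise). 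None of this is ``classical''; it is the theorem. Two further issues: your first step --- changing to wave coordinates for each $g_i$ --- is not free, since one must show that the resulting coordinate changes preserve the high-frequency bounds \eqref{eq:intro.HF.def} uniformly, and this gauge-dependence question is explicitly posed as open in Section~\ref{sec:future}. And your claim that the linearized constraints force the limiting measure onto a rank-one polarization so that $\mathrm{Ric}(g_0)_{\mu\nu}=\int\xi_\mu\xi_\nu\,\ud\mu$ is asserted, not demonstrated; in the paper this identification requires the precise computation in Proposition~\ref{prop:wave.Ricci}, again hinging on the wave coordinate condition to discard the quasilinear contribution.
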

Notice that it is possible for the limit to be vacuum as well, in which the massless Vlasov field in the limit vanishes.

In the same paper \cite{Burnett}, Burnett also raised the question of whether the converse of Conjecture~\ref{conj:forward} is true. We also formulate this as a conjecture.
\begin{conjecture}\label{conj:backward}
Every (sufficiently regular) solution to the Einstein--massless Vlasov system arises as a high-frequency limit of vacuum spacetimes.
\end{conjecture}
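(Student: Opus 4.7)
The plan is to construct, given a (sufficiently regular) solution $(\mathcal{M}, g_0, f)$ to the Einstein--massless Vlasov system, a sequence of vacuum metrics $g_\lambda$ with $|\partial^k(g_\lambda - g_0)| \lesssim \lambda^{1-k}$ for $k = 0, 1, \ldots, K$. The construction will rest on a WKB / geometric optics ansatz. Because a generic massless Vlasov field is supported on a continuous subset of the mass shell, a single or finitely many oscillating phases will not suffice, so the ansatz must accommodate a large, $\lambda$-dependent family of phase functions.

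First I would discretize the Vlasov field, approximating $f$ by a sum of $N = N(\lambda)$ delta-like distributions $\sum_A c^{(A)}(x)\,\delta_{p^{(A)}(x)}$ on the mass shell, with $N(\lambda) \to \infty$ as $\lambda \to 0$ and the momenta $p^{(A)}$ pairwise ``transverse'' at the working scale. To each $A$, attach a phase $u^{(A)}$ satisfying the eikonal equation $g_0^{\mu\nu}\partial_\mu u^{(A)}\partial_\nu u^{(A)} = 0$ with $\partial u^{(A)} = p^{(A)}$, and an amplitude tensor $H^{(A)}_{\mu\nu}$. The approximate high-frequency metric is then
\[
g_\lambda^{\mathrm{app}} = g_0 + \lambda \sum_A H^{(A)}_{\mu\nu}\cos\!\left(\frac{u^{(A)}}{\lambda}\right) + O(\lambda^2).
\]
Expanding $\mathrm{Ric}(g_\lambda^{\mathrm{app}}) = 0$ order by order, the $\lambda^{-1}$ terms enforce transversality and polarization conditions on each $H^{(A)}$; the $\lambda^0$ terms yield, on one hand, transport equations for the amplitudes along the null geodesics generated by $p^{(A)}$, and on the other hand, after averaging the quadratic $(\partial g)^2$ terms, the requirement $\sum_A |H^{(A)}|^2 p^{(A)}_\mu p^{(A)}_\nu \approx 8\pi T_{\mu\nu}[f]$ --- which is exactly what the discretization of $f$ is engineered to deliver.

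The second step is to upgrade the approximate solution to an exact vacuum solution. I would fix a gauge (harmonic/wave coordinates or a double null foliation), solve the constraint equations on an initial slice so that the induced data match $g_\lambda^{\mathrm{app}}$ up to order $\lambda^M$ for a sufficiently large $M$, and then run a long-time existence argument. The remainder $h_\lambda := g_\lambda - g_\lambda^{\mathrm{app}}$ satisfies a quasi-linear wave equation with source of size $O(\lambda^M)$; the essential technical ingredient is a high-frequency-adapted energy estimate in which one commutes with the rescaled derivatives $\lambda\partial$, rather than $\partial$, in order to absorb the loss coming from the oscillatory factors without paying negative powers of $\lambda$.

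The main obstacle, and what will ultimately dictate the hypotheses on $f$, is controlling resonances and beat frequencies. The nonlinear terms produce cross-contributions of the form $\cos\!\left((u^{(A)} \pm u^{(B)})/\lambda\right)$, which average to zero only when the combined phase $u^{(A)} \pm u^{(B)}$ has non-vanishing gradient; at any point where two rays in $\mathrm{supp}(f)$ become tangent one faces a genuine small-divisor problem. Ruling this out pointwise forces a delicate joint tuning of the discretization size $N(\lambda)$, the angular spread of each packet, the order of the WKB expansion, and a non-resonance/transversality assumption on the support of $f$. I expect this, together with the construction of admissible high-frequency initial data compatible with the Einstein constraints, to constitute the bulk of the work; once a uniform non-resonance condition is in force, the wave-equation closure should proceed by energy methods adapted to oscillatory backgrounds of the type already developed for the forward direction.
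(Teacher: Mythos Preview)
This statement is a \emph{conjecture}, not a theorem: the paper does not prove it, and indeed it remains open in full generality. What the paper contains are partial results in restricted settings --- angularly regular spacetimes (where only two dust families can arise), $\mathbb U(1)$ symmetry with small data in an elliptic gauge, and Touati's constructions in generalized wave coordinates --- together with a list of open problems (Section~\ref{sec:future}) that includes precisely the gaps your outline would need to fill.

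Your strategy (discretize the Vlasov field into $N(\lambda)$ dusts, build a multi-phase geometric optics ansatz, then close by energy estimates) is in fact the approach behind the paper's partial results, especially Theorem~\ref{thm:main.vlasov} and Touati's work. But the step you describe as ``the wave-equation closure should proceed by energy methods adapted to oscillatory backgrounds'' is exactly where the conjecture is open. As Section~\ref{sec:intro.backward} explains, any construction here lives strictly below the $H^2$ threshold of the bounded $L^2$ curvature theorem, so there is no general existence theory to invoke; the known constructions succeed only because of additional structure (angular regularity giving a genuine low-regularity well-posedness result, or $\mathbb U(1)$ symmetry allowing an elliptic gauge where the metric gains two derivatives over the matter). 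Without such structure, commuting with $\lambda\partial$ does not by itself close: the quasilinear coupling feeds the worst terms back into the principal part, and the paper's $\mathbb U(1)$ construction (Section~\ref{vlasov}) already shows that even there one loses uniform $L^\infty$ control of $\partial g_\lambda$ as $N\to\infty$ (only $L^4$ survives), must solve the eikonal equation with respect to the \emph{perturbed} metric $g_\lambda$ rather than $g_0$, and must carefully arrange that the smallness parameter is independent of $N$. Finally, the transversality/non-resonance hypothesis you propose to impose on $\mathrm{supp}(f)$ would exclude a large class of Vlasov fields, so even if the argument closed it would not establish the conjecture as stated (``every sufficiently regular solution'').
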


\textbf{The purpose of this article is to survey some recent results concerning Conjectures~\ref{conj:forward} and \ref{conj:backward} as well as to describe some related open problems.}

If Conjecture~\ref{conj:forward} and Conjecture~\ref{conj:backward} are both true, then solutions to the Einstein--massless Vlasov system characterize all possible high-frequency limits. As discussed in \cite{Burnett}, physically this means that the effective matter arising from high-frequency waves propagating in different directions ``do not interact directly, but that they do affect one another by their effect on the background spacetime.''

Conjecture~\ref{conj:forward} and Conjecture~\ref{conj:backward}, though related in an obvious way, are somewhat different mathematical questions. The forward direction is mostly a question about \emph{compensated compactness}, i.e., a question concerning how weak limits behave under the special structure of the nonlinearity of the Einstein equations. In contrast, the reverse direction is a question of \emph{low-regularity existence}, since the construction of high-frequency limits necessarily involves dealing with solutions of low-regularity.

\subsection{Conjecture~\ref{conj:forward} and compensated compactness} Conjecture~\ref{conj:forward} concerns compactness, namely, it asks whether the weak limit of solutions to a partial differential equation remains a solution. For general nonlinear partial differential equations, compactness would fail since weak limits may fail to commute with nonlinearities. 

To see the role of compensated compactness, consider two sequences of functions $\{u_i\}_{i=1}^\infty$ and $\{v_i\}_{i=1}^\infty$ which converge weakly in $L^2(\mathbb R^d)$ to some limits $u_0$ and $v_0$, respectively, i.e., $\int_{\mathbb R^d} u_i w \to \int_{\mathbb R^d} u_0 w$ and $\int_{\mathbb R^d} v_i w \to \int_{\mathbb R^d} v w$ for all $L^2$ function $w$. Then in general the product $u_i v_i$ needs not converge in distribution to $u_0 v_0$, i.e., it may be that $\int_{\mathbb R^d} u_i v_i \varphi \not\to \int_{\mathbb R^d} u_0 v_0 \varphi$ for $\varphi \in C^\infty_c(\mathbb R^d)$. (Indeed, as in the plane wave example \eqref{planewavl} above, if we choose $u_i = v_i = \cos(ix)$, then\footnote{Here, and from now on, we use the notation $\rightharpoonup$ to denote weak convergence.} $u_i, v_i \rightharpoonup 0$ weakly in $L^2$, but $u_i v_i \rightharpoonup \f 12 \neq 0$.) Perhaps the most well-known example of compensated compactness is the celebrated \emph{div-curl lemma} of Murat \cite{Murat.compensation} and Tartar \cite{TartarCC}, which states that if $u_i$, $v_i$ are vector-valued i.e., $u_i,\,v_i: \mathbb R^d \to \mathbb R^d$ (with components $u_i = (u_i^1, \cdots, u_i^d)$, $v_i = (v_i^1, \cdots, v_i^d)$) such that $(u_i,v_i)$ converges to $(u_0,v_0)$ weakly in $L^2$ and that $\mathrm{div}\, u_i$ and $\mathrm{curl}\, v_i$ are bounded\footnote{More generally, one only requires that $\mathrm{div} \,u_i$ and $\mathrm{curl}\, v_i$ are compact in $H^{-1}_{\mathrm{loc}}$.} in $L^2$, then $\sum_{n=1}^d u_i^n v_i^n$ converges to $\sum_{n=1}^d u_0^n v_0^n$ in the sense of distribution.

Another example\footnote{It is possible to rephrase these examples in terms of the div-curl lemma, but in these special situations, the phenomenon can be seen more directly.} of the phenomenon of compensated compactness which is closely related to our case can be found in a system of semilinear wave equations with respect to a fixed Lorentzian metric $g$ satisfying the classical null condition, i.e., nonlinear wave equations with quadratic derivative nonlinearities consisting of \emph{classical null forms} $Q_0(\phi,\psi) = g^{\alp\bt} \rd_\alp \phi \rd_\bt\psi$ and $Q_{\alp\bt}(\phi,\psi) = \rd_\alp \phi \rd_\bt\psi - \rd_\alp \psi \rd_\bt \phi$. For concreteness, consider the semilinear model
\begin{equation}\label{eq:example.null.condition}
\Box_g \phi = Q_0(\psi,\psi),\quad \Box_g \psi = Q_{\alp\bt}(\phi,\psi).
\end{equation}
Suppose  $\{(\phi_i,\psi_i)\}_{i=1}^\infty$ is a sequence of solutions to \eqref{eq:example.null.condition} such that $(\phi_i,\psi_i) \to (\phi_0, \psi_0)$ locally uniformly and that $(\rd\phi_i, \rd \psi_i)$ are locally uniformly bounded (cf.~\eqref{eq:intro.HF.def}). Observe that as a consequence of the equations \eqref{eq:example.null.condition}, $(\Box_g \phi_i, \Box_g \psi_i)$ are also locally uniformly bounded. Therefore, after passing to a subsequence (which we do not relabel), we know that $(\rd \phi_i, \rd \psi_i, \Box_g \phi_i, \Box_g \psi_i) \rightharpoonup (\rd \phi_0, \rd \psi_0, \Box_g \phi_0, \Box_g \psi_0)$ weakly. In particular\footnote{Here, we use the standard fact that $\mbox{w-lim} (f_i h_i) = (\mbox{lim} f_i)(\mbox{w-lim} h_i)$, $\mbox{w-lim} (\rd^\bt h_i) = \rd^\bt (\mbox{w-lim} h_i)$ if $f_i$ has a uniform limit and $h_i$ has a weak limit.},
\begin{align}
	&Q_{\alpha \beta}(\phi_i,\psi_i) = \partial_\alpha \phi_i \partial_\beta \psi_i - \partial_\alpha \psi_i \partial_\beta \phi_i =\partial_\alpha (\phi_i \partial_\beta \psi_i )- \partial_\beta(\phi_i \partial_\beta \psi_i) \rightharpoonup
Q_{\alpha \beta}(\phi_0,\psi_0), \label{eq:null.CC.Qab} \\
& Q_{0}(\phi_i,\psi_i) = g^{\alpha \beta}\partial_\alpha \phi_i \partial_\beta \psi_i  = \f 12 \Box_g (\phi_i \psi_i)-\f 12 \psi_i \Box_g \phi_i - \f 12 \phi_i \Box_g \psi_i \rightharpoonup
Q_{0}(\phi_0,\psi_0). \label{eq:null.CC.Q0}
\end{align}
Consequently, as a result of the particular structure of the nonlinear terms, the limit $(\phi_0, \psi_0)$ also solves \eqref{eq:example.null.condition}.

On the other extreme, it is possible for the failure of compactness to be very severe so that a very large class of defect terms can arise in the limit. One such example is the incompressible Euler equations
\begin{equation}\label{eq:Euler}
\rd_t v + \mathrm{div}(v \otimes v) = - \nabla P,\quad \mathrm{div}\, v = 0.
\end{equation}
Given \emph{any} smooth solution to the Euler--Reynolds system
\begin{equation}\label{eq:Euler.Reynolds}
\rd_t v + \mathrm{div}(v \otimes v) = - \nabla P + \mathrm{div}\mathring{R},\quad \mathrm{div}\, v = 0
\end{equation}
with a smooth symmetric trace-free $2$-tensor $\mathring{R}$, there exists a sequence of weak solutions $\{v_i\}_{i=1}^\infty$ to \eqref{eq:Euler} such that $v_i \to v$ in $L^2$ \cite[Chapter~1]{Isett.book}. In this example, there is significant non-compactness, and weak solutions to \eqref{eq:Euler} are very flexible. 

The Einstein vacuum equations under convergence given in \eqref{eq:intro.HF.def} stand in between the two examples given above. On the one hand, compensated compactness of the type seen in the system \eqref{eq:example.null.condition} does not hold, since the weak limits of vacuum spacetimes fail to be vacuum. (This is related to the fact that the Einstein vacuum equations fail the classical null condition, for instance in generalized wave coordinates.) The assertion in Conjecture~\ref{conj:forward}, however, is that the failure of convergence would not be as flexible as in \eqref{eq:Euler}, but instead this failure of convergence satisfies a transport equation. 

This phenomenon observed in Conjecture~\ref{conj:forward} can thus be viewed as a secondary form of compensated compactness. Heuristically, high frequency limits for \emph{linear} wave equations corresponds to linear massless Vlasov equation. (This can for instance be made precise by considering the microlocal defect measures; see Section~\ref{secmes}.) In the setting of Conjecture~\ref{conj:forward}, however, the terms corresponding to the failure of convergence satisfy \emph{nonlinear} wave equations. Nevertheless, Conjecture~\ref{conj:forward} asserts that in the high frequency limit these terms still satisfy a linear massless Vlasov equation propagating on the limiting spacetime.


\subsection{Conjecture~\ref{conj:backward} and low-regularity solutions to the Einstein equations}\label{sec:intro.backward} Conjecture~\ref{conj:backward}, on the other hand, has a very different flavor. In particular, it concerns \emph{constructions} of spacetimes. Notice that any construction for Conjecture~\ref{conj:backward} is necessarily in \emph{low regularity}. Indeed, if a sequence of vacuum spacetime metrics satisfy uniform $H^s$ bounds\footnote{For the remainder of the article, $H^s$ denotes the normed space of functions with up to $s$ derivatives in $L^2$ (understood in Fourier space if $s \notin \mathbb N \cup \{0\}$, i.e., $\|f\|_{H^s} = \| (1+|\xi|)^s \mathcal F f\|_{L^2}$ for $\mathcal F$ denoting the Fourier transform).} for some $s>1$, then by Rellich's theorem, there is a subsequential limit in the $H^1$ norm. Convergence in the $H^1$ norm would then be sufficiently strong to imply that the limit is also vacuum. Thus, any construction of examples for Conjecture~\ref{conj:backward}, where the limit is not vacuum, must not have $H^s$ norms uniformly bounded. (Indeed, for high-frequency oscillations modeled upon $\lambda \alp(u) \cos(\f u \lambda)$ as in Burnett's example, the $H^s$ norm is uniformly bounded as $\lambda\to 0$ only for $s\leq 1$.) On the other hand, the best known general threshold for controlling solutions to the Einstein vacuum equations is $H^2$ \cite{L21}. Thus, any progress concerning Conjecture~\ref{conj:backward} must necessarily construct spacetimes below the regularity threshold in \cite{L21}.

There are two types of results to this regard. The first type of results concern \emph{angularly regular} spacetimes \cite{LR.HF}, where one identifies a subset of $H^1 \setminus \cup_{s>0} H^s$ for which the Einstein vacuum equations still remain well-posed. In a suitable double null coordinate system, such a subset of $H^1$ consists of metrics which are only $H^1$ along the null directions, but are more regular in the angular directions. The local well-posedness of the Einstein vacuum equations in such a low-regularity class was established in \cite{LR2} (see also \cite{LR}). Once such a well-posedness result is known, the treatment of Conjecture~\ref{conj:backward} (and in fact also Conjecture~\ref{conj:forward}) can be carried out using compactness arguments; see Section~\ref{sec:angularly.regular}

For the second type of results, one does not have a general existence result, but instead a construction is made using the special feature of the problem. This has been done in different gauges \cite{HL.HF, HL.Vlasov, Touati2}. The key point here is that even though the initial data are of very low regularity, one can carry out a specific \emph{geometric optics} type construction. In the process, one writes down the solution in a particular ansatz with the properties that (1) even though the solution with the given ansatz is only in $H^1\setminus \cup_{s>0} H^s$ for $L^2$-based Sobolev spaces, it has better integrability than is given by Sobolev embedding and is in fact Lipschitz, and (2) the term has a specific high frequency profile (see already \eqref{ansatz}), which for instance has good properties when solving elliptic equations. We note that because of the special structure of the Einstein equations, the solutions that are constructed are better behaved than general geometric optics constructions (see, e.g., \cite{Met.book, Rauch.book}). In particular, while there are nonlinear effects so as to create an effective stress-energy-momentum tensor in the limit, the propagation of the high frequency waves in different directions are effectively decoupled, and the effects of higher harmonics are very weak.

\subsection{Further related works}

We discuss some related works in addition to those that we will survey in this article. We also refer the reader to Section~\ref{sec:future} where some other related works will be mentioned.

\subsubsection{Examples of high-frequency limits}\label{sec:related.examples} Many examples of high-frequency limits that are consistent with Burnett's framework have been constructed. The earliest examples were approximate solutions given by Choquet-Bruhat \cite{CB.HF} via a geometric optics construction. She constructed metrics $\{g_\lambda\}_{\lambda \in (0,1]}$ that are almost vacuum in the sense of $\mathrm{Ric}(g_\lambda) = O(\lambda)$ and such that the limit $g_0 = \lim_{\lambda \to 0} g_\lambda$ solves the Einstein--null dust system.

For exact examples, there are explicit examples that are given in the physics literature. The first example that we are aware of is that of plane waves in \eqref{planewavl}, which was already in Burnett's paper \cite{Burnett}. A slightly more complicated, but still explicit example was given by Green--Wald in \cite{GW2}. They considered vacuum spacetimes in polarized Gowdy which can be written down explicitly, and explicitly computed their limits and the limiting stress-energy momentum tensors. In this case, the limits are solutions to the Einstein--null dust system with two families of null dust. We refer the reader to \cite{pHtF93, SGWK, SW} for more explicit examples.

Beyond explicit examples, there are also examples in plane symmetry, where solutions cannot be explicitly written down, but nonetheless there is a good existence theory for low-regularity solutions. Lott considers in his works \cite{Lott1, Lott3, Lott2} limits for polarized $\mathbb T^2$-symmetric (but non-Gowdy) spacetime. Here, the author took a slightly different perspective and considered suitable rescaled future limits of expanding cosmological spacetimes. They relied in the works \cite{LeSm, RingstromT2} on the analysis of the global solutions to the Einstein vacuum equations in order to extract limits. In subsequent works, Le Floch--Lefloch \cite{LeFLeF2019, LeFLeF2020} studied limits under more general $\mathbb T^2$ symmetry.

We remark that the examples in \cite{GW2, LeFLeF2019, LeFLeF2020, Lott1, Lott3, Lott2} all have at least a two-dimensional symmetry. Thus in principle they all fit into the framework of angularly regular spacetimes considered in Section~\ref{sec:angularly.regular} below. However, due to the exact symmetries, some of the examples considered in these works are global-in-time.

\subsubsection{Green--Wald theorem and inhomogeneities in cosmology}

In \cite{GW1}, Green--Wald proved a very general theorem concerning high-frequency limits satisfying \eqref{eq:intro.HF.def}. 

\begin{theorem}[Green--Wald \cite{GW1}]\label{thm:GW}
Suppose there is a sequence of Lorentzian metrics\footnote{To keep in line with the exposition in the rest of the article, we slightly rephrased the result in \cite{GW1} so that we consider only a sequence of metrics instead of a one-parameter family of metrics. We remark that \cite{GW1} also assumed that suitable weak limits exist along the full one-parameter family of metrics, but the proof in fact applies when weak convergence holds along a subsequence.} $\{g_i \}_{i=1}^\infty$ which satisfies the Einstein equations with matter (and possibly with a cosmological constant), i.e.,
$$\mathrm{Ric}(g_i) - \f 12 g_i R(g_i) + \Lambda g_{i} = 8\pi T_i,$$
where the stress-energy-momentum tensor $\{T_i\}_{i=1}^\infty$ are all trace free (i.e., $g_i^{\alp\bt} (T_{i})_{\alp\bt} = 0$) and satisfy the weak energy condition (i.e., $T_i(X,X) \geq 0$ for all vectors $X$ which is timelike with respect to $g_i$). Assume that $g_i$ converges to some smooth Lorentzian metric $g_0$ in the sense of condition \eqref{eq:intro.HF.def} with $K = 1$.

Then $g_0$ satisfies the Einstein equation 
$$\mathrm{Ric}(g_0) - \f 12 g_i R(g_0) + \Lambda g_{0} = 8\pi T_0,$$
where $T_0$ is trace free and satisfies the weak energy condition.
\end{theorem}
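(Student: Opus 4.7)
The plan is to pass the Einstein equations to the limit as distributions, exploiting that the quadratic-in-$\partial g_i$ structure of the Ricci tensor is compatible with the weak convergence implied by the hypothesis \eqref{eq:intro.HF.def}. The main difficulty is that, although $\partial g_i$ is uniformly bounded and $g_i \to g_0$ in $C^0$, the derivatives $\partial g_i$ need not converge strongly, so the nonlinear products $(\partial g_i)^2$ need not equal $(\partial g_0)^2$ in the weak limit; precisely this defect will produce the effective matter $T_0$.

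Concretely, I would write $\mathrm{Ric}(g_i)_{\mu\nu}$ in local coordinates schematically as $A(g_i)\,\partial^2 g_i + B(g_i)(\partial g_i)^2$, with $A,B$ smooth algebraic expressions in $g_i$ and $g_i^{-1}$. Pairing against a smooth compactly supported symmetric test tensor $\varphi^{\mu\nu}$ and integrating by parts once then gives
\[
\int \mathrm{Ric}(g_i)_{\mu\nu}\,\varphi^{\mu\nu} \;=\; \int [B(g_i)-A'(g_i)](\partial g_i)^2\,\varphi \;-\; \int A(g_i)\,\partial g_i \cdot \partial\varphi,
\]
and both integrands are now uniformly bounded in $L^\infty_{\mathrm{loc}}$. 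Since $A(g_i) \to A(g_0)$ uniformly and $\partial g_i \rightharpoonup \partial g_0$ in every $L^p_{\mathrm{loc}}$ (by distributional convergence combined with the $L^\infty$ bound), the second integral would converge to its $g_0$-analog. For the first, Banach--Alaoglu applied to $\partial g_i \otimes \partial g_i$ extracts a subsequence whose weak-$*$ limit in $L^\infty_{\mathrm{loc}}$ equals $\partial g_0 \otimes \partial g_0 + \mu$ for some defect tensor $\mu$; combining with the uniform convergence of $[B(g_i)-A'(g_i)]$ would give $\mathrm{Ric}(g_i) \rightharpoonup \mathrm{Ric}(g_0) + \tau$ as distributions, where $\tau := [B(g_0)-A'(g_0)]\mu$. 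The same manipulation applied to the scalar curvature and to $g_i R(g_i)$ gives $g_i R(g_i) \rightharpoonup g_0 R(g_0) + g_0\,\mathrm{tr}_{g_0}(\tau)$, so the definition
\[
8\pi T_0 := \mathrm{Ric}(g_0) - \tfrac12 g_0 R(g_0) + \Lambda g_0 + \tau - \tfrac12 g_0\,\mathrm{tr}_{g_0}(\tau)
\]
is forced, satisfies $T_i \rightharpoonup T_0$, and obeys the limit Einstein equation by construction.

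It remains to transfer trace-freeness and the weak energy condition to $T_0$. For the former, I would exploit that the pointwise identity $g_i^{\mu\nu}(T_i)_{\mu\nu}=0$ combined with the Einstein equation forces $R(g_i) = 4\Lambda$ identically (in dimension four); this constant sequence weakly limits to $4\Lambda$, which together with $R(g_i) \rightharpoonup R(g_0) + \mathrm{tr}_{g_0}(\tau)$ yields $R(g_0)+\mathrm{tr}_{g_0}(\tau)=4\Lambda$, whereupon a direct algebraic check against the defining formula gives $\mathrm{tr}_{g_0}(T_0)=0$. For the latter, I would fix any smooth vector field $X$ that is $g_0$-timelike on the support of a given nonnegative $\varphi \in C^\infty_c$; by uniform convergence $X$ is also $g_i$-timelike there for $i$ sufficiently large, so the hypothesis gives $\int T_i(X,X)\varphi\geq 0$, which passes to the weak limit. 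The step I expect to require the most care is the identification of the defect tensor $\tau$: one must verify that the various integration-by-parts manipulations combine into a single well-defined symmetric tensor on $g_0$, and the trace-free hypothesis on $T_i$ is essential here, since it is precisely what controls the scalar part of the weak limit of $(\partial g_i)^2$ that enters the Einstein equation.
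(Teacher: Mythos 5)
The paper merely quotes Theorem~\ref{thm:GW} from Green--Wald \cite{GW1} without reproducing the proof, so there is no ``paper's own proof'' to compare against; the proposal must be assessed on its own.

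There is a conceptual error at the heart of the proposal: you identify the $T_0$ appearing in the theorem's conclusion with the weak limit of $T_i$, and these are different tensors. The $T_0$ in the conclusion is the tensor $\frac{1}{8\pi}\big(\mathrm{Ric}(g_0) - \frac12 g_0 R(g_0) + \Lambda g_0\big)$, determined entirely by $g_0$; it is \emph{not} the weak limit of $T_i$. Writing $G$ for the Einstein tensor and $\tau_G$ for the defect in $G(g_i)\rightharpoonup G(g_0)+\tau_G$, the Einstein equations for $g_i$ give $8\pi\,\mbox{w-lim}\,T_i = G(g_0)+\Lambda g_0 + \tau_G = 8\pi T_0 + \tau_G$, so the two agree only when $\tau_G = 0$. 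The paper's own introductory plane-wave example makes this explicit: there $T_i \equiv 0$ so $\mbox{w-lim}\,T_i = 0$, yet $T_0 = \frac{1}{16\pi}\alpha(u)^2\,\ud u\otimes\ud u \neq 0$. Your formula $8\pi T_0 := G(g_0)+\Lambda g_0 + \tau - \frac12 g_0\mathrm{tr}_{g_0}\tau$ is also internally inconsistent with the claim that it ``obeys the limit Einstein equation by construction'': the Einstein equation for $g_0$ reads $G(g_0)+\Lambda g_0 = 8\pi T_0$, which under your definition would force the defect $\tau - \frac12 g_0\mathrm{tr}_{g_0}\tau$ to vanish identically, contradicting the whole premise that the weak limit does not commute with the nonlinearity.

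Because of this misidentification, your trace-free and WEC arguments, which soundly pass properties of $T_i$ to $\mbox{w-lim}\,T_i$ (using that $X$ is $g_i$-timelike for $i$ large, and the $R(g_i)=4\Lambda$ trick), only control the wrong object. The actual content of Green--Wald's theorem is that the \emph{effective stress} $t^{(0)} := -\frac{1}{8\pi}\tau_G$, equivalently the discrepancy $T_0 - \mbox{w-lim}\,T_i$, is itself trace-free and satisfies the weak energy condition. This is not a soft weak-limit statement: Green--Wald introduce the weak-$*$ limit $\mu_{abcdef}$ of $\nabla_a(h_i)_{cd}\nabla_b(h_i)_{ef}$ (with $h_i = g_i - g_0$), use the linearized field equations and the trace-free hypothesis on $T_i$ to derive algebraic constraints on $\mu$, and then verify that the particular contraction of $\mu$ yielding $8\pi t^{(0)}_{ab}$ has vanishing trace and non-negative pairing with timelike vectors. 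Your outline contains neither the decomposition $T_0 = \mbox{w-lim}\,T_i + t^{(0)}$ nor any positivity argument for $t^{(0)}$; that hardest step is missing entirely.
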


Notice that Theorem~\ref{thm:GW} in particular applies when $g_i$ are all vacuum with zero cosmological constant. If Conjecture~\ref{conj:forward} holds, then $T_0$ would correspond to the stress-energy-momentum tensor of some massless Vlasov field, which would in particular be trace free and would satisfy the weak energy condition. In other words, in the case when $g_i$ are vacuum Theorem~\ref{thm:GW} is weaker than Conjecture~\ref{conj:forward}. However, Theorem~\ref{thm:GW} is remarkable in its generality. We in particular point out the following three features, which are different from the results we will discuss later.
\begin{enumerate}
\item The theorem applies not only to a sequence of vacuum spacetimes, but instead matter fields are allowed.
\item Condition \eqref{eq:intro.HF.def} is only required for $K=1$.
\item No gauge conditions are assumed.
\end{enumerate}

The work \cite{GW1} was particularly motivated by considerations about inhomogeneities in cosmology: their result shows that within the framework of Burnett, it is impossible for inhomogeneities to mimic the effects of a positive cosmological constant. See also \cite{lotofauthors, GW2, GW.FLRW, GW.simple, Peebles}.

\subsubsection{Other forms of weak convergence in general relativity}

There are other forms of weak convergence for which one can study whether the Einstein vacuum equations are preserved. We refer the reader to \cite{fCaM2022, aMsS2023} concerning this question for a notion of metric convergence inspired by Gromov--Hausdorff type convergence. See also the work of Lott \cite{Lott1} concerning the limits for merely pointed $C^0$ convergence.   

\subsubsection{Compensated compactness in partial differential equations} The phenomenon seen in Burnett's conjectures is closely related to the theory of compensated compactness, pioneered by Tartar \cite{TartarCC} and Murat \cite{Murat.compensation}. There are many generalizations of the theory of Tartar and Murat, see \cite{mBdjC2016, mBjCDfM2009, aGbR2022, hKtY2013, Tartar1986, jwRrcRbT1987, rcRbT1988} for a sample of results.

In a different direction in the more general context of partial differential equations, compensated compactness is also useful for constructing weak solutions to nonlinear equations. We refer the reader to \cite{jmB1976, rjDP1983, Tartar.conservation.law} for examples. See also the textbook of Dafermos \cite{dafermosconstantin}.


\subsubsection{Low-regularity solutions to the Einstein equations} As mentioned in Section~\ref{sec:intro.backward} above, the construction of examples for Conjecture~\ref{conj:backward} is necessarily a low-regularity problem for the Einstein equations. In this context, the most celebrated result is the bounded $L^2$ curvature theorem of Klainerman--Rodnianski--Szeftel.
\begin{theorem}[Klainerman--Rodnianski--Szeftel \cite{L21}]\label{thm:boundedL2}
The time of existence (with respect to a maximal foliation) of a classical solution to the Einstein vacuum equations depends only on the $L^2$
norm of the curvature and a lower bound of the volume radius of the corresponding initial data set.
\end{theorem}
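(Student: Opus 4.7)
The plan is to fix a geometric gauge and then reduce the existence problem to a sequence of \emph{a priori} estimates for Einstein vacuum in that gauge. I would work with a maximal foliation by spacelike hypersurfaces $\Sigma_t$, using spatial harmonic coordinates on each slice, so that the Einstein equations split into a system of constraints and elliptic equations on $\Sigma_t$ (for the lapse and second fundamental form) coupled to a reduced evolution system for the spatial metric $g$ that takes the schematic form $\Box_g g = \partial g \cdot \partial g$. In this gauge, classical energy estimates propagate $H^2$ regularity of $g$ provided one controls $\|\partial g\|_{L^1_t L^\infty_x}$; thus the bootstrap is organized around proving the corresponding \emph{Strichartz estimate} for the wave operator $\Box_g$ on a background whose curvature is only in $L^\infty_t L^2_x$.

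The core difficulty, and the main obstacle, is that the standard proof of Strichartz estimates for variable-coefficient wave operators (via Littlewood--Paley localization and wave packet parametrices) requires the background metric to be in $C^{1,\alpha}$ or $H^{2+\varepsilon}$. At the $L^2$-curvature threshold one loses derivatives, and any naive parametrix construction breaks down. The strategy I would pursue, following the geometric approach, is to construct a \emph{plane-wave parametrix} $S_\omega \varphi(t,x) = \int e^{i\lambda u_\omega(t,x)} a(t,x,\omega,\lambda) \hat\varphi(\lambda\omega)\, \lambda^{d-1}\, \ud\lambda\, \ud\omega$, where for each angle $\omega \in \mathbb{S}^2$ the phase $u_\omega$ solves the eikonal equation $g^{\mu\nu}\partial_\mu u_\omega \partial_\nu u_\omega = 0$; the level sets of $u_\omega$ are incoming null hypersurfaces foliating a neighborhood of the initial slice. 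Dispersive/Strichartz bounds for the parametrix then reduce to quantitative geometric control of these null foliations, specifically $L^\infty$ bounds on the Hessian of $u_\omega$ and on the null second fundamental form of $\{u_\omega = \mathrm{const}\}$.

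To obtain such geometric control from only $L^2$ curvature, I would work in a double-null-type framework on each null cone and estimate the Ricci coefficients $\trch$, $\chih$, $\eta$, $\omega$ using the structure equations: transport equations along null geodesics for $\trch$ and $\chih$, elliptic Hodge systems on the $2$-spheres $S_{u,\ub}$ for $\eta$, and the Bianchi equations for the null-decomposed curvature components. The key algebraic input is that the ``worst'' curvature terms appearing in the Raychaudhuri-type equations can be traded, via Bianchi pairs and Gauss--Codazzi, for terms that are either trilinear and hence harmless, or expressible as derivatives of controllable quantities, so that only $L^2$ norms of curvature components actually enter. Here one must also carefully exploit the \emph{null} structure --- the analogue of the compensated compactness/null-form phenomenon described in \eqref{eq:null.CC.Qab}--\eqref{eq:null.CC.Q0} --- to avoid losses when integrating along characteristics with low regularity data.

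Once the Ricci-coefficient estimates are closed, one obtains the needed regularity of $u_\omega$, proves the $TT^*$ estimate $\|S_\omega S_\omega^*\|$ dispersive bound, and hence a Strichartz estimate $\|\partial\psi\|_{L^2_t L^\infty_x} \lesssim \|\partial\psi|_{t=0}\|_{H^{1+\varepsilon}} + \|\Box_g\psi\|_{\cdots}$ for solutions of $\Box_g \psi = F$ on the rough background. Inserting this into the energy estimate for $\partial^2 g$ closes the bootstrap and yields a lower bound on the time of existence depending only on $\|R[g]\|_{L^2(\Sigma_0)}$ and a volume radius lower bound (the latter being needed to ensure coercivity of the spatial harmonic coordinate gauge and to prevent degeneration of the $\Sigma_t$). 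The volume radius lower bound is also propagated by the evolution. I expect essentially all the delicate work to be concentrated in the parametrix/null geometry step; the reduction to Strichartz and the final continuity argument are, by comparison, mechanical.
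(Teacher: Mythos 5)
The paper does not contain a proof of Theorem~\ref{thm:boundedL2}. It is quoted as a known result, cited to Klainerman--Rodnianski--Szeftel \cite{L21}, and used in this survey only as context: to explain why any construction toward Conjecture~\ref{conj:backward} must go below the $H^2$ threshold. There is therefore no argument in the paper for you to be matched against; what you have written is a blind reconstruction of the strategy of \cite{L21} and its companion papers.

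As such a reconstruction, your outline captures some genuine features of the actual proof (maximal foliation; a plane-wave parametrix built on eikonal phases $u_\omega$; control of the null second fundamental form and Ricci coefficients along the level sets $\{u_\omega=\mathrm{const}\}$; heavy use of null structure), but it misdescribes others in ways that matter. First, the gauge: \cite{L21} does not work with spatial harmonic coordinates and a reduced quasilinear system $\Box_g g=\partial g\cdot\partial g$ for the spatial metric. Instead the Einstein equations are recast in a Cartan (orthonormal frame) formalism as a Yang--Mills-type system for the connection $1$-form, in a Coulomb-type gauge, and the hyperbolic structure is carried by the connection and the curvature rather than by the metric components. Second, the harmonic-analysis engine is not a standalone Strichartz estimate $\|\partial\psi\|_{L^2_t L^\infty_x}\lesssim\cdots$; at the $L^2$-curvature level that kind of estimate is simply not available. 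The proof runs through \emph{bilinear} and trilinear estimates and a sharp $L^4$ spacetime estimate for the null-decomposed curvature, with the parametrix constructed precisely so as to prove those bilinear estimates. The $\partial g\cdot\partial g$ reduced system plus Strichartz picture you describe is the mechanism behind the earlier $H^{2+\varepsilon}$ results of Klainerman--Rodnianski and Smith--Tataru, which is a strictly easier regularity regime; you appear to have conflated the two. Finally, the actual proof is distributed across several long companion papers (eikonal estimates, parametrix error estimates, sharp trace theorems on null cones, the Yang--Mills reduction), so a paragraph-level outline cannot be read as a proof of the statement, and the survey does not attempt one either.
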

There was a long line of works that preceded Theorem~\ref{thm:boundedL2} regarding the low-regularity solutions to  the Einstein vacuum equations; we refer the reader to \cite{bHjyC99b,bHjyC99a,sKiR2003,sKiR2005b,sKiR2005d,SmTa}.

There is a different line of works on low-regularity solutions which are related to the angularly regular spacetimes in Section~\ref{sec:angularly.regular}. These low-regularity results were first achieved in spherical symmetry \cite{ChrSph1} and $\mathbb T^2$ symmetry \cite{pgLjS2010, LeSt}. The work of Christodoulou \cite{ChrSph1} in spherical symmetry remarkably also included the center where the symmetry degenerates. Outside exact symmetry, low-regularity results in the context of angularly regular spacetimes was motivated by impulsive gravitational waves \cite{LR,LR2} and weak null singularities \cite{LukWeakNull}. See Section~\ref{sec:angularly.regular} for a discussion of the low-regularity results. We refer the reader also to \cite{Ang20, LVdM1, LVdM2} for more recent results regarding impulsive gravitational waves.

\subsection{Outline of the paper}
The remainder of the paper is structured as follows.

We will discuss three settings for which mathematical results related to Conjectures~\ref{conj:forward} and \ref{conj:backward} have been obtained. Each of these settings involves fixing a specific gauge.
\begin{enumerate}
\item Results on Conjectures~\ref{conj:forward} and \ref{conj:backward} in an angularly regular spacetime in a double null coordinate gauge \cite{LR.HF}(\textbf{Section~\ref{sec:angularly.regular}}).
\item Results in $\mathbb U(1)$-symmetric spacetimes in an elliptic gauge. (The symmetry and gauge conditions are discussed in \textbf{Section~\ref{sec:U1.gauge}}.)
\begin{enumerate}
\item Proof of Conjecture~\ref{conj:forward} in this setting in \cite{HL.Burnett} (\textbf{Section~\ref{sec:U1.forward}}).
\item Constructions $\m U(1)$-symmetric high-frequency vacuum spacetimes and their limits in a suitable small-data regime:
\begin{enumerate}
\item The case where the limit has a finite number of families of dust \cite{HL.HF} (\textbf{Section~\ref{sec:U1.backward}}).
\item The case where the limit has a continuous Vlasov field \cite{HL.Vlasov} (\textbf{Section~\ref{vlasov}}).
\end{enumerate}
\end{enumerate}
\item Results concerning Conjectures~\ref{conj:forward} and \ref{conj:backward} in the generalized wave coordinates gauge \cite{HL.wave, Touati2} (\textbf{Section~\ref{sec:wave.coordinates}}).
\end{enumerate}

%
%
%

Finally, in \textbf{Section~\ref{sec:future}}, we will discuss some open problems and possible future directions.

\subsection*{Acknowledgements} 
We thank John Anderson, Otis Chodosh, Georgios Moschidis, Federico Pasqualotto, Igor Rodnianski and Arthur Touati for helpful discussions. This article was written for the focus issue ``The mathematics of gravitation in the non-vacuum regime,'' which arises as a follow-up of the program Mathematical perspectives of Gravitation beyond the vacuum regime at the Erwin Schr\"odinger Institute in 2022. We would like to thank the institute and the organizers for the stimulating program.

J.~Luk is partially supported by a Terman fellowship and the NSF grant DMS-2304445. 

\section{High-frequency angularly regular spacetimes}\label{sec:angularly.regular}

We begin with the setting of \emph{angularly regular spacetimes} in a double null coordinate gauge. This is a class of spacetimes for which no exact symmetries are imposed, but the spacetime metrics are more regular in some directions. There are rather complete results concerning both Conjecture~\ref{conj:forward} and Conjecture~\ref{conj:backward} in this setting. As we will see, the heart of the matter is a low-regularity local well-posedness result for this class of spacetimes, even for initial data that are in general no better than $H^1$ in terms of $L^2$-based isotropic Sobolev spaces. Because of such a general low-regularity local well-posedness result, Conjecture~\ref{conj:forward} and Conjecture~\ref{conj:backward} follow in a rather soft way, using compensated compactness type arguments.

While we emphasize that the class of angularly regular spacetimes requires no exact symmetries, it does include, a fortiori, $\mathbb T^2$ symmetric spacetimes. We refer the reader to \cite{LeFLeF2019, LeFLeF2020, LeRe, LeSm, LeSt} for a treatment directly under such a symmetry assumption.

Before we define the class of spacetimes in question, we first introduce the double null coordinate gauge.
\begin{definition}[Double null coordinates]\label{double.null.def}
A Lorentzian metric $g$ on $\mathcal M = [0,u_*] \times [0,\ub_*] \times S$ (where $S$ is a compact $2$-surface\footnote{Most of \cite{LR.HF} is stated for $S= \mathbb S^2$, but the topology of $S$ is irrelevant for the argument.} and $u_*, \ub_*>0$) is said to be in \textbf{double null coordinates} if there exists an atlas $\{U_i\}_{i=1}^N \subset S$ such that given coordinates $(\th^1, \th^2)$ in each coordinate chart $U_i$, the metric takes the form\footnote{Here, we use the convention that capital Latin letters are summed from $1$ to $2$.} 
\begin{equation}\label{double.null.coordinates}
g=-4\Omega^2 \,\ud u \ud\ub+\gamma_{AB}(\ud\th^A-b^A\ud u)(\ud\th^B-b^B\ud u),
\end{equation}
where $\Omega$ is a strictly positive function, $b = b^A\rd_{\th^A}$ is a vector field tangent to $S$ and for every $(u,\ub) \in [0,u_*]\times [0,\ub_*]$, $\gamma = \gamma_{AB}\,\ud\th^A\,\ud \th^B$ is a Riemannian metric.
\end{definition}

The gauge in Definition~\ref{double.null.def} has been useful in a variety of problems in general relativity; see for instance \cite{Chr, DHR, DHRT, DL, KN, LukWeakNull}.

As mentioned above, we are interested in the subclass of metrics taking the form \eqref{double.null.coordinates}, where the metric is still potentially of very low regularity, but is \textbf{angularly regular}, i.e., there is additional regularity in the $(\th^1,\th^2)$-directions. 

To describe this subclass of spacetimes, we consider characteristic initial data on the intersecting null hypersurfaces $H_0 \cup \Hb_0$, where $H_0 = \{0\}\times [0,\underline{I}] \times S$ and $\Hb_0 = [0,I]\times \{0\}\times S$. Denote also $S_{0,0} = H_0\cap \Hb_0 = \{0\} \times \{0\} \times S$. Consider characteristic initial data $(\gamma, \Omega,b) \restriction_{H_0\cup \Hb_0}$ where $\gamma, \Omega, b$ are the metric coefficients in \eqref{double.null.coordinates}. Assume that $b \restriction_{\Hb_0} = 0$ and that the data satisfy constraint equations and obey the following estimates (where $\rd_\vartheta$ denotes $\rd_{\th^1}$ or $\rd_{\th^2}$ derivative):
\begin{equation}\label{eq:intro.metric.bds}
\begin{split}
&\: \sum_{\mathfrak g \in \{\gamma, \log\det\gamma, \log\Om, b\}}  \sum_{i\leq 5} \|\rd_{\vartheta}^i \mathfrak g \restriction_{S_{0,0}}\|_{L^2(S)} + \sum_{i\leq 5} \|\rd_{\vartheta}^i  \rd_{\ub} b \restriction_{S_{0,0}}\|_{L^2(S)}    \\
&\: + \sum_{\mathfrak g \in \{\gamma, \log\det\gamma, \log\Om, b\}}\sum_{i\leq 5}(\|\rd_{\vartheta}^i \rd_{\ub} \mathfrak g\restriction_{H_0}\|_{L^2_{\ub} L^2(S)} + \|\rd_{\vartheta}^i \rd_u  \mathfrak g \restriction_{\Hb_0}\|_{L^2_{u} L^2(S)}) \leq C.
\end{split}
\end{equation}
We view \eqref{eq:intro.metric.bds} as a condition on \emph{angularly regular} initial data. The main local existence result for this class of data is given by the following theorem:


\begin{theorem}[L.--Rodnianski \cite{LR2}]\label{thm:LR2}
Given characteristic initial data to the Einstein vacuum equations satisfying the bounds \eqref{eq:intro.metric.bds}, there exists $\ep>0$ sufficiently small \textbf{depending only on $C$} such that for any $u_* \in (0,I]$ and $\ub_* \in (0,\ep]$, there exists a unique solution to the Einstein vacuum equations in double null coordinates in $[0,u_*]\times [0,\ub_*]\times S$ which achieves the given data. The solution is $C^\alpha\cap H^{1}$ (with $\alpha \in [0,\f 12]$) with additional regularity in $\rd_{\vartheta}$ directions, with estimates \textbf{depending only on $C$} in \eqref{eq:intro.metric.bds}. 
\end{theorem}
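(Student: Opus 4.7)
The plan is to prove local well-posedness via a bootstrap argument adapted to the anisotropic regularity of the data: one propagates up to five $\rd_\vartheta$-derivatives in $L^2(S)$-type norms, together with only \emph{one} derivative in each null direction $L, \Lb$. No pointwise bound on, e.g., $\chih$ or $\chibh$ is available, so every step must be compatible with these quantities lying only in $L^2_{\ub}L^2(S)$ (respectively $L^2_uL^2(S)$). The small parameter is $\ub_* \leq \ep$; all logarithmic and polynomial growth produced along the bootstrap will be absorbed by choosing $\ep$ small depending only on $C$ in \eqref{eq:intro.metric.bds}.

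First, I would write the null structure equations for the Ricci coefficients $\Gamma \in \{\trch, \chih, \trchb, \chibh, \eta, \etab, \om, \omb\}$ together with the null Bianchi equations for the curvature $\Psi$, and classify each transport equation as an $L$-equation (to be integrated in $\ub$, gaining a factor of $\ep$) or an $\Lb$-equation (to be integrated in $u$, closed by the data on $\Hb_0$). Each of these equations would then be commuted with $\rd_\vartheta^i$ for $i\leq 5$; since five is well above the Sobolev threshold on the $2$-surface $S$, all commutator terms arising from Christoffel symbols and from the nonlinear structure are controlled by products of bootstrapped norms. The metric coefficients $\gamma, \Omega, b$ are subsequently recovered from their definitions in terms of $\Gamma$ by integration in the appropriate null direction.

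The heart of the matter, and the main obstacle, is the curvature estimate. The extremal components $\alpha \sim \nab_{\Lb}\chih$ and $\alphab \sim \nab_L\chibh$ cannot be controlled at the top angular derivative order $\rd_\vartheta^5$ because this would require one more null derivative on $\chih$ or $\chibh$ than \eqref{eq:intro.metric.bds} provides. To circumvent this I would follow the renormalization strategy of \cite{LR2}: replace $\rho$ and $\sigma$ by the Gauss-equation check quantities $\rhoc = \rho - \f 12 \chih \cdot \chibh$ and $\sigmac = \sigma - \f 12 \chih \wedge \chibh$. The pairs $(\beta,(\rhoc,\sigmac))$ and $(\betab,(\rhoc,\sigmac))$ then satisfy a closed Bianchi-type subsystem in which $\alpha$ and $\alphab$ never appear at top order, and standard integration-by-parts energy identities, commuted with $\rd_\vartheta^i$ for $i\leq 4$ and integrated over the null rectangle, yield a coupled flux bound that closes by Gr\"onwall once $\ub_*$ is small. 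The price of the renormalization is quadratic-in-$\Gamma$ error terms coming from $\chih \cdot \chibh$ and similar expressions; these are controlled using trace-type inequalities on the null hypersurfaces and the angular Sobolev embedding.

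Once both the Ricci coefficients and the reduced curvature are closed in the bootstrap norms, one recovers the remaining components by elliptic Hodge systems on each sphere $S_{u,\ub}$, which are well-posed thanks to the control of the intrinsic geometry of $S_{u,\ub}$ (itself part of the bootstrap). Existence is then obtained by applying the \emph{a priori} estimate to a sequence of smooth regularizations of the characteristic data and passing to the limit; uniqueness follows from a difference estimate in a weaker norm, again absorbed by smallness of $\ub_*$. The $C^\alpha \cap H^1$ regularity statement with additional $\rd_\vartheta$-smoothness is read off by interpolating the top-order angular $L^2(S)$ control with the $L^2_{\ub}$ (respectively $L^2_u$) bound on one null derivative via Sobolev embedding on $S$. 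The crucial point is that every constant in this scheme depends only on $C$ in \eqref{eq:intro.metric.bds}, so that $\ep$ can be chosen uniformly and the conclusion follows.
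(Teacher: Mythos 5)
Your proposal follows essentially the same route as the cited proof in \cite{LR2}: an anisotropic bootstrap propagating five $\rd_\vartheta$-derivatives and one null derivative, renormalization of the Bianchi system via $\rhoc, \sigmac$ so that $\alpha$ and $\alphab$ drop out at top order, coupled energy estimates for the reduced Bianchi subsystem on null rectangles closed by smallness of $\ub_*$, recovery of the remaining quantities via elliptic Hodge systems on the spheres, and a regularize-and-pass-to-the-limit argument for existence together with a weaker-norm difference estimate for uniqueness. One small slip worth fixing: it is $\alpha \sim \nab_L\chih$ (from the $\nab_4\chih$ null structure equation) and $\alphab \sim \nab_{\Lb}\chibh$ (from the $\nab_3\chibh$ equation), not the reverse as written; the conclusion that each requires one more null derivative than \eqref{eq:intro.metric.bds} provides is nevertheless correct and the rest of your argument is unaffected.
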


We will say that the solutions constructed in Theorem~\ref{thm:LR2} are \textbf{angularly regular}, since they have extra regularity in $\vartheta$.

\subsection{Compensated compactness and classification of limiting spacetimes}

Using the low-regularity local existence result in Theorem~\ref{thm:LR2}, and suitable compactness arguments, one can characterize the limit of solutions within this class. The following theorem can be viewed as a resolution of Conjecture~\ref{conj:forward} in this setting.

\begin{theorem}[L.--Rodnianski, Theorems~1.10, 4.1 in \cite{LR.HF}]\label{thm:LR3.limit}
Take a sequence of characteristic initial data to the Einstein vacuum equations which obey the bounds \eqref{eq:intro.metric.bds} uniformly. Then the following holds:
\begin{enumerate}
\item There exists a sequence of solutions $g_i$ to the Einstein vacuum equations taking the form \eqref{double.null.def} in a uniform domain of existence $[0,u_*]\times [0,\ub_*]\times S$. 
\item After passing to a subsequence (which is not relabelled), there exists a metric
$g_0$ also taking the form \eqref{double.null.coordinates} so that $g_i \to g_0$ in $C^0$ and weakly in $H^{1}$ in $[0,u_*]\times [0,\ub_*]\times S$.
\item Moreover, $g_0$ satisfies (weakly) the Einstein--null dust system with two families of null dusts. The null dusts are potentially measure-valued.
\end{enumerate}
\end{theorem}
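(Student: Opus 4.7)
My plan addresses the three parts of the statement in order. For (1), I apply Theorem~\ref{thm:LR2} to each data set in the sequence; since the length of the slab and the size of the solution in the angularly regular norm depend only on the constant $C$ in \eqref{eq:intro.metric.bds}, we obtain a common domain $[0,u_*]\times[0,\ub_*]\times S$ on which all $g_i$ exist and are uniformly bounded in $C^\alpha\cap H^1$ with extra $\rd_\vartheta$-regularity. For (2), the uniform $C^\alpha$ bound ($\alpha>0$) combined with Arzelà--Ascoli yields a subsequence converging uniformly to a metric $g_0$, Banach--Alaoglu upgrades this to $g_i\rightharpoonup g_0$ weakly in $H^1$, and Rellich applied to the additional angular regularity gives strong $L^2_{\mathrm{loc}}$ convergence of all $\rd_\vartheta$-derivatives. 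Hence $g_0$ is itself angularly regular, still takes the form \eqref{double.null.coordinates}, and inherits the condition $b_0\restriction_{\Hb_0}=0$ from the sequence.

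The substance is in (3). In the double null gauge, the Einstein vacuum equations reduce to transport equations for the Ricci coefficients along $\rd_u$ and $\rd_{\ub}$, together with elliptic and Hodge identities on each $2$-sphere $S_{u,\ub}$. A case-by-case inspection shows that every quadratic nonlinearity entering these equations contains either a strongly convergent factor (an angular derivative, or a Ricci coefficient controlled in $C^0$) or has classical null-form structure of the type \eqref{eq:null.CC.Qab}--\eqref{eq:null.CC.Q0}, \emph{except} for the two Raychaudhuri-type terms $|\chih_i|^2$ and $|\chibh_i|^2$. These two terms are nonnegative but only weakly $L^1$-compact, so they admit decompositions
\[
|\chih_i|^2 \rightharpoonup |\chih_0|^2 + \ud\nu_-, \qquad |\chibh_i|^2 \rightharpoonup |\chibh_0|^2 + \ud\nu_+,
\]
with nonnegative Radon measures $\ud\nu_\pm$. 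Passing to the limit in the full system then yields, in the distributional sense,
\[
\mathrm{Ric}_{\mu\nu}(g_0) \sim \ud\nu_+\,\rd_\mu u\,\rd_\nu u + \ud\nu_-\,\rd_\mu \ub\,\rd_\nu \ub,
\]
which is the stress-energy of two families of (possibly measure-valued) null dust aligned with the null generators. The transport equations for $\ud\nu_\pm$ then follow by differentiating these decompositions along the transverse null direction and substituting the null Bianchi identities, whose own null-form structure ensures that the extreme curvature components $\alpha$ and $\alphab$ enter through bilinear pairings that pass to the limit without producing new defects.

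The main obstacle is this last step: verifying that no hidden defect is generated by the cross-products $\chih_i\cdot\chibh_i$, by the angular quadratic nonlinearities, or by the Bianchi pairings. This is precisely where the angular regularity hypothesis pays off, ensuring that every otherwise dangerous nonlinearity contains at least one factor that converges strongly, so that defect measures can arise only along each of the two pure null directions, and satisfy honest transport equations along their respective null generators. Once this compensated-compactness bookkeeping is carried out, the identification of $g_0$ as a solution of the Einstein--null dust system with two families of (possibly measure-valued) dust is automatic.
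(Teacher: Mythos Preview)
Your outline for parts (1) and (2) is correct and essentially matches the paper. For part (3), the global strategy is also right: identify the quadratic terms in the Ricci coefficients, show all of them pass to the limit except $|\chih_i|^2_{\gamma_i}$ and $|\chibh_i|^2_{\gamma_i}$, and define the dust measures as the resulting defects. However, two points deserve correction.

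First, your explanation of why the cross term $\chih_i\cdot\chibh_i$ produces no defect is not right. In your final paragraph you assert that angular regularity ``ensur[es] that every otherwise dangerous nonlinearity contains at least one factor that converges strongly.'' For $\chih_i\cdot\chibh_i$ this is false: neither $\chih_i$ nor $\chibh_i$ converges strongly in $L^2$ (indeed, each separately can carry a defect, which is exactly how $\nu$ and $\underline{\nu}$ arise). The actual mechanism is a genuine compensated compactness phenomenon: $\chih_i$ can oscillate only in the $\ub$-direction (since $\nab_3\chih_i$ is controlled by a null structure equation), while $\chibh_i$ can oscillate only in the $u$-direction (since $\nab_4\chibh_i$ is controlled). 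Because the two factors oscillate in \emph{transverse} null directions, their product passes to the limit. This is closer in spirit to a div--curl argument than to either ``one factor is strong'' or the classical null-form identities \eqref{eq:null.CC.Qab}--\eqref{eq:null.CC.Q0} you invoke. The paper isolates this transverse-oscillation mechanism explicitly as a separate case from the ``one factor is $\trch,\trchb,\eta,\etab$ and converges strongly'' case.

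Second, your proposed derivation of the transport equations for $\ud\nu_\pm$ via the null Bianchi identities and the extreme curvature components $\alpha,\alphab$ is problematic in this regularity class. In the angularly regular setting the metric is only $H^1$ in the null directions, so $\alpha\sim\nab_4\chih$ and $\alphab\sim\nab_3\chibh$ are not controlled in any space in which one can meaningfully pass to weak limits of bilinear pairings. The paper avoids curvature entirely here and instead obtains the propagation of $\nu$, $\underline{\nu}$ by working directly with the Ricci-coefficient transport equations (in particular the Raychaudhuri equations \eqref{Ric44}--\eqref{Ric33} and the $\nab_3\chih$, $\nab_4\chibh$ equations), using the same transverse-oscillation compensated compactness as above. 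You should reformulate this step at the level of first derivatives of the metric.
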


Here, we say that $g_0$ satisfies (weakly) the Einstein--null dust system with two families of null dusts if there exists two non-negative Radon measures $\nu$ and $\underline{\nu}$ such that for any compactly supported smooth vector fields $X$, $Y$,
\begin{equation}\label{eq:double.null.weak.def}
\begin{split}
&\: \int_{[0,u_*]\times [0,\ub_*]\times S} \left((D_\mu X^\mu)(D_\nu Y^\nu)-D_\mu X^\nu D_\nu Y^\mu\right) \,\mathrm{dVol}_g \\
=&\: \int_{[0,u_*]\times [0,\ub_*]\times S} (Xu)(Yu) \,\ud\nu + \int_{[0,u_*]\times [0,\ub_*]\times S} (X\ub)(Y\ub) \,\ud\nub,
\end{split}
\end{equation}
where $u$, $\ub$ are the coordinate functions in \eqref{double.null.coordinates}. Here, the measures $\nu$ and $\nub$ correspond to the two families of null dust. As in Conjecture~\ref{conj:forward}, we allow the measures $\nu$ and/or $\underline{\nu}$ to vanish. In this case that they vanish, the limiting metric $g_0$ is vacuum.

Notice that within the class of angularly regular spacetimes, the assumptions in Theorem~\ref{thm:LR3.limit} only require uniform bounds on the sequence. This includes as a special case data satisfying the high-frequency bounds \eqref{eq:intro.HF.def}, but is much more general. In fact, Theorem~\ref{thm:LR3.limit} even allows for \emph{concentrations} (and not just oscillations) in the first derivatives of the metric.

In order to explain some ideas of the proof of Theorem~\ref{thm:LR3.limit}, we introduce in Definitions~\ref{def:null} and \ref{def:RC} some geometric constructions associated to the double null coordinates.

\begin{definition}[Null frame]\label{def:null}
The \textbf{normalized null pair} are defined as follows:
$$e_3=\Omega^{-1}(\rd_u +b^A \rd_{\th^A}),\quad e_4=\Omega^{-1}\rd_{\ub}.$$
Also, let $\{e_A\}_{A=1,2}$ denote an arbitrary local frame tangent to $S_{u,\ub}\doteq \{(u',\ub',\vartheta): u'=u,\ub'=\ub\}$.
\end{definition}

We now define the Ricci coefficients as the following $S$-tangent tensors, where $D$ is the Levi--Civita connection with respect to the spacetime metric $g$. Note that the Ricci coefficients correspond to first derivatives of the metric coefficients and are well-defined even if the metric coefficients only have $C^0\cap H^{1}_{loc}$ regularity.
\begin{definition}[Ricci coefficients]\label{def:RC}
\begin{enumerate}
\item Define the following Ricci coefficients such that for vector fields $\slashed X$, $\slashed Y$ tangential to $S$:
\begin{equation*}
\begin{split}
&\chi(\slashed X, \slashed Y)=g(D_{\slashed X} e_4, \slashed Y),\, \,\, \quad \chib(\slashed X,\slashed Y)=g(D_{\slashed X} e_3,\slashed Y),\\
&\eta(\slashed X)=-\frac 12 g(D_3 \slashed X,e_4),\quad \etab(\slashed X)=-\frac 12 g(D_4 \slashed X,e_3),\\
&\omega=-\frac 14 g(D_4 e_3,e_4),\quad\,\,\, \omegab=-\frac 14 g(D_3 e_4,e_3).
\end{split}
\end{equation*}
\item Define also 
$$\chih=\chi-\f 12 \trch \gamma, \quad \chibh=\chib-\f 12 \trchb \gamma,$$
where $\chih$ (resp.~$\chibh$) is the traceless part of $\chi$ (resp.~$\chib$) and $\trch$ (resp.~$\trchb$) is the trace of $\chi$ (resp.~$\chib$). Here, the trace taken with respect to the metric $\gamma$ on the spheres.
\end{enumerate}
\end{definition}

The key to Theorem~\ref{thm:LR3.limit} is that the estimates established in the proof of Theorem~\ref{thm:LR2} are sufficient to (i) justify that a suitable weak limit exists, and (ii) isolate the quadratic terms in the Ricci coefficients (see Definition~\ref{def:RC}) for which the product and the weak limit do not commute.
\begin{enumerate}
\item (Existence of a limit) Since the proof of Theorem~\ref{thm:LR2} gives uniform H\"older $C^\alp$ (for some $\alp \in (0,\f 12)$) and Sobolev $H^1$ estimates, standard compactness theorems then imply that the metrics has a uniform limit and that all the Ricci coefficients have at least have a weak $L^2$  a weak $L^2$ limit.
\item (Most terms do not contribute to the limiting Ricci curvature) In order to understand the limiting Ricci curvature, one needs to consider the quadratic-in-Ricci-coefficient terms (say $\Gamma_i^{(1)} \Gamma_i^{(2)}$) in the expression for the Ricci curvature and consider the defect
\begin{equation}\label{eq:angular.weak.limit.product.commute}
\mbox{w-lim} (\Gamma_i^{(1)} \Gamma_i^{(2)}) - (\mbox{w-lim\,} \Gamma_i^{(1)}) (\mbox{w-lim\,} \Gamma^{(2)}_i),
\end{equation}
where $\mbox{w-lim}$ denotes the weak limit\footnote{Here, and below, the limits only exist after passing to a subsequence. To simplify the notations, we will not relabel the subsequence.}.

It turns out that for most of these quadratic terms, either (a) one of the two factors is $\trch, \trchb, \eta,\etab$ and has a strong limit, or (b) the two factors oscillate in different directions, e.g., $\chih\cdot\chibh$ has $\chih$ oscillating in the $\ub$ direction while $\chibh$ oscillates in the $u$ direction.

In both cases (a) and (b), \eqref{eq:angular.weak.limit.product.commute} vanishes. In particular, (b) can be viewed as a form of \emph{compensated compactness}.
\item (Terms giving rise to null dust) As a result of the last point, the only quadratic terms in the Ricci coefficients that could contribute to non-trivial components of the limiting Ricci curvature are the $|\chih|_\gamma^2$ and $|\chibh|_\gamma^2$ terms in the Raychaudhuri equations:
\begin{align}
e_4 (\Omg\trch) +\frac 12 \Omg (\trch)^2=&\: -\Omg |\chih|_\gamma^2 - \Omg \mathrm{Ric}(e_4,e_4), \label{Ric44}\\
e_3 (\Omg\trchb) +\frac 12 \Omg(\trchb)^2=&\: - \Omg |\chibh|_\gamma^2 - \Omg \mathrm{Ric}(e_3,e_3). \label{Ric33}
\end{align}
The terms $\Omg |\chih|_\gamma^2$ and $\Omg |\chibh|_\gamma^2$ in \eqref{Ric44} and \eqref{Ric33} only obey uniform $L^1$ bounds (with additional angular regularity). We can therefore define the failure of convergence by the measures $\nu$ and $\underline{\nu}$ by
\begin{align*}
\nu \doteq &\: \mbox{weak-*-lim} \, (\Omg_{i}|\chih_{i}|^2_{\gamma_{i}})- (\lim \Omg_{i}) |\mbox{weak-lim} \, \chih_{i}|^2_{\gamma_0} \\ 
\underline{\nu} \doteq &\: \mbox{weak-*-lim} \, (\Omg_{i} |\chibh_{i}|^2_{\gamma_{i}} )- (\lim \Omg_{i}) |\mbox{weak-lim} \, \chibh_{i}|^2_{\gamma_0}.
\end{align*}
The measures $\nu$ and $\underline{\nu}$, which should be thought of as null dusts (see \eqref{eq:double.null.weak.def}), then correspond to $\Omg \mathrm{Ric}(e_4,e_4)$ and $\Omg \mathrm{Ric}(e_3,e_3)$ of the limiting spacetime, and these are the only non-vanishing Ricci curvature components.
\item (Propagation equation of the null dust) Finally, to complete the proof of Theorem~\ref{thm:LR3.limit}, we need to show that the two families of null dust also satisfy propagation equations. This holds because of compactness phenomena similar to those discussed in point (2) above.
\end{enumerate}

We remark that precisely because the metrics are only allowed to oscillate in the $u$ and $\ub$ directions (but not the $\th^1$, $\th^2$ directions) in the class of angularly regular spacetimes, the effective stress-energy-momentum tensor in the limit can only have two families of null dusts (corresponding to the oscillations in $u$ and $\ub$ respectively) instead of a more general massless Vlasov field.

\subsection{Construction of solutions to the Einstein--null dust system and null dust shells}

We now turn to Conjecture~\ref{conj:backward} in the setting of angularly regular spacetimes. 

\begin{theorem}[L.--Rodnianski, Theorem~4.7 in \cite{LR.HF}]\label{thm:LR3.backward}
Let $(\mathcal M =[0,u_*]\times [0,\ub_*] \times S ,\, g_0)$ be an angularly regular solution to the Einstein--null dust system with the null dusts given by angularly regular measures $\nu$ and $\underline{\nu}$. Then for any $p\in \mathcal M$, there exist $\mathcal M'\subseteq \mathcal M$ with $p\in \mathcal M'$ and a sequence of smooth angularly regular vacuum solutions $\{(\mathcal M',g_i)\}_{i=1}^\infty$ such that $g_i \to g_0$ in $C^0$ and weakly in $H^{1}$ in $\mathcal M'$.
\end{theorem}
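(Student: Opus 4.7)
The plan is, for each $i$, to construct angularly regular characteristic initial data for the \emph{vacuum} Einstein equations on a double null slab $[0,u_*]\times[0,\ub_*]\times S$ containing $p$, designed so that the high-frequency oscillations of the shears $\chih_i$ on $H_0$ and $\chibh_i$ on $\Hb_0$ precisely encode the prescribed null dust measures $\nu$ and $\nub$ of $g_0$. One then evolves these data by Theorem~\ref{thm:LR2}, extracts a weak limit by Theorem~\ref{thm:LR3.limit}, and identifies that limit with $g_0$ by matching initial data and invoking uniqueness for the Einstein--null dust system in the angularly regular class.

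Concretely, on $H_0$ one prescribes
$$\chih_i \;=\; \chih_0|_{H_0} \;+\; F_i,$$
where $\chih_0|_{H_0}$ is the shear of $g_0$ restricted to $H_0$, and $F_i$ is an angularly regular, traceless symmetric $S$-tangent $2$-tensor with $F_i\rightharpoonup 0$ in $L^2(H_0)$ but $\Omg_0|F_i|_{\gamma_0}^2\,\mathrm{dVol}_{H_0}\rightharpoonup \nu|_{H_0}$ weakly as measures. When $\nu|_{H_0}$ has a smooth density $\alpha(\ub,\vartheta)^2$, one takes $F_i=\sqrt 2\,\alpha\,E\,\cos(\ub/\lambda_i)$ for a fixed unit angular traceless tensor $E$ and $\lambda_i\to 0$; for a dust shell concentrated on a hypersurface $\{\ub=\ub_j\}$, one uses the two-scale profile $\lambda_i^{-1/2}\phi((\ub-\ub_j)/\lambda_i)\cos((\ub-\ub_j)/\lambda_i^2)$, which has uniformly bounded $L^2_\ub$-norm but whose pointwise square tends weakly to a Dirac mass; a general $\nu$ is then handled by superposition after a preliminary density approximation. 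The analogous construction is carried out for $\chibh_i$ on $\Hb_0$, using $u$-oscillations that encode $\nub$. The remaining metric coefficients on $H_0\cup\Hb_0$ are determined by the \emph{vacuum} constraint equations: the vacuum form of \eqref{Ric44} integrates $\chih_i$ to $\trch_i$ on $H_0$, and the extra contribution $\Omg_i|F_i|_{\gamma_i}^2$ plays exactly the role of the $\Omg_0\,\mathrm{Ric}(e_4,e_4)$ source term that appears for $g_0$ as an Einstein--null dust solution; analogously on $\Hb_0$. One matches $\gamma_i,\Omg_i,\eta_i,b_i$ strongly to those of $g_0$ on $S_{0,0}$ and verifies \eqref{eq:intro.metric.bds} uniformly in $i$, using that the two-scale structure of $F_i$ is smooth in $\vartheta$ and only rough in $\ub$.

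By Theorem~\ref{thm:LR2} applied with a constant $C$ uniform in $i$, the vacuum evolutions $g_i$ exist on a common slab $\mathcal M'\subseteq\mathcal M$ containing $p$, with uniform angularly regular estimates. Theorem~\ref{thm:LR3.limit} then furnishes, along a subsequence, a $C^0\cap H^1_{\mathrm{weak}}$ limit $g_\infty$ solving the Einstein--null dust system on $\mathcal M'$ with some pair of measures $(\nu^\infty,\nub^\infty)$. On the initial surfaces the construction arranges that the smooth parts of $g_i$ converge strongly to those of $g_0$ and that the weak defects $\mbox{w-lim}\,\Omg_i|\chih_i|_{\gamma_i}^2-\Omg_0|\chih_0|_{\gamma_0}^2$ and $\mbox{w-lim}\,\Omg_i|\chibh_i|_{\gamma_i}^2-\Omg_0|\chibh_0|_{\gamma_0}^2$ equal $\nu|_{H_0}$ and $\nub|_{\Hb_0}$ respectively. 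A uniqueness-type statement for angularly regular solutions of the Einstein--null dust system with given data then forces $g_\infty=g_0$ and $(\nu^\infty,\nub^\infty)=(\nu,\nub)$ throughout $\mathcal M'$, and the convergence upgrades from subsequential to sequential by the standard subsubsequence argument.

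The main obstacle is the data construction for genuinely measure-valued $\nu,\nub$ (null dust shells): one must engineer oscillating profiles $F_i$ whose pointwise squares reproduce prescribed nonnegative Radon measures in the weak-$\ast$ sense while keeping $F_i$ itself uniformly bounded in the anisotropic norms entering \eqref{eq:intro.metric.bds}, which is what forces the two-scale amplitude-times-oscillation ansatz above. A secondary difficulty is verifying that $\trch_i$ and $\trchb_i$ converge \emph{strongly} (not merely weakly) to their null-dust counterparts for $g_0$, so that the smooth metric coefficients $\gamma_i,\Omg_i$ do converge strongly and the limit can be identified with $g_0$ rather than a weakly equivalent modification; this reduces to showing that the oscillations in $F_i$ do not generate low-frequency contamination upon primitivization through the Raychaudhuri ODE, a standard but nontrivial oscillatory integration fact that must be checked at each angular derivative order appearing in \eqref{eq:intro.metric.bds}.
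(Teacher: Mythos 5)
Your proposal is correct and follows essentially the same approach as the paper: approximate the null-dust characteristic initial data by oscillating vacuum data (the paper's Lemma~\ref{lem:data.approx}), solve via Theorem~\ref{thm:LR2}, pass to the weak limit via Theorem~\ref{thm:LR3.limit}, and then identify the limit with $g_0$ via the uniqueness statement of Theorem~\ref{thm:uniqueness}. The explicit two-scale ansatz you give for the oscillating shears is exactly the sort of construction underlying Lemma~\ref{lem:data.approx}, which this survey states without proof.
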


Given Theorem~\ref{thm:LR3.limit}, in order to approximate solutions to the Einstein--null dust system by vacuum spacetimes, we only need to establish two more facts:
\begin{enumerate}
\item \emph{Initial data} to the Einstein--null dust system can be approximated by \emph{initial data} to the Einstein vacuum system (see Lemma~\ref{lem:data.approx}).
\item Uniqueness holds for the Einstein--null dust system (see Theorem~\ref{thm:uniqueness}). 
\end{enumerate} 

We will state a rough version of our data approximation lemma only on $[0,\ub_*]\times S$. (The case of $[0,u_*]\times S$ is similarly after changing $\nu \rightsquigarrow \underline{\nu}$, $\chih \rightsquigarrow \chibh$.)
\begin{lemma}[L.--Rodnianski, Proposition~9.4 in \cite{LR.HF}]\label{lem:data.approx}
Consider characteristic initial data $(g_0,\nu)$ on $[0,\ub_*]\times S$ to the Einstein--null dust system such that the following hold:
\begin{itemize}
\item The metric is $H^{1}$ is along the null direction.
\item The null dust is a non-negative Radon measure $\nu$. 
\item Both the metric and the null dust have suitable additional (Sobolev) regularity along the angular direction.
\end{itemize}
Then there exists a sequence of smooth characteristic initial data $\{g_i\}_{i=1}^\infty$ on $[0,\ub_*]\times S$ to the Einstein vacuum system such that the following hold:
\begin{itemize}
\item The sequence of metrics $g_i \to g_0$ in $C^0$ and weakly in $H^{1}$.
\item $\Omg_i|\chih_i|^2 \to \nu$ in the weak-* topology.
\item The metric satisfies the assumptions of Theorem~\ref{thm:LR2} with \textbf{uniform constants}.
\end{itemize}
\end{lemma}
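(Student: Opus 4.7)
The plan is to exploit the fact that on a single outgoing null hypersurface $H_0=\{0\}\times[0,\ub_*]\times S$ (and similarly on $\Hb_0$), characteristic data are essentially freely prescribed by the conformal class of $\gamma$, with the remaining metric coefficients $\Omega$, $\trch$, $\det\gamma$ determined by the initial values on $S_{0,0}$ and the Raychaudhuri--type constraint
\begin{equation*}
e_4(\Omega\,\trch)+\tfrac12 \Omega(\trch)^2 = -\Omega|\chih|_\gamma^2 - \Omega\,\mathrm{Ric}(e_4,e_4).
\end{equation*}
On the null-dust side, $\Omega\,\mathrm{Ric}(e_4,e_4)$ is exactly the density of $\nu$. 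Thus, the task reduces to constructing a family of conformal metrics $\wht\gamma_i$ (unit-determinant representatives of $[\gamma_i]$) which are $C^0$-close to the target conformal metric, $H^1$-weakly convergent to it, angularly regular with uniform bounds, and whose associated shears $\chih_i$ satisfy $\Omega_i|\chih_i|_{\gamma_i}^2 \rightharpoonup \nu$ in the weak-$*$ topology of measures. Once this is achieved, the Raychaudhuri constraint automatically produces the correct limiting $\trch$ and hence the correct $\det\gamma$.

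First I would dispose of the regularity of $\nu$ by a mollification/diagonal argument: approximate $\nu$ by a sequence $\nu_{(n)}$ of measures with smooth, compactly supported densities $\rho_{(n)}(\ub,\vartheta)$ that are angularly regular in the required sense and that converge weakly-$*$ to $\nu$. It then suffices to approximate each $(g_0,\nu_{(n)})$ by vacuum data with uniform bounds, and conclude by diagonal extraction. So assume from now on that $\nu=\rho(\ub,\vartheta)\,\ud\ub\,\ud\vartheta^1\,\ud\vartheta^2$ with $\rho$ smooth and non-negative.

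Next I would construct the oscillating perturbation. Let $P=P(\ub,\vartheta)$ be a symmetric, $\gamma_0$-trace-free $S$-tensor satisfying $\tfrac{1}{2\Omega_0}|P|_{\gamma_0}^2=\rho$, which exists because $\rho\geq 0$ (any rank-one choice such as $P=\sqrt{2\Omega_0\rho}\,(\zeta\hot\zeta-\tfrac12|\zeta|^2\gamma_0)$ normalized appropriately works, and angular regularity of $\rho$ transfers to $P$ thanks to strict positivity of $\Omega_0$). With a scale $\lambda_i\downarrow 0$, set
\begin{equation*}
\wht\gamma_i \doteq \wht\gamma_0 + \lambda_i \sin\!\left(\tfrac{\ub}{\lambda_i}\right) \wht P,
\end{equation*}
where $\wht P$ is the projection of $P$ to the tangent space of the unit-determinant slice (a quadratic correction in $\lambda_i$). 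Then $\wht\gamma_i\to \wht\gamma_0$ in $C^0$, $\rd_\ub\wht\gamma_i=\rd_\ub\wht\gamma_0+\cos(\ub/\lambda_i)\wht P+O(\lambda_i)$ is uniformly bounded in $L^\infty_\ub L^2(S)$, giving weak $H^1$-convergence in the null direction. A direct computation yields $\chih_i = \chih_0+ \Omega_i^{-1}\cos(\ub/\lambda_i)\wht P + O(\lambda_i)$, so that
\begin{equation*}
\Omega_i|\chih_i|_{\gamma_i}^2 \;\rightharpoonup\; \Omega_0|\chih_0|_{\gamma_0}^2 + \tfrac{1}{2\Omega_0}|P|_{\gamma_0}^2 = \Omega_0|\chih_0|_{\gamma_0}^2 + \rho,
\end{equation*}
in the weak-$*$ sense, because $\cos^2(\ub/\lambda_i)\rightharpoonup \tfrac12$. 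The extra term $\Omega_0|\chih_0|_{\gamma_0}^2$ is exactly the part of $\nu$ already present in the null-dust background, so modulo subtracting this contribution from $\rho$ at the outset, we obtain the desired weak-$*$ convergence.

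Finally, I would verify the uniform bounds required for Theorem~\ref{thm:LR2}. Angular regularity of $\rho$ and $\Omega_0$ passes to $\wht P$, so each $\rd_\vartheta^k \wht\gamma_i$ remains uniformly bounded in $L^2(S)$; the oscillation in $\ub$ affects only the $\rd_\ub$ direction, where the bounds in \eqref{eq:intro.metric.bds} are $L^2_\ub L^2(S)$, exactly the norm in which $\cos(\ub/\lambda_i)\wht P$ is uniformly controlled. The Raychaudhuri ODE then integrates this bounded source to give uniform $L^\infty$ bounds on $\trch$ and $\log\det\gamma$, and angular derivatives commute with $\rd_\ub$ to propagate angular regularity. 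The same construction applied on $\Hb_0$ with $\nub$, $\chibh$ gives the symmetric piece. I expect the main obstacle to be the bookkeeping needed to preserve the constraint equations on $S_{0,0}$ and the boundary matching between $H_0$ and $\Hb_0$; while no deep analytic issue arises, one must be careful that the conformal and volume factors introduced by the oscillation do not disturb the corner data, and that the rank/sign structure of $P$ accommodates measures $\nu$ which are genuinely rough in $\ub$ (where the mollification step and a double extraction are essential).
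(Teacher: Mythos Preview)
The paper is a survey and does not give a proof of this lemma; it is merely stated with a citation to \cite{LR.HF}, Proposition~9.4. So there is no ``paper's own proof'' to compare against here.

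That said, your construction is the natural one and is essentially what one expects: prescribe the free conformal data $\wht\gamma$ on the null hypersurface, add a high-frequency oscillation $\lambda_i\sin(\ub/\lambda_i)\wht P$ so that the shear picks up a term $\sim\cos(\ub/\lambda_i)P$, and use $\cos^2(\ub/\lambda_i)\rightharpoonup\tfrac12$ to manufacture the extra contribution to the weak-$*$ limit of $\Omega_i|\chih_i|^2_{\gamma_i}$. The reduction to smooth $\rho$ by mollification plus diagonal extraction, and the observation that the Raychaudhuri ODE propagates uniform bounds on $\trch$ and $\log\det\gamma$ from the $L^2_\ub$-bounded source, are both correct.

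One point of bookkeeping deserves care. In the paper's conventions (see the discussion around \eqref{Ric44} and the definition of $\nu$ after Theorem~\ref{thm:LR3.limit}), $\nu$ is the \emph{defect} measure: one needs $\Omega_i|\chih_i|^2_{\gamma_i}\rightharpoonup \Omega_0|\chih_0|^2_{\gamma_0}+\nu$, not $\Omega_i|\chih_i|^2_{\gamma_i}\rightharpoonup\nu$. Your sentence ``the extra term $\Omega_0|\chih_0|_{\gamma_0}^2$ is exactly the part of $\nu$ already present in the null-dust background'' conflates the two; $\Omega_0|\chih_0|^2_{\gamma_0}$ is the shear of the background and is not part of the null dust $\nu$. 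Concretely, you should simply set $\tfrac{1}{2\Omega_0}|P|^2_{\gamma_0}$ (with the correct numerical constant coming from $\chih=\tfrac{1}{2\Omega}\rd_\ub\wht\gamma$) equal to the density of $\nu$, not of $\nu-\Omega_0|\chih_0|^2_{\gamma_0}$. This is a normalization fix, not a structural gap.
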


Next, we state the uniqueness result. Note that uniqueness holds in general within the class of angularly regular solutions to the Einstein--null dust system. This was proven by generalizing ideas in \cite{LR2}.

\begin{theorem}[L.--Rodnianski, Theorem~4.4 in \cite{LR.HF}] \label{thm:uniqueness}
The characteristic initial value problem for the Einstein--null dust system with angularly regular initial data give rise to at most one angularly regular solution to the Einstein--null dust system.
\end{theorem}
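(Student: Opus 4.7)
The plan is to reduce uniqueness of the measure-valued dust to uniqueness of the metric coefficients, and then to close a linearized difference estimate at the same regularity level as the existence result of Theorem~\ref{thm:LR2}. Let $(g,\nu,\underline\nu)$ and $(g',\nu',\underline\nu')$ be two angularly regular solutions on $[0,u_*]\times[0,\underline u_*]\times S$ with identical characteristic initial data on $H_0\cup\underline H_0$. First, I would identify the two spacetimes with the same underlying manifold by using the canonical double null coordinate functions $u,\underline u$ (defined as level sets of the two null foliations, which agree on $H_0\cup\underline H_0$) and angular coordinates $\theta^A$ Lie-transported from $S_{0,0}$ along the null generators. Both metrics then take the form \eqref{double.null.coordinates} in this common chart, reducing the problem to comparing the metric coefficient tuples and the two families of measures.

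The next observation is that the measures $\nu,\underline\nu$ are fully determined by the metric. Testing the weak formulation \eqref{eq:double.null.weak.def} against $X = Y = e_3$ or $X = Y = e_4$ identifies $d\nu$ and $d\underline\nu$ as explicit multiples of $\mathrm{Ric}(e_3,e_3)\,d\mathrm{Vol}_g$ and $\mathrm{Ric}(e_4,e_4)\,d\mathrm{Vol}_g$, which by Raychaudhuri \eqref{Ric44}--\eqref{Ric33} are distributionally determined by the metric coefficients and their first derivatives. Hence it suffices to prove uniqueness of $(\gamma,\Omega,b)$; uniqueness of $(\nu,\underline\nu)$ is automatic.

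To establish metric uniqueness, define $\delta\mathfrak{g} \doteq \mathfrak{g}'-\mathfrak{g}$ and $\delta\Gamma \doteq \Gamma'-\Gamma$ for the Ricci coefficients of Definition~\ref{def:RC}. I would derive the linearized system: transport equations along $e_3,e_4$ for $\delta\mathfrak{g}$ with sources linear in $\delta\Gamma$, together with null structure and Bianchi-type equations for $\delta\Gamma$, using the polarization identity $A'B' - AB = \tfrac12(A+A')(B'-B)+\tfrac12(B+B')(A'-A)$ to make every nonlinearity linear in the differences with coefficients controlled by the uniform angular regularity of both solutions. The apparently dangerous $\delta(\Omega\,\mathrm{Ric}(e_a,e_a))$ sources on the right-hand side of the linearized Raychaudhuri equations are, by the reduction above, linear combinations of $\delta\mathfrak{g}$ and $\delta\Gamma$ themselves, and can be absorbed. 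Since all differences vanish on $H_0\cup\underline H_0$, running the a~priori estimate hierarchy of \cite{LR2} on a sufficiently short $\underline u$-slab and invoking Gronwall forces $\delta\mathfrak{g}\equiv 0$ and $\delta\Gamma\equiv 0$. A standard continuity argument in the null parameters extends uniqueness to the full domain $[0,u_*]\times[0,\underline u_*]\times S$.

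The main obstacle is that the linearized energy hierarchy for $\delta\Gamma$ has no room for a loss of derivatives: the system is already at the low regularity threshold of Theorem~\ref{thm:LR2}, so the linearized equations must close at the \emph{same} level. This requires re-deriving each step of the LR2 hierarchy --- transport estimates for Ricci coefficients, elliptic estimates on the spheres $S_{u,\underline u}$, and null Bianchi estimates for the linearized curvature --- and carefully verifying that the polarization-reorganized quadratic terms are absorbable given the propagated angular regularity of both solutions. However, once the reduction of the second paragraph eliminates the dust as an independent unknown, the scheme is structurally parallel to a vacuum uniqueness argument in the framework of \cite{LR2}.
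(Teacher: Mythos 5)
The paper is a survey and gives no proof details here beyond the remark that the result ``was proven by generalizing ideas in [LR2],'' so a line-by-line comparison is not possible. Your high-level plan --- fix the common double null gauge, observe that the dust measures are read off from the limiting Ricci curvature, and then close a linearized difference estimate at the regularity level of Theorem~\ref{thm:LR2} using the polarization identity and Gronwall --- is the natural one and is almost certainly in the same spirit as [LR.HF]. The observation that $\nu,\,\nub$ are determined by $\Omega\,\mathrm{Ric}(e_3,e_3)$, $\Omega\,\mathrm{Ric}(e_4,e_4)$ (via the Raychaudhuri identities \eqref{Ric44}--\eqref{Ric33} and the weak formulation \eqref{eq:double.null.weak.def}), so that metric uniqueness implies dust uniqueness, is also correct and is a sensible reduction.

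However, your third paragraph has a genuine circularity. Once you define $\nu$ by the Raychaudhuri expression, the linearized Raychaudhuri equation for $\delta\trch$, namely $e_4(\Omega\,\delta\trch)+\dots=-\delta(\Omega\,\mathrm{Ric}(e_4,e_4))$, becomes a tautology: the ``source'' $\delta(\Omega\,\mathrm{Ric}(e_4,e_4))$ is $-e_4(\Omega\,\delta\trch)-\dots$, which is exactly the quantity the equation is supposed to propagate, not a lower-order term in $\delta\mathfrak g,\,\delta\Gamma$ that can be ``absorbed'' by Gronwall. Contrary to what you assert, the difference argument is therefore \emph{not} structurally parallel to the vacuum argument of [LR2], where Raychaudhuri with zero right-hand side genuinely supplies the $e_4$-propagation of $\trch$. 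To close the estimate you have to commit to one of two mechanisms: either (i) keep $\nu,\,\nub$ as independent unknowns and add the transport equations they satisfy (the $e_4$- and $e_3$-transports coming from the contracted Bianchi identity, which live at the same $L^1$-along-null-directions regularity as $|\chih|^2$, $|\chibh|^2$) to the Gronwall hierarchy, obtaining a coupled system for $(\delta\trch,\delta\nub)$ and $(\delta\trchb,\delta\nu)$; or (ii) drop the Raychaudhuri equations and instead propagate $\delta\trch$ (resp.\ $\delta\trchb$) in the $e_3$ (resp.\ $e_4$) direction via the conjugate null structure equation, which involves only $\mathrm{Ric}(e_3,e_4)$ and hence is unaffected by the dust. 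Your write-up does not make either choice explicit, and as written the absorption step fails. A secondary issue you gloss over is that $\nu,\,\nub$ are only Radon measures, so $\delta\nu,\,\delta\nub$ are signed measures and the transport/Gronwall step has to be carried out in a weak (measure-valued) formulation rather than by a pointwise linearization; this is precisely where the ``generalizing ideas in [LR2]'' has actual content beyond the vacuum case.
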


Given Lemma~\ref{lem:data.approx} and Theorem~\ref{thm:uniqueness}, it is now easy to explain the proof of Theorem~\ref{thm:LR3.backward}, which is summarized in the following diagram.

\[
\begin{tikzcd}[row sep = tiny]
\boxed{\hbox{sequence of vacuum data}} \arrow[dd, "\hbox{\tiny converges by construction}"] \arrow[r, "\hbox{\tiny Thm~\ref{thm:LR2}}"] & \boxed{\hbox{sequence of vacuum solutions}}  \arrow[dr, "\hbox{\tiny converges by Thm~\ref{thm:LR3.limit}}"] & \\
& & \boxed{\hbox{Einstein--null dust solution}}   \\
\boxed{\hbox{Einstein--null dust data}}  & \arrow[l, "\hbox{\tiny restricts to}"]  \boxed{\hbox{limiting Einstein--null dust solution}} \arrow[ur, leftrightarrow, dashed, "\hbox{\tiny = \hbox{ by Thm~\ref{thm:uniqueness}}}"] &
\end{tikzcd}
\]

Given a solution to the Einstein--null dust system, first consider its restriction to the initial data. Then approximate the initial data by vacuum data using Lemma~\ref{lem:data.approx}. Apply Theorem~\ref{thm:LR3.limit} to obtain a subsequence of vacuum solutions which converge to a solution to the Einstein--null dust system. Finally, an application of the uniqueness result in Theorem~\ref{thm:uniqueness} shows that the limiting spacetime is indeed the one we were given. This finishes the proof.

With a change of perspective, we can also use the same circle of ideas to prove existence (and uniqueness) of solutions to the Einstein--null dust system. This is given by the following theorem\footnote{There are some technical assumptions in the theorem, e.g., when $S=\mathbb S^2$, the dust needs to vanish in part of $\mathbb S^2$. We refer the reader to \cite{LR.HF} for the precise statement.}:
\begin{theorem}[L.--Rodnianski, Theorem~4.6 in \cite{LR.HF}]\label{thm:dust.local.existence}
Consider a characteristic initial value problem with the Einstein--null dust system with angularly regular initial data with a measure-valued null dust. 

Then, in an appropriate local double null domain, there exists a unique angularly regular weak solution to the Einstein--null dust system.
\end{theorem}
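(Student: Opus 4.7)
The plan is to turn the approximation scheme summarized in the diagram preceding Theorem~\ref{thm:dust.local.existence} on its head: rather than starting from a known Einstein--null dust solution and approximating it by vacuum spacetimes, I start from the prescribed characteristic initial data and use exactly the same vacuum-approximation procedure to \emph{construct} the solution. Uniqueness is immediate from Theorem~\ref{thm:uniqueness}, so the real work is existence.

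Given angularly regular characteristic data $(g_0, \nu, \nub)$ on $H_0 \cup \Hb_0$, the first step is to apply Lemma~\ref{lem:data.approx} on $H_0$ and symmetrically on $\Hb_0$ to produce a sequence of smooth vacuum characteristic data $\{g_i\restriction_{H_0 \cup \Hb_0}\}_{i=1}^\infty$ satisfying the bounds in \eqref{eq:intro.metric.bds} \emph{with a uniform constant $C$}, with $g_i \to g_0$ in $C^0$ and weakly in $H^1$, and with $\Omg_i|\chih_i|^2 \to \nu$ on $H_0$ and $\Omg_i|\chibh_i|^2 \to \nub$ on $\Hb_0$ in the weak-$*$ topology. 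Feeding these data into Theorem~\ref{thm:LR2}, the uniformity of $C$ yields a uniform parameter $\ep$, and hence a uniform rectangular domain $\mathcal M = [0,u_*]\times[0,\ub_*]\times S$ carrying smooth angularly regular vacuum solutions $g_i$ with uniform estimates. Theorem~\ref{thm:LR3.limit} then extracts a subsequence (not relabeled) for which $g_i \to g_\infty$ in $C^0$ and weakly in $H^1$, identifying $g_\infty$ as an angularly regular weak solution to the Einstein--null dust system on $\mathcal M$ with two families of null dust measures $\nu_\infty$, $\nub_\infty$.

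It remains to verify that $(g_\infty, \nu_\infty, \nub_\infty)$ attains the prescribed data on $H_0 \cup \Hb_0$, after which Theorem~\ref{thm:uniqueness} pins down this solution uniquely. The metric trace is controlled by $C^0$ convergence: $g_\infty \restriction_{H_0 \cup \Hb_0} = g_0 \restriction_{H_0 \cup \Hb_0}$. For the dust, I invoke the construction of $\nu_\infty, \nub_\infty$ from step (3) of the sketch of Theorem~\ref{thm:LR3.limit} — these measures are the weak-$*$ limits of $\Omg_i|\chih_i|^2_{\gamma_i}$ and $\Omg_i|\chibh_i|^2_{\gamma_i}$ minus the product of the respective weak limits — and observe that by Lemma~\ref{lem:data.approx} the traces of these measures on $H_0$ and $\Hb_0$ reduce exactly to the prescribed $\nu$ and $\nub$, while the transport equations for the null dusts (step (4) of the sketch of Theorem~\ref{thm:LR3.limit}) propagate them off the initial hypersurfaces. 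The step most in need of care is precisely this matching of initial traces of the defect measures: without the weak-$*$ conclusion built into Lemma~\ref{lem:data.approx}, the compactness argument would yield only \emph{some} Einstein--null dust solution with \emph{some} dust, not the one corresponding to the prescribed $(\nu, \nub)$. This is the only genuinely delicate ingredient; everything else is a routine invocation of the previously stated results, and the prescribed $u_*$, $\ub_*$ can then be taken as small as needed to accommodate the "appropriate local double null domain" asserted by the theorem.
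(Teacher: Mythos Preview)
Your proposal is correct and follows essentially the same approach as the paper: approximate the given Einstein--null dust data by vacuum data via Lemma~\ref{lem:data.approx}, solve the vacuum problem on a uniform domain via Theorem~\ref{thm:LR2}, extract a convergent subsequence via Theorem~\ref{thm:LR3.limit}, and invoke Theorem~\ref{thm:uniqueness} for uniqueness. Your added attention to verifying that the limiting dust measures restrict to the prescribed $\nu$, $\nub$ on the initial hypersurfaces is a legitimate point of care that the paper's sketch leaves implicit.
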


Here, unlike in Theorem~\ref{thm:LR3.backward}, the Einstein--null dust solution is not given, but is instead to be constructed from initial data. To do this, we approximate the given data to the Einstein--null dust system by a sequence of vacuum data, and then solve the vacuum problem and take the weak limit. The proof is represented by the following diagram:

\[
\begin{tikzcd}
\boxed{\hbox{sequence of vacuum data}} \arrow[d, "\hbox{\tiny converges by construction (Lem~\ref{lem:data.approx})}"] \arrow[r, "\hbox{\tiny Thm~\ref{thm:LR2}}"] & \boxed{\hbox{sequence of vacuum solutions}}  \arrow[d, "\hbox{\tiny converges by Thm~\ref{thm:LR3.limit}}"] \\
\boxed{\hbox{Einstein--null dust data}} \arrow[r, dashed, "?"] & \boxed{\hbox{Einstein--null dust solution}}
\end{tikzcd}
\]


One particular consequence of Theorem~\ref{thm:dust.local.existence} is the construction of \emph{null dust shells} for the first time. These are solutions to the Einstein--null dust system for which the null dust is a measure which is supported on a null hypersurface. These solutions have been widely studied in the physics literature; see \cite{Barrabes.shell, BaHo, cBwIpL91, cBwIeP90, tDgtH85, tDgtH86, Hawking.shell, Penrose.shell, iR85, Synge, Tod.shell}. In fact, the construction is not restricted to null dust shell, but the null dust can be merely any measure with bounded variation and with additional angular regularity.

\section{$\mathbb U(1)$ symmetry and an elliptic gauge condition}\label{sec:U1.gauge}

From this section to Section~\ref{vlasov}, we consider Conjecture~\ref{conj:forward} and Conjecture~\ref{conj:backward} in $\m U(1)$ symmetry under an elliptic gauge condition.  In this section, we introduce the symmetry and gauge condition in Section~\ref{sec:U1} and Section~\ref{secell}, respectively. This serves as a preliminary discussion for the next few sections.

\subsection{$\m U(1)$ symmetric spacetimes}\label{sec:U1}

We fix our setup until Section~\ref{vlasov}. We will work with the  $(3+1)$-dimensional manifold ${}^{(4)}\mathcal M = \mathcal M\times \mathbb R$, where $\mathcal M = (0,T)\times \mathbb R^2$. Introduce coordinates $(t,x^1,x^2)$ on $\mathcal M$ and $(t,x^1,x^2,x^3)$ on ${}^{(4)}\mathcal M$. In the following, everything will be independent of $x^3$. We use the convention that lower case Greek indices run over $0,1,2$ while lower case Latin indices run over $1,2$. Repeated indices are summed over.

\begin{definition}[$\m U(1)$ symmetric spacetimes]
We say that Lorentzian metric ${}^{(4)}g$ on ${}^{(4)}\mathcal M$ is \textbf{$\mathbb U(1)$-symmetric} if in local coordinates, ${}^{(4)}g$ takes the form 
\begin{equation}\label{eq:4g}
	^{(4)}g= e^{-2\psi}g+ e^{2\psi}(\ud x^3+ \mathfrak A_\alp \ud x^{\alp})^2,
\end{equation}
where $g$ is a Lorentzian metric on $\mathcal M$, $\psi$ is a real-valued function on $\mathcal M$ and $\mathfrak A_{\alp}$ is a real-valued $1$-form on $\mathcal M$.

Additionally, we say that a $\mathbb U(1)$-symmetric spacetime $({}^{(4)}\mathcal M, {}^{(4)}g)$ is \textbf{polarized} if $\mathfrak A_\alp \equiv 0$.
\end{definition}

Under the $\mathbb U(1)$ symmetry assumption, the Einstein--vacuum equations verify a well-known reduction to a $(2+1)$-dimensional problem (see for instance \cite{CB.book}):
\begin{lemma}[$\m U(1)$ symmetric vacuum spacetimes]\label{lem:U1.reduction}
Suppose $({}^{(4)}\mathcal M,{}^{(4)}g)$ is a $\m U(1)$ symmetric spacetime such that the metric $^{(4)}g$ in local coordinates is given by \eqref{eq:4g}. Then the Einstein vacuum equations for $({}^{(4)}\mathcal M,{}^{(4)}g)$ reduces to the following $(2+1)$-dimensional Einstein--wave map system for $(\mathcal M, g, \psi, \varpi)$:
\begin{equation}\label{eq:U1vac}
	\left\{
	\begin{array}{l}
		\Box_g \psi + \f 12 e^{-4\psi} g^{-1}(\ud \varpi, \ud \varpi) = 0,\\
		\Box_g \varpi - 4 g^{-1} (\ud \varpi, \ud \psi) = 0,\\
		\mathrm{Ric}_{\alp\bt}(g)= 2\partial_\alp \psi \partial_\bt \psi + \f 12 e^{-4\psi} \rd_\alp \varpi \rd_\bt \varpi,
	\end{array}
	\right.
\end{equation}
where $\varpi$ is a real-valued function which relates to $\mathfrak A_\alp$ via the relation
\begin{equation}\label{eq:mfkA}
	(\ud \mathfrak A)_{\alp\bt}  = \rd_\alp \mathfrak A_\bt - \rd_\bt \mathfrak A_\alp = \f 12 e^{-4\psi} (g^{-1})^{\lambda\de} \in_{\alp\bt\lambda} \rd_\de \varpi,
\end{equation}
where $\in_{\alp\bt\lambda}$ denotes the completely antisymmetric tensor. 
\end{lemma}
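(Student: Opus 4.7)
The reduction has the structure of a Kaluza-Klein dimensional reduction with a dilaton scalar $\psi$ and Kaluza-Klein $1$-form $\mathfrak A$. The plan is to (i) decompose the $4$-dimensional vacuum Ricci tensor using a frame adapted to the Killing field $\rd_{x^3}$, (ii) dualize $\mathfrak A$ in $2+1$ dimensions to produce the scalar $\varpi$, and (iii) rewrite the resulting equations in the form \eqref{eq:U1vac}.

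For step (i), I would work with the orthonormal co-frame $\{e^{-\psi}\theta^\alpha,\, e^{\psi}\omega\}$, where $\{\theta^\alpha\}$ is an orthonormal co-frame for $g$ on $\mathcal M$ and $\omega = \ud x^3 + \mathfrak A_\alpha\, \ud x^\alpha$. The Cartan structure equations, applied to this frame and using $\rd_{x^3}$-independence, express the connection and curvature $2$-forms of ${}^{(4)}g$ in terms of those of $g$ together with $\psi$ and $F \doteq \ud\mathfrak A$ (this is a standard computation, see e.g.\ \cite{CB.book}). Projecting $\mathrm{Ric}({}^{(4)}g) = 0$ onto the three types of frame components then decomposes it into: a scalar equation from the pure $\rd_{x^3}$-component, of the schematic form $\Box_g\psi = -\tfrac14 e^{4\psi}|F|_g^2$; a vector equation from the mixed components, of the form $\nabla^\beta (e^{4\psi} F_{\alpha\beta}) = 0$; and a tensorial equation from the tangential components expressing $\mathrm{Ric}_{\alp\bt}(g)$ in terms of the stress-energies of $\psi$ and $F$.

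For step (ii), I would use that on the $(2+1)$-dimensional base $(\mathcal M,g)$, the Hodge dual $\star_g F$ of the $2$-form $F=\ud\mathfrak A$ is a $1$-form. The Maxwell equation $\nabla^\beta(e^{4\psi}F_{\alpha\beta}) = 0$ obtained in step (i) is equivalent to $\ud(e^{4\psi}\star_g F)=0$, so the Poincar\'e lemma yields a scalar $\varpi$ (locally, which is enough here) with $e^{4\psi}\star_g F = \ud\varpi$; inverting the Hodge star gives precisely the relation \eqref{eq:mfkA}. The wave equation for $\varpi$ then arises not from a dynamical equation but from the \emph{Bianchi identity} $\ud F = 0$: substituting $F = e^{-4\psi}\star_g \ud\varpi$ and using the standard identity relating $\ud\star_g\ud\varpi$ to $(\Box_g\varpi)\,\mathrm{dVol}_g$ produces $\Box_g\varpi - 4g^{-1}(\ud\varpi,\ud\psi)=0$, which is the second line of \eqref{eq:U1vac}. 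Expressing $|F|_g^2$ in terms of $|\ud\varpi|_g^2$ through the same dualization then converts the scalar and tensorial equations of step (i) into the first and third lines of \eqref{eq:U1vac}.

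I expect no conceptual obstacle; the main difficulty is bookkeeping, namely tracking (a) the conformal weight coming from the $e^{-2\psi}$ prefactor in \eqref{eq:4g}, and (b) signs and numerical constants in the Hodge dualization. The cleanest way to keep these straight is to first perform the reduction in the rescaled metric $\tilde g_{\alp\bt} = e^{-2\psi}g_{\alp\bt}$, so that ${}^{(4)}g = \tilde g + e^{2\psi}\omega^2$ is the standard Kaluza-Klein ansatz, carry out the computation there, and only at the end apply the single well-known conformal-change-of-metric formula for Ricci to pass from $\tilde g$ back to $g$. This isolates all of the $\psi$-dependent weights into one formula and makes it straightforward to verify the precise coefficients in \eqref{eq:U1vac}.
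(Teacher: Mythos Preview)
The paper does not actually prove this lemma; it merely states it as a ``well-known reduction'' and refers the reader to \cite{CB.book}. Your proposal is a correct and standard way to carry out this Kaluza--Klein reduction, and the strategy of first working in the unrescaled metric $\tilde g = e^{-2\psi}g$ before applying the conformal change formula is indeed the cleanest way to keep the coefficients straight; there is nothing to compare against in the paper itself.
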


\begin{remark}[The polarized subcase]\label{rmk:twist}
The function $\varpi$ is called the twist potential. In the case $({}^{(4)}\mathcal M,{}^{(4)}g)$ is polarized, $\varpi \equiv 0$, and the Einstein--wave map system \eqref{eq:U1vac} further reduces to the Einstein--(linear) scalar field system.
\end{remark}

The reduction in Lemma~\ref{lem:U1.reduction} has analogues for various Einstein--matter fields systems. For the purposes of the discussion of Burnett's conjectures in Section~\ref{sec:U1.forward}--Section~\ref{vlasov}, we define a notion of ``radially-averaged measure solutions for the restricted Einstein--massless Vlasov system in $\mathbb U(1)$ symmetry.'' This is a restricted class of $\mathbb U(1)$-symmetric solution to the Einstein--massless Vlasov system in $3+1$ dimensions where the massless Vlasov measure is additionally required to be supported in the cotangent bundle corresponding to the (2 + 1)-dimensional (instead of the (3 + 1)-dimensional) manifold.
\begin{df}[Radially-averaged measure solutions for the restricted Einstein--massless Vlasov system in $\mathbb U(1)$ symmetry]\label{def:the.final.def}
	Let $(^{(4)}\mathcal M, ^{(4)}g)$ be a $(3+1)$-dimensional $C^2$ Lorentzian manifold which is $\mathbb U(1)$-symmetric as in \eqref{eq:4g}, i.e., the metric takes the form
	$$
	^{(4)}g= e^{-2\psi}g+ e^{2\psi}(\ud x^3+ \mathfrak A_\alp \ud x^{\alp})^2,
	$$
	for $g$, $\psi$, $\mathfrak A$ independent of $x^3$. Let $\nu$ be a non-negative finite Radon measure on $S^*\mathcal M$.
	
	We say that $(^{(4)}\mathcal M, ^{(4)}g, \nu)$ is a \textbf{radially-averaged measure solution for the restricted Einstein--massless Vlasov system} in $\mathbb U(1)$ symmetry if all of the following holds:
	\begin{enumerate}
		\item The following equations are satisfied:
		\begin{equation}\label{eq:U1vac.vlasov}
			\left\{
			\begin{array}{l}
				\Box_g \psi + \f 12 e^{-4\psi} g^{-1}(\ud \varpi, \ud \varpi) = 0,\\
				\Box_g \varpi - 4 g^{-1}(\ud \varpi , \ud \psi) = 0,\\
				\int_{\mathcal M} \mathrm{Ric}(g)(Y,Y) \, \mathrm{dVol}_g=  \int_{\mathcal M} [2 (Y \psi)^2 + \f 12 e^{-4\psi} (Y\varpi)^2] \, \mathrm{dVol}_g + \int_{{S}^*\mathcal M} \langle \xi,Y\rangle^2 \, \ud \nu,
			\end{array}
			\right.
		\end{equation}
		for every $C^\infty_c$ vector field $Y$, where $\varpi$ relates to $\mathfrak A_\alp$ via \eqref{eq:mfkA}
		and where $S^*\q M$ is the cosphere bundle given by $S^*\q M= (T^* \q M\setminus\{0\})/\sim$, where $(x,\xi)\sim (y,\eta)$ if and only if $x=y$ and $\xi = \lambda \eta$ for some $\lambda>0$.
		\item The measure $\nu$ is supported on the zero mass shell, i.e., 
		\begin{equation}\label{eq:massless.def}
		(g^{-1})^{\alp\bt} \xi_\alp\xi_\bt \ud \nu \equiv 0,\quad \forall (x,\xi) \in S^* \q M.
		\end{equation}
		\item For any $C^1$ function $\widetilde{a}:T^*\mathcal M\to \mathbb R$ which is compactly supported in $x$ and such that $\widetilde{a}(x, \lambda\xi) = \lambda \widetilde{a}(x,\xi)$ $\forall \lambda >0$, 
		\begin{equation}\label{eq:transport.def}
			\int_{{S}^*\mathcal M} \{(g^{-1})^{\alp\bt} \xi_\alp\xi_\bt, \widetilde{a}(x,\xi)\} \, \ud\nu = 0,
		\end{equation}
		where $\{f(x,\xi),h(x,\xi)\} \doteq \rd_{\xi_\alp} f \rd_{x^\alp} h - \rd_{\xi_\alp} h \rd_{x^{\alp}} f$. 	\end{enumerate}
\end{df}

\begin{remark}
The condition \eqref{eq:transport.def} in Definition~\ref{def:the.final.def} can be viewed as a weak formulation of the Vlasov equation. Indeed, if the measure $\nu$ is absolutely continuous with respect to the measure on zero mass shell induced by $g$, then \eqref{eq:transport.def} is equivalent to the usual Vlasov equation. See \cite[Proposition~2.2]{HL.Burnett} for details.

\end{remark}

\begin{remark}[Null dust as a special case of massless Vlasov]\label{rmk:null.dust.as.vlasov}
Definition~\ref{def:the.final.def} includes as a special case where $\nu$ is a finite sum of delta measures, which corresponds to a finite number of families of null dust. (See \cite[Lemma~2.8]{HL.Burnett} for details.)

This will play an important role in the discussions in Sections~\ref{sec:U1.backward} and \ref{vlasov}.
\end{remark}

\subsection{Elliptic gauge condition}\label{secell}

For our discussions, there are two (related) advantages of $\m U(1)$-symmetry. First, in the vacuum case, the dynamical degrees of freedom are completely isolated into the ``wave map part'' in the functions $(\psi,\varpi)$. Second, the problem is reduced to $(2+1)$ dimensions, where the Ricci curvature tensor completely determines the Riemann curvature tensor. As a result, as long as a suitable smallness condition is imposed, one can introduce a global elliptic gauge in $(2+1)$ dimensions, which satisfy the properties in Definition~\ref{def:elliptic.gauge}. (Notice that \eqref{eq:uniformization} is related to the uniformization theorem, and relies on having two space dimensions.) As we will see below (see Lemma~\ref{lem:elliptic}), in the elliptic gauge, the metric components can be estimated elliptically from the wave map $(\psi,\varpi)$.

\begin{definition}[Elliptic gauge]\label{def:elliptic.gauge}
Suppose the $(2+1)$-dimensional metric $g$ on $\q M$ takes the form
\begin{equation}\label{g.form.0}
	g=-n^2 \ud t^2 + \bar{g}_{ij}(\ud x^i + \beta^i \ud t)(\ud x^j + \beta^j \ud t).
\end{equation}
We say that $g$ is in an \textbf{elliptic gauge} if the following holds:
\begin{enumerate}
\item $\bar{g}$ is conformally flat, i.e., there exists a function $\gamma$  such that
	\begin{equation}\label{eq:uniformization}
		\bar{g}_{ij}=e^{2\gamma}\delta_{ij},
	\end{equation}
	where $\de$ is the Euclidean metric.
\item The constant $t$-hypersurfaces $\Sigma_t\doteq\{(s,x^1,x^2): s=t\}$ are maximal, i.e., for $e_0= \partial_t -\beta^i\rd_i$ (a future-directed normal to $\Sigma_t$), $K_{ij}=-\frac{1}{2N}\q L_{e_0} \bar{g}_{ij}$ (the second fundamental form), it holds that
$$\tau \doteq \mathrm{tr}_{\bar{g}} K = 0.$$
\end{enumerate} 
\end{definition}

The gauge defined above is indeed ``elliptic'' in the sense that the metric components can be recovered from the Ricci curvature tensor via (semilinear) elliptic equations:
\begin{lemma}\label{lem:elliptic}
Under the gauge conditions in Definition~\ref{def:elliptic.gauge}, the metric components $n$, $\gamma$ and $\bt^i$ satisfy the following elliptic equations, where $H$ is defined by \eqref{elliptic.4}. (These can be derived from \cite[Appendix~B]{HL.elliptic} and some algebraic manipulations.)
\begin{align}
	&{\de^{bk}}\partial_{{k}} H_{bj}= -\frac{e^{2\gamma}}{n}\mathrm{Ric}_{0j}, \label{elliptic.1}\\
	&\Delta \gamma = -\frac{e^{2\gamma}}{n^2}G_{00} - \frac{1}{2}e^{-2\gamma}|H|^2,\label{elliptic.2}\\
	&\Delta n = ne^{-2\gamma}|H|^2 -\f 12 e^{2\gamma} n R + \f{e^{2\gamma}}{n} G_{00},\label{elliptic.3}\\
	& (\mathfrak L\beta)_{ij}=2ne^{-2\gamma}H_{ij},\label{elliptic.4}
\end{align}
where $\mathrm{Ric}_{\alp\bt}$ is the Ricci tensor, $R$ is the scalar curvature, $G_{\alp\bt} = \mathrm{Ric}_{\alp\bt} - \f 12 R g_{\alp\bt}$ is the Einstein tensor.

The Laplacian $\Delta$ and the conformal Killing operator  $\mathfrak L$ are given by 
\begin{equation}\label{L.def}
	\Delta f\doteq \de^{j\ell} \rd^2_{j\ell} f,\quad (\mathfrak LY)_{ij}\doteq \delta_{j\ell}\rd_i Y^\ell+\delta_{i\ell}\rd_j Y^\ell-\delta_{ij}\rd_k Y^k,
\end{equation}
and are both elliptic operators.

%
%
%
%
\end{lemma}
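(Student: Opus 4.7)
The plan is to derive these four equations as a standard application of the ADM decomposition of the Einstein equations in $2+1$ dimensions, exploiting two simplifications from the elliptic gauge: maximal slicing ($\tau = 0$) and conformal flatness ($\bar g_{ij} = e^{2\gamma}\delta_{ij}$). The tensor $H$ in \eqref{elliptic.4} should be identified with (a sign-convention multiple of) the extrinsic curvature $K_{ij}$ itself, which is automatically trace-free with respect to $\delta$ since $K$ is trace-free with respect to $\bar g$ by maximality and $\bar g \propto \delta$. Throughout, I denote the future unit normal $N = \frac{1}{n}e_0$, so that $\mathrm{Ric}_{0j} = n\,\mathrm{Ric}(N,\partial_j)$ and $G_{00} = n^2 G(N,N)$.

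Equation \eqref{elliptic.4} follows from the purely kinematic identity $\partial_t \bar g_{ij} = -2nK_{ij} + (\mathcal L_\beta \bar g)_{ij}$. Since $\bar g = e^{2\gamma}\delta$, the left-hand side equals $2(\partial_t \gamma) e^{2\gamma}\delta_{ij}$, a pure trace w.r.t.\ $\delta$, so its $\delta$-trace-free part vanishes. A direct computation shows that the $\delta$-trace-free part of $\mathcal L_\beta \bar g$ is exactly $e^{2\gamma}(\mathfrak L\beta)_{ij}$, so equating trace-free parts yields $2nK_{ij} = e^{2\gamma}(\mathfrak L\beta)_{ij}$, i.e., \eqref{elliptic.4}. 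Equation \eqref{elliptic.1} then follows from the momentum (Codazzi) constraint $\bar D^j K_{ij} - \partial_i \tau = \mathrm{Ric}(N,\partial_i)$ (up to the paper's sign convention for $K$): under $\tau = 0$, the divergence of a $\delta$-trace-free symmetric $(0,2)$-tensor in a conformally flat metric simplifies because all Christoffel contributions vanish by traceless-ness, giving $\bar D^j K_{ij} = e^{-2\gamma}\delta^{jk}\partial_k K_{ij}$; substituting $H=K$ and $\mathrm{Ric}(N,\partial_j) = \mathrm{Ric}_{0j}/n$ then rearranges to \eqref{elliptic.1}.

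Equations \eqref{elliptic.2} and \eqref{elliptic.3} come from the Hamiltonian (Gauss) constraint and the maximal-lapse equation respectively. The Hamiltonian constraint $R_{\bar g} + \tau^2 - |K|_{\bar g}^2 = 2 G(N,N)$, combined with maximality, the 2D conformal identity $R_{\bar g} = -2 e^{-2\gamma}\Delta \gamma$, the relations $|K|_{\bar g}^2 = e^{-4\gamma}|H|^2$ with $|H|^2 = \delta^{ik}\delta^{jl}H_{ij}H_{kl}$, and $G(N,N) = G_{00}/n^2$, directly produces \eqref{elliptic.2}. For \eqref{elliptic.3}, the maximal gauge condition $\partial_t \tau = 0$, applied to the standard evolution equation for $\tau$, yields the familiar lapse equation $\bar \Delta n = n\bigl(|K|_{\bar g}^2 + \mathrm{Ric}(N,N)\bigr)$. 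Rewriting $\mathrm{Ric}(N,N) = G(N,N) - \tfrac{1}{2}R$ (from $G_{\alpha\beta} = \mathrm{Ric}_{\alpha\beta} - \tfrac{1}{2}R g_{\alpha\beta}$ contracted with $N^\alpha N^\beta$, using $g(N,N) = -1$), converting $\bar \Delta = e^{-2\gamma}\Delta$, multiplying by $e^{2\gamma}$, and substituting $K = H$ yields \eqref{elliptic.3}.

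The main obstacle is entirely one of bookkeeping: consistently applying the paper's sign convention for $K$ (which may flip the sign in Codazzi relative to other standard conventions), carefully tracking factors of $e^{2\gamma}$ under the conformal rescaling in the divergence of trace-free tensors, in $|\cdot|^2_{\bar g}$, and in $\bar \Delta$ versus $\Delta$, and systematically translating between frame quantities ($\mathrm{Ric}(N,\cdot)$, $G(N,N)$) and the coordinate quantities $\mathrm{Ric}_{0j}$, $G_{00}$ that appear in the target equations. No genuinely deep step is required once the standard $2+1$D ADM machinery and the 2D conformal identity $R_{\bar g} = -2e^{-2\gamma}\Delta \gamma$ are in hand.
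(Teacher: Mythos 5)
Your proposal is correct and follows the standard $(2+1)$-ADM route that the paper's cited reference \cite{HL.elliptic} (Appendix~B) necessarily uses: the key observation that $H$ defined by \eqref{elliptic.4} coincides with the second fundamental form $K$ (obtained by taking the $\delta$-trace-free part of the kinematic identity $\partial_t \bar g = -2nK + \mathcal L_\beta \bar g$ and using that $\partial_t\bar g$ is pure trace under conformal flatness), after which \eqref{elliptic.1}, \eqref{elliptic.2}, \eqref{elliptic.3} are, respectively, the momentum constraint (simplified because $\bar D^j K_{ij} = e^{-2\gamma}\delta^{jk}\partial_k K_{ij}$ for $\delta$-trace-free $K$ in a conformally flat 2-metric), the Hamiltonian constraint (using $R_{\bar g} = -2e^{-2\gamma}\Delta\gamma$), and the maximal-slicing lapse equation. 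The only caveat, which you already flag, is matching the paper's sign convention $K_{ij} = -\tfrac{1}{2n}\mathcal L_{e_0}\bar g_{ij}$ throughout, and interpreting $\mathrm{Ric}_{0j}$, $G_{00}$ as the $e_0$-frame components $\mathrm{Ric}(e_0,\partial_j)$, $G(e_0,e_0)$.
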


\begin{remark}
In the case of the vacuum equations, the Ricci curvature terms in Lemma~\ref{lem:elliptic} can be computed using \eqref{eq:U1vac}.
\end{remark}

\begin{remark}[Local well-posedness in elliptic gauge]\label{rmk.touati.local}
The Einstein vacuum equations in $\m U(1)$ symmetry are locally well-posed in the elliptic gauge given in Definition~\ref{def:elliptic.gauge}. For initial data which are sufficiently regular and polarized, and such that the wave part is small in $W^{1,\infty}$, this was proven in \cite{HL.elliptic}. The result was later extended by Touati \cite{Touati.local} to the general non-polarized case with smallness only imposed on the $W^{1,4}$ norm of the wave part.

The elliptic gauge in $\m U(1)$ symmetry is useful in other low-regularity problems, see for instance \cite{LVdM1, LVdM2}.
\end{remark}

\section{Burnett's conjecture in $\m U(1)$ symmetry}\label{sec:U1.forward}

In \cite{HL.Burnett}, we prove Burnett's conjecture (Conjecture~\ref{conj:forward}) when the metrics admit a $\m U(1)$ symmetry (see Section~\ref{sec:U1}) and obey the elliptic gauge condition in Section~\ref{secell}. A rough statement of our theorem is the following:
\begin{thm}[H.--L., Theorems~4.1, 4.2 in \cite{HL.Burnett}]\label{thburnett}
	Let $\{ h_i=(g_i,\psi_i,\varpi_i)\}_{i=1}^\infty$ be a sequence of solutions of \eqref{eq:U1vac} on $(0,T)\times \mathbb R^2$, and suppose the elliptic gauge condition in Definition~\ref{def:elliptic.gauge} holds for all $i \in \mathbb N$. Suppose there exists a smooth $h_0=(g_0,\psi_0,\varpi_0)$, also satisfying the same elliptic gauge condition, such that
	\begin{enumerate}
		\item \label{item:burnett.1} $h_i \rightarrow h_0$ uniformly on compact sets,
		\item \label{item:burnett.2} $\partial h_i \rightharpoonup \partial h_0$ weakly in $L^{p_0}_{loc}$ with $p_0>\frac{8}{3}$.
	\end{enumerate} 
	Then there exists a non-negative Radon measure $\nu$ on $\mathcal S^* \mathbb R^{2+1}$ such that \eqref{eq:U1vac.vlasov}  and \eqref{eq:massless.def} hold.
	
	If, in addition, we have
	\begin{enumerate}
	\setcounter{enumi}{2}
		\item \label{item:burnett.3} for all compact $K$ there exists $\lambda_i\to 0$ such that 
		\begin{equation}\label{eq:burnett.assumption}
		\sum_{k=0}^4 \lambda_i^{k-1}\|\partial^k (h_i-h_0)\|_{L^\infty(K)}\lesssim C_K,
		\end{equation}
	\end{enumerate}
	then \eqref{eq:transport.def} holds for $\nu$ defined above. In particular, $(g_0,\psi_0,\varpi_0,\nu)$ solves Einstein--massless Vlasov system in the sense of Definition~\ref{def:the.final.def}.
\end{thm}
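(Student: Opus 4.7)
The plan is to use microlocal defect measures (H-measures in the sense of Tartar, or semiclassical measures à la G\'erard) to capture the failure of strong convergence of $\rd h_i$. Setting $V_i = (\psi_i - \psi_0, \varpi_i - \varpi_0)$, one has $V_i \to 0$ uniformly and $\rd V_i \rightharpoonup 0$ in $L^{p_0}_{\mathrm{loc}}$. First I would construct, after extraction, a matrix-valued non-negative Radon measure $\mu = (\mu^{ab})_{a,b\in\{\psi,\varpi\}}$ on $S^*\mathcal{M}$ such that, for every compactly supported zeroth-order pseudodifferential operator $Q$ with principal symbol $q(x,\xi)$,
\begin{equation*}
\lim_i \int (Q\,\rd_\alpha V_i^a)(\rd_\beta V_i^b)\,\ud x
= \int_{S^*\mathcal{M}} q(x,\xi)\, \xi_\alpha \xi_\beta\, \ud\mu^{ab}(x,\xi).
\end{equation*}
The elliptic gauge plays a central role here: by \eqref{elliptic.1}--\eqref{elliptic.4}, the metric components $n_i,\gamma_i,\beta^i_i$ are recovered from $(\psi_i,\varpi_i)$ by standard elliptic operators, so $L^{p_0}$-elliptic regularity (with $p_0 > 8/3$ chosen so that products and embeddings close) gives enough strong convergence of the metric components that all genuine defects arise from the quadratic expressions in $\rd\psi$ and $\rd\varpi$.

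Next I would pass to the limit in \eqref{eq:U1vac}. The crucial input is that $\mu$ must be supported on the zero mass shell $\{g_0^{\alpha\beta}\xi_\alpha\xi_\beta = 0\}$. This follows from the wave equations: $\Box_{g_i}\psi_i$ and $\Box_{g_i}\varpi_i$ are uniformly bounded in $L^{p_0/2}_{\mathrm{loc}}$ (as products of $L^{p_0}$ quantities with uniformly bounded coefficients), and the standard microlocal localization argument (see Section~\ref{secmes}) then confines $\mu$ to the characteristic set of the principal symbol $g_0^{\alpha\beta}\xi_\alpha\xi_\beta$, which is exactly \eqref{eq:massless.def}. Consequently, the potential defects in the limits of the nonlinear terms $g_i^{-1}(\ud\varpi_i,\ud\varpi_i)$ and $g_i^{-1}(\ud\varpi_i,\ud\psi_i)$ take the form $\int g_0^{\alpha\beta}\xi_\alpha\xi_\beta(\ldots)\,\ud\mu$ and therefore vanish, so $(\psi_0,\varpi_0)$ satisfies the wave equations in \eqref{eq:U1vac.vlasov}. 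For the Ricci equation the same bookkeeping produces
\begin{equation*}
\mathrm{Ric}_{\alpha\beta}(g_0) = 2\,\rd_\alpha\psi_0\,\rd_\beta\psi_0 + \tfrac12 e^{-4\psi_0}\rd_\alpha\varpi_0\,\rd_\beta\varpi_0 + \int_{S^*_x\mathcal{M}} \xi_\alpha\xi_\beta\,\ud\nu(x,\xi),
\end{equation*}
where $\nu \doteq 2\mu^{\psi\psi} + \tfrac12 e^{-4\psi_0}\mu^{\varpi\varpi}$ is non-negative since the diagonal entries of $\mu$ are non-negative. Pairing with a test vector field $Y$ and using $\langle\xi,Y\rangle^2 = Y^\alpha Y^\beta\xi_\alpha\xi_\beta$ yields the weak form in \eqref{eq:U1vac.vlasov}.

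For the transport equation \eqref{eq:transport.def}, I would use the high-frequency hypothesis \eqref{eq:burnett.assumption} to upgrade the picture to a semiclassical one at scale $\lambda_i$. Given a test symbol $\widetilde a$ homogeneous of degree one in $\xi$ and compactly supported in $x$, quantize it and compute the commutator identity
\begin{equation*}
[\Box_{g_i},\widetilde a(x,D)] = \tfrac{1}{i}\{g_0^{\alpha\beta}\xi_\alpha\xi_\beta,\widetilde a\}(x,D) + \text{error},
\end{equation*}
pair with $V_i$, and pass to the limit. Because $\Box_{g_i}V_i$ is a controlled product of weakly convergent sequences and the nonlinear terms factor through quadratic forms already covered by $\mu$, the principal part of the limit gives exactly $\int\{g_0^{\alpha\beta}\xi_\alpha\xi_\beta,\widetilde a\}\,\ud\nu = 0$, namely \eqref{eq:transport.def}. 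The control of up to four derivatives in \eqref{eq:burnett.assumption} is what makes the error terms — including those arising from the fact that $g_i$ (rather than $g_0$) is the genuine principal symbol — quantitatively negligible.

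The main obstacle is precisely this last step. Because the principal symbol of $\Box_{g_i}$ varies with $i$ while the defect measure $\nu$ lives over the limiting spacetime, the commutator identity above is only formally correct, and making the error estimates rigorous requires careful exploitation of \eqref{eq:burnett.assumption}: this is the reason one needs $k$ up to $4$ and not just $k=1$ as in the compactness-based first two assertions. A secondary subtlety is the low integrability exponent $p_0 > 8/3$, which forces the $L^p$ version of the H-measure construction and the elliptic-gauge identities for the metric to close together; relaxing this threshold would be delicate.
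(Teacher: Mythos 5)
The overall scaffolding of your proposal matches the paper's: construct $H$-measures $\mu^{\psi\psi}$, $\mu^{\varpi\varpi}$ (and implicitly $\mu^{\psi\varpi}$), use the elliptic gauge to obtain strong convergence of the metric components, identify $\ud\nu = 2\,\ud\mu^{\psi\psi} + \tfrac12 e^{-4\psi_0}\,\ud\mu^{\varpi\varpi}$ as the effective stress-energy, pin $\nu$ to the zero mass shell via the localization theorem (a minor difference: the paper passes the $Q_0$ null form to the limit by the algebraic identity $Q_0(\phi,\psi) = \tfrac12\Box(\phi\psi) - \tfrac12\phi\Box\psi - \tfrac12\psi\Box\phi$ rather than by invoking the zero-mass-shell support of $\mu$, but the two are interchangeable here), and for the transport equation run a commutator/energy argument as in \eqref{eq:for.Mink.Stokes}. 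So the first part of your proposal is essentially the paper's argument.

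The gap is in the transport equation, and it is precisely at the sentence ``Because $\Box_{g_i}V_i$ is a controlled product of weakly convergent sequences and the nonlinear terms factor through quadratic forms already covered by $\mu$, the principal part of the limit gives exactly $\int\{g_0^{\alpha\beta}\xi_\alpha\xi_\beta,\widetilde a\}\,\ud\nu = 0$.'' This is false as stated. Plugging the equations into the commutator identity produces \emph{trilinear} expressions of the form $\la A\,\rd_t V_i,\; g_i^{-1}(\ud V_i,\ud V_i)\ra$ (from the semilinear null forms) and $\la A\,\rd_t V_i,\; (\Box_{g_i}-\Box_{g_0})V_i\ra$ (from the quasilinear correction). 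Neither is a ``quadratic form covered by $\mu$''; a priori each is only $O(1)$, not $o(1)$. The paper needs three separate compensated-compactness mechanisms to dispose of them: (i) for the semilinear $Q_0$-type terms, the trilinear identity of Section~\ref{sec:Q0.trilinear} (integrate by parts using $Q_0(\phi,\phi) = \tfrac12\Box\phi^2 - \phi\Box\phi$ and the $L^3$ smallness of $V_i$ itself, not just its weak convergence); (ii) for the quasilinear term $[\rd A, g_i - g_0]\rd V_i$, the elliptic-wave trilinear compensated compactness of Section~\ref{sec:elliptic.wave}, which is the genuine heart of the proof and requires a frequency-space decomposition trading high time-frequency of $n_i - n_0$ against $\Box$ acting on $V_i$ or spatial derivatives of $n_i - n_0$ (this is exactly why assumption (3) with $k$ up to $4$ and the $W^{1,\infty}$ strong convergence of $(\gamma_i,\beta_i)$ are needed); and (iii) for the cross terms arising because the $\psi$-equation sources $\varpi$ and vice versa, a cancellation coming from the \emph{wave map structure} (Section~\ref{sec:wave.map.structure}) — these terms would otherwise produce the off-diagonal measure $\mu^{\psi\varpi}$, which is \emph{not} controlled by $\nu$ and is not sign-definite. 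Your proposal does not acknowledge any of these three points, and replacing them with ``factor through quadratic forms covered by $\mu$'' is not a fixable imprecision but a genuinely missing idea. You correctly flag that ``making the error estimates rigorous requires careful exploitation of \eqref{eq:burnett.assumption},'' but the issue is not that a correct formal argument needs quantitative errors: the formal cancellation itself hinges on (i)--(iii), none of which appear in the proposal.

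A smaller remark: to apply the localization theorem you need compactness of $\Box_{g_0}(\psi_i - \psi_0)$ in $H^{-1}_{\mathrm{loc}}$, not a bound on $\Box_{g_i}\psi_i$. This is fixable — write $(\Box_{g_i}-\Box_{g_0})$ in divergence form and use the uniform convergence $g_i \to g_0$ together with the $L^{p_0}$ bound on $\rd V_i$ — but the proposal should say so explicitly rather than substituting the wrong operator.
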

Let us make some first comments on our theorem. 
\begin{itemize}
	\item The massless Vlasov measure $\nu$ in Theorem~\ref{thburnett} is chosen to be a suitably defined microlocal defect measure corresponding to the convergence of $(\psi_i, \varpi_i)$; see Proposition~\ref{prop:nu} and \eqref{def:dnu}. In particular, we use tools introduced by G\'erard \cite{Gerard} and Tartar \cite{Tartar}.
	\item More recent work by Guerra--Teixeira da Costa \cite{GuerraTeixeira} relaxed the assumptions in \eqref{eq:burnett.assumption}. In particular, they only imposed assumptions up to second order derivatives.
	\end{itemize}

Before giving some ideas of the proof, we recall some notions about microlocal defect measures in Section~\ref{secmes}. Then in Section~\ref{secpart1} we will sketch the proof of the first part of Theorem~\ref{thburnett}. Finally, in Section~\ref{secpart2}  we sketch the proof of the second part of Theorem~\ref{thburnett}.

\subsection{Preliminaries on microlocal defect measures}\label{secmes}


Let $\{u_i\}_{i=1}^\infty$ be a sequence of functions $\Omega \to \m R$, where $\Omega\subset \m R^k$ is open, which converges \emph{weakly} in $L^2(\Omega)$ to a function $u$. In general, after passing to a subsequence, $|u_i|^2 - |u|^2$ converges to a non-zero measure. The support of this measure is the position at which strong convergence fails. The microlocal defect measure, in contrast, is a tool which captures both the position and the frequency of this failure of strong convergence.

For instance, if $u_i=i^\frac{d}{2}\chi(i(x-x_0))$ (with $\chi \in C^\infty_c$) so that $|u_i|^2$ concentrates to a delta measure, then the corresponding microlocal defect measure is given by $\de_{x_0} \otimes \nu$, where $\de_{x_0}$ is the spatial delta measure and $\nu$ is a uniform measure on the cotangent space. On the other hand, suppose $u_i(x)=\chi(x)\cos\left(i(x\cdot\omega)\right)$ so that $u_i$ oscillates in a particular frequency $\omega$. Then the corresponding microlocal defect measure is $|\chi|^2 \ud x \otimes \de_{\omega}$, where $\de_{\omega}$ is the delta measure concentrated at the  frequency $\omega$. See \cite{Tartar} for further discussions.

Before defining microlocal defect measures, we need to recall some objects of pseudo-differential calculus. In the rest of this subsection, we fix\footnote{$k$ will be $3=2+1$ in the rest of this section, and will be $4=3+1$ later in Section~\ref{sec:wave.Burnett}.} $k \in \mathbb N$. Denote by $T^*\mathbb R^k$ the cotangent bundle of $\mathbb R^k$ with coordinates $(x,\xi)\in \mathbb R^k \times \mathbb R^k$.

\begin{df}
	\begin{enumerate}
		\item For $m\in \mathbb Z$, define the symbol class 
		$$S^m\doteq \{a:T^*\mathbb R^k \to \mathbb C: a\in C^{\infty},\, \forall \alp,\bt,\, \exists C_{\alp,\bt}>0,\, |\rd_x^\alp \rd_\xi^\bt a(x,\xi)| \leq C_{\alp,\bt} (1+|\xi|)^{m-|\bt|} \}.$$
		\item Given a symbol $a \in S^m$, define the operator $\mathrm{Op}(a):\mathcal S(\mathbb R^k)\to \mathcal S(\mathbb R^k)$ by
		$$(\mathrm{Op}(a)u)(x) \doteq \f{1}{(2\pi)^k}\int_{\mathbb R^k} \int_{\mathbb R^k} e^{i(x-y)\cdot \xi} a(x,\xi) u(y) \,\ud y\, \ud \xi.$$
		We say that $A=\mathrm{Op}(a)$ is a pseudo-differential operator of order $m$ with symbol $a$. If moreover {$a(x,\xi) = a_{\mathrm{prin}}(x,\xi) + a_{\mathrm{error}}(x,\xi)$ for $|\xi|\geq 1$, where $a_{\mathrm{prin}}(x,\lambda\xi) = \lambda^m a(x,\xi)$ for all $\lambda>0$, and $a_{\mathrm{error}} \in S^{m-1}$,} we say that ${a_{\mathrm{prin}}}$ is the principal symbol of $A$.
	\end{enumerate}
\end{df}

We now turn to the discussion of microlocal defect measures \cite{Gerard, Tartar}. The following is a special case of \cite[Theorem~1]{Gerard}. 

%
%

\begin{thm}[Existence of microlocal defect measures]\label{thm:existenceMDM}
	Let $\{u_i\}_{i=1}^{\infty} \in L^2(\mathbb R^k;\mathbb C)$ be a bounded sequence in $L^2(\mathbb R^k;\mathbb C)$ such that $u_i \rightharpoonup 0$ weakly in $L^2(\mathbb R^k;\mathbb C)$.
	
	Then there exists a subsequence $\{u_{i_k}\}_{k=1}^{\infty}$ and a non-negative Radon measure $\mu$ on $S^*\mathbb R^k$ such that the following holds for every $0$-th pseudo-differential operator $A$ of order $0$ with principal symbol $a(x,\xi)$ which is compactly supported in $x$ and satisfies $a(x,\lambda \xi) = a(x,\xi)$, $\forall \lambda >0$:
	\begin{equation}\label{eq:mdm.def}
	\lim_{k\to +\infty}  \langle A u_{i_k}, u_{i_k} \rangle_{L^2(\mathbb R^k;\mathbb C)} = \int_{S^*\mathbb R^k} a(x,\xi) \,\ud \mu.
	\end{equation}
\end{thm}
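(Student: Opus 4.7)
The plan is to construct a non-negative linear functional on principal symbols and invoke the Riesz--Markov--Kakutani representation theorem. There are three main ingredients: a uniform bound plus diagonalization, a reduction to the principal symbol, and a positivity argument via sharp G\aa rding.

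First, by the Calder\'on--Vaillancourt theorem, any $A = \mathrm{Op}(a)$ with $a \in S^0$ is bounded on $L^2(\m R^k; \m C)$ with norm controlled by finitely many seminorms of $a$. Hence $|\la A u_i, u_i\ra_{L^2}| \leq \|A\|_{L^2 \to L^2} \sup_i \|u_i\|_{L^2}^2$ is uniformly bounded in $i$. The space of admissible symbols --- namely, $S^0$ symbols whose principal part is $0$-homogeneous in $\xi$ for $|\xi| \geq 1$ and compactly supported in $x$ --- is separable, so I choose a countable dense subset $\{a^{(n)}\}_{n=1}^\i$ and use a Cantor diagonal argument to extract a subsequence $\{u_{i_k}\}$ along which $\la \mathrm{Op}(a^{(n)}) u_{i_k}, u_{i_k}\ra$ converges for every $n$. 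The Calder\'on--Vaillancourt bound applied to $a^{(n)} - a$ upgrades this to convergence for every admissible $a$; call the limit $L(a)$.

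Next, I show that $L$ depends only on the principal symbol. If $b \in S^{-1}$ has $x$-support in a compact set $K$, then $\mathrm{Op}(b) u_{i_k}$ is bounded in $H^1(K)$, so Rellich's compact embedding together with $u_{i_k} \wkto 0$ in $L^2$ yields $\mathrm{Op}(b) u_{i_k} \to 0$ strongly in $L^2$. Pairing with the bounded sequence $u_{i_k}$ gives $\la \mathrm{Op}(b) u_{i_k}, u_{i_k}\ra \to 0$, so $L$ descends to a linear functional on $C_c^0(S^* \m R^k)$.

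The key analytic step --- and the main obstacle --- is to prove $L(a) \geq 0$ whenever $a \geq 0$ on $S^* \m R^k$. Here the sharp G\aa rding inequality enters: for any $a \in S^0$ with $a \geq 0$, there exists $C > 0$ such that
\[
\la \mathrm{Op}(a) u, u \ra \geq - C \|u\|_{H^{-1/2}}^2 \quad \forall u \in \q S(\m R^k).
\]
Introducing a cutoff $\chi \in C_c^\i(\m R^k)$ equal to $1$ on $\mathrm{supp}_x a$ reduces the task to controlling $\|\chi u_{i_k}\|_{H^{-1/2}}$; since $L^2(\mathrm{supp}\,\chi) \hookrightarrow H^{-1/2}(\mathrm{supp}\,\chi)$ is compact by Rellich, the weak convergence $\chi u_{i_k} \wkto 0$ in $L^2$ upgrades to $\chi u_{i_k} \to 0$ strongly in $H^{-1/2}$, and passing to the limit yields $L(a) \geq 0$. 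The Riesz--Markov--Kakutani theorem then produces a unique non-negative Radon measure $\mu$ on $S^* \m R^k$ representing $L$, which is the desired microlocal defect measure satisfying \eqref{eq:mdm.def}. The delicate point is precisely that the half-derivative loss in sharp G\aa rding must be matched exactly by the half-derivative gain from Rellich's compact embedding; all other steps are soft functional analysis.
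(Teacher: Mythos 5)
The paper does not prove this statement; it cites it as a special case of G\'erard's Theorem~1 in \cite{Gerard}. Your outline is the standard route to that theorem --- Calder\'on--Vaillancourt for uniform bounds, separability plus diagonal extraction, Rellich compactness to kill lower-order contributions and make the functional well-defined on principal symbols, sharp G\aa rding for positivity, and Riesz--Markov--Kakutani to represent --- and as far as I can tell it is essentially the argument in G\'erard's paper, so the approaches coincide.

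The one place where the sketch is thinner than it should be is the positivity step. Sharp G\aa rding gives $\la \mathrm{Op}(a)u,u\ra \geq -C\|u\|_{H^{-1/2}}^2$ with the \emph{global} $H^{-1/2}$ norm of $u$ on the right, and since $u_{i_k}$ is not compactly supported there is no reason for $\|u_{i_k}\|_{H^{-1/2}}$ to go to zero (e.g.\ a translating bump converges weakly to $0$ in $L^2$ with constant $H^{-1/2}$ norm). You say ``introducing a cutoff reduces the task to controlling $\|\chi u_{i_k}\|_{H^{-1/2}}$,'' but this reduction is not automatic: one must first use the symbol calculus to write $\mathrm{Op}(a) = \chi\,\mathrm{Op}(a)\,\chi + R$ with $R$ of order $-1$ and compact $x$-support (using $\chi^2 a = a$), dispose of $\la Ru_{i_k},u_{i_k}\ra$ by the same Rellich argument as in your step on lower-order symbols, and only then apply sharp G\aa rding to the compactly supported $\chi u_{i_k}$. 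That commutator manipulation is the step that actually localizes the $H^{-1/2}$ loss; it is worth making explicit, as written the sentence reads as if it were free. With that insertion the proof is complete and correct.
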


The measure $\ud \mu$ in Theorem~\ref{thm:existenceMDM} is called a \emph{microlocal defect measure} (or \emph{$H$-measure}). 

Let us already point out two important properties of microlocal defect measures that are important for the discussion below. First, they satisfy the following localization property:
\begin{thm}[Localization of microlocal defect measures, Corollary~2.2 in \cite{Gerard}]\label{thm:localization}
	Let $\{u_i\}$ be a sequence such that $u_i \rightharpoonup 0$ weakly in $L^2(\mathbb R^k,\mathbb C)$, and $\mu$ is a microlocal defect measure such that \eqref{thm:existenceMDM} holds (without passing to a subsequence). Let $P$ be an $m$-th order differential operator with principal symbol $p = \sum_{|\alp|= m} a_\alp (i\xi)^\alp$ for some smooth functions $a_\alp$. If $\{Pu_i\}_{i=1}$ is relatively compact in $H^{-m}_{loc}(\mathbb R^k,\mathbb C)$, then 
	$$p\,\ud\mu = 0.$$
\end{thm}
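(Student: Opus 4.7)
The plan is to test the measure $\mu$ against positive combinations of the principal symbol $p$, using the defining property \eqref{eq:mdm.def} of microlocal defect measures applied to cleverly chosen compositions of pseudodifferential operators. Concretely, fix an arbitrary principal symbol $a(x,\xi)$ of order $0$, compactly supported in $x$ and homogeneous of degree $0$ in $\xi$ for $|\xi|\ge 1$. I first construct a symbol $c(x,\xi) \in S^{-m}$ whose principal part (at $|\xi|\ge 1$) equals $a(x,\xi)\,\overline{p(x,\xi)}/|\xi|^{2m}$; a standard way is to multiply by $1-\chi(\xi)$ where $\chi \in C^\infty_c$ is $1$ near $0$ and vanishes for $|\xi|\ge 1$, removing the singularity at the origin while preserving the principal symbol. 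Let $C=\mathrm{Op}(c)$.

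Next, I would exploit the pseudodifferential calculus: the composition $CP$ is a pseudodifferential operator of order $0$ with principal symbol $c\cdot p$, which equals $a\,|p|^2/|\xi|^{2m}$ for $|\xi|\ge 1$. This new principal symbol is homogeneous of degree $0$ and compactly supported in $x$, so it is an admissible test symbol in Theorem~\ref{thm:existenceMDM}. Applying \eqref{eq:mdm.def} to $CP$ yields
\begin{equation*}
\langle CPu_i, u_i\rangle_{L^2} \longrightarrow \int_{S^*\mathbb R^k} \frac{a(x,\xi)\,|p(x,\xi)|^2}{|\xi|^{2m}}\,\ud\mu(x,\xi).
\end{equation*}

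The second step is to compute the same limit in a different way using the hypothesis on $Pu_i$. Rewriting $\langle CPu_i, u_i\rangle_{L^2}=\langle Pu_i, C^*u_i\rangle$, I first observe that the weak $L^2$ convergence $u_i\rightharpoonup 0$ forces $Pu_i \to 0$ in $\mathcal{D}'$, so the relative compactness of $\{Pu_i\}$ in $H^{-m}_{loc}$ upgrades this to strong convergence $\|Pu_i\|_{H^{-m}(K)}\to 0$ for every compact $K$. On the other hand, $C^*$ has order $-m$, so $\{C^*u_i\}$ is bounded in $H^m$; combined with the pseudolocality of $C^*$ (its symbol is compactly supported in $x$, so inserting a compact cutoff $\phi\equiv 1$ on the $x$-support of $a$ produces only smoothing errors), the duality pairing $\langle Pu_i, C^*u_i\rangle$ is controlled by $\|Pu_i\|_{H^{-m}(\mathrm{supp}\,\phi)}\|\phi C^*u_i\|_{H^m}$ plus negligible smoothing terms, hence tends to $0$.

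Combining the two evaluations gives $\int a\,|p|^2/|\xi|^{2m}\,\ud\mu=0$ for every admissible test symbol $a$. Choosing arbitrary non-negative $a$ shows that $\mu$ is supported where $p$ vanishes, which is precisely the assertion $p\,\ud\mu=0$. The only genuine obstacle is making the pseudodifferential bookkeeping airtight: controlling the lower-order terms in the composition $CP$, handling the cutoff near $\xi=0$ when converting $\bar p/|\xi|^{2m}$ into a bona fide symbol, and using pseudolocality to promote the global operator identity $\langle CPu_i,u_i\rangle=\langle Pu_i, C^*u_i\rangle$ into one involving only $H^{-m}_{loc}$ and $H^m_{loc}$ norms, which are the ones controlled by our hypotheses. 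None of these steps introduces real difficulty, but together they carry the whole argument.
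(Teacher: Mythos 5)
Your proposal is correct and reproduces, in essentially the same form, the standard argument behind G\'erard's localization result (Proposition~2.1 and Corollary~2.2 in the cited reference): compose with an order $-m$ operator $C$ whose principal symbol is $a\,\bar p/|\xi|^{2m}$, evaluate $\langle CPu_i,u_i\rangle$ once via the defining property of the microlocal defect measure and once via the strong $H^{-m}_{\mathrm{loc}}$ convergence of $Pu_i$ paired against the $H^m$-bounded $C^*u_i$, and conclude $|p|^2\,d\mu=0$ by positivity. The paper itself does not give a proof of this theorem, only the citation to G\'erard, and your outline matches that source; the pseudolocality and localization issues you flag at the end are real but routine, and you have correctly identified where the bookkeeping lives.
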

In particular, as a consequence of Theorem~\ref{thm:localization}, it can be shown that microlocal defect measures associated to sequences of solutions to the linear wave equation are supported on the zero mass shell. The second important property is that microlocal defect measures associated to sequences of solutions to the linear wave equation also satisfy the Vlasov equation. 

\subsection{The form of the effective stress-energy tensor}\label{secpart1}
We now sketch the proof of the first part of Theorem \ref{thburnett}. Let $h_i=(g_i,\psi_i,\varpi_i)$ be a sequence of solutions of \eqref{eq:U1vac} in elliptic gauge which satisfies the conditions (\ref{item:burnett.1}) and (\ref{item:burnett.2}) of the theorem. After multiplying by suitable cut-off functions (which we suppress in the notations), we reduce the problem to compact sets and thus
$$\partial(\psi_i-\psi_0) \rightharpoonup 0, \quad \partial(\varpi_i-\varpi_0) \rightharpoonup 0 \quad \hbox{weakly in $L^2( \m R^{2+1})$.}$$

In particular, we can apply Theorem~\ref{thm:existenceMDM} to show the existence of a microlocal defect measures corresponding to the derivatives of $\psi_i - \psi_0$ and $\varpi_i - \varpi_0$. In fact, an application of Theorem~\ref{thm:localization} shows that the microlocal defect measures take the form as in the following proposition (see, e.g., \cite[Lemma~3.10]{Tartar}):
\begin{proposition}[Existence of microlocal defect measures]\label{prop:nu}
	There exist non-negative Radon measures\footnote{We note that we have used slightly different conventions here as our original \cite{HL.Burnett}: here $\ud \nu^{\psi}$, $\ud \nu^{\varpi}$ are defined to act on functions which are homogeneous of order $2$ in $\xi$, while in \cite{HL.Burnett}, they act on homogenous functions of order $0$.} $\ud \nu^\psi$, $\ud \nu^\varpi$ on $S^*\mathbb R^{2+1}$ such that after passing to a subsequence (which is again labelled as $(\psi_i, \varpi_i)$), the following holds for any $A= \mathrm{Op}(a)$, where $a(x,\xi)$ which is compactly supported in $x$ and $a(x,\lambda \xi) = a(x,\xi)$:
	\begin{align*}
	\lim_{i\to \infty} \langle \rd_\alp (\psi_i-\psi_0), A \rd_\bt(\psi_i-\psi_0) \rangle_{L^2(\mathbb R^{2+1}, g_0)} &= \int_{S^*\mathbb R^{2+1}} a \xi_\alp \xi_\bt \,\ud \nu^\psi,\\
	\lim_{i\to \infty} \langle \rd_\alp (\varpi_i-\varpi_0), A \rd_\bt(\varpi_i-\varpi_0) \rangle_{L^2(\mathbb R^{2+1}, g_0)}& = \int_{S^*\mathbb R^{2+1}} a \xi_\alp \xi_\bt \,\ud \nu^\varpi.
\end{align*}
\end{proposition}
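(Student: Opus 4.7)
The plan is to realize $\ud\nu^\psi$ and $\ud\nu^\varpi$ as the scalar measures obtained from the microlocal defect matrix of the vector-valued sequences $\partial(\psi_i-\psi_0)$ and $\partial(\varpi_i-\varpi_0)$, using the curl-free structure of a gradient to factor out the $\xi_\alp\xi_\bt$.

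Set $f_i=\psi_i-\psi_0$ and $F_i=\varpi_i-\varpi_0$. By hypothesis (\ref{item:burnett.2}), $\partial f_i,\partial F_i \rightharpoonup 0$ weakly in $L^{p_0}_{loc}$ with $p_0>8/3>2$, so on each compact set these are bounded in $L^2$ and converge weakly to $0$ in $L^2$. Apply the vector-valued version of Theorem~\ref{thm:existenceMDM} (cf.\ \cite{Gerard}) to the $\mathbb R^3$-valued sequence $v_i^\alp \doteq \rd_\alp f_i$, localized on an exhaustion of $\mathbb R^{2+1}$ by compact sets and combined with a diagonal extraction. Absorbing the smooth positive factor $\sqrt{-\det g_0}$ from the $g_0$-inner product into the symbol, this yields, along a subsequence, a non-negative Hermitian matrix-valued Radon measure $\mu^\psi_{\alp\bt}$ on $S^*\mathbb R^{2+1}$ with
\begin{equation*}
\lim_{i\to\infty}\langle \rd_\alp f_i, A\rd_\bt f_i\rangle_{L^2(\mathbb R^{2+1},g_0)} = \int_{S^*\mathbb R^{2+1}} a\,\ud\mu^\psi_{\alp\bt}
\end{equation*}
for every $0$-th order pseudodifferential $A=\mathrm{Op}(a)$ of the stated form. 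Repeat for $F_i$ to obtain $\mu^\varpi_{\alp\bt}$.

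The key step is to factor $\mu^\psi_{\alp\bt}=\xi_\alp\xi_\bt\,\ud\nu^\psi$. Because $v_i^\alp=\rd_\alp f_i$ is a gradient, the Schwarz identity $\rd_\alp v_i^\bt-\rd_\bt v_i^\alp\equiv 0$ is trivially compact in $H^{-1}_{loc}$, so the vector-valued localization theorem (Theorem~\ref{thm:localization}) forces the symbol-level relation
\begin{equation*}
\xi_\alp \mu^\psi_{\bt\gamma}-\xi_\bt\mu^\psi_{\alp\gamma}=0\quad\text{for all }\alp,\bt,\gamma,
\end{equation*}
as an equality of measures on $S^*\mathbb R^{2+1}$. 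Combined with Hermitian symmetry $\mu^\psi_{\alp\bt}=\mu^\psi_{\bt\alp}$, an elementary algebraic argument (cf.\ \cite[Lemma~3.10]{Tartar}) gives a scalar Radon measure $\nu^\psi$ on $S^*\mathbb R^{2+1}$ such that $\mu^\psi_{\alp\bt}=\xi_\alp\xi_\bt\,\ud\nu^\psi$: working in each chart $\{\xi_\alp\ne 0\}$ of $S^*\mathbb R^{2+1}$ one sets $\nu^\psi \doteq \mu^\psi_{\alp\alp}/\xi_\alp^2$, and the curl-free relation makes the definition independent of the chart. Positive semidefiniteness of $\mu^\psi$ ensures $\nu^\psi\ge 0$. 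The identical argument produces $\nu^\varpi$. Substituting the factorization into the identity above yields exactly the two limits asserted in the proposition.

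The main obstacle is the algebraic extraction of $\nu^\psi$ from $\mu^\psi_{\alp\bt}$: it requires the correct vector-valued formulation of the localization theorem and a careful verification that the local definitions $\mu^\psi_{\alp\alp}/\xi_\alp^2$ glue to a well-defined non-negative Radon measure on the cosphere bundle. Once this factorization is in place, everything else is a soft compactness argument.
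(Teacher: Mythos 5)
Your proposal is correct and follows essentially the same route the paper sketches: apply the (vector-valued) existence theorem of G\'erard to the gradients to get a matrix-valued microlocal defect measure, then use the localization theorem together with the trivial curl-free relation $\partial_\alp\partial_\bt f_i - \partial_\bt\partial_\alp f_i = 0$ to force the factorization $\mu_{\alp\bt} = \xi_\alp\xi_\bt\,\ud\nu$, exactly as in Tartar's Lemma~3.10 cited in the text. The only cosmetic point is that since $f_i$, $F_i$ are real the matrix measure is real symmetric rather than merely Hermitian, but that changes nothing in the algebra.
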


We are now ready to pass to the weak limit in \eqref{eq:U1vac}. Recall that we have a sequence $(\psi_i, \varpi_i)$ solving
\begin{equation}\label{eq:psi.om.nonlinear}
		\left\{
\begin{array}{l}
	\Box_{g_i} \psi_i + \f 12 e^{-4\psi_i} g^{-1}(\ud \varpi_i, \ud \varpi_i) = 0,\\
	\Box_{g_i} \varpi_i - 4 g_i^{-1} (\ud \varpi_i, \ud \psi_i) = 0.
\end{array}
\right.
\end{equation}
For $\varphi \in \{\psi,\varpi\}$, $\Box_{g_i} \varphi_i = (-\det g_i)^{-\f 12} \rd_\alp (g_i^{\alp\bt} (-\det g_i)^{\f 12}\rd_\bt \varphi_i)$. Since $(-\det g_i)^{-\f 12}$ has a strong limit and $\rd_\alp (g_i^{\alp\bt} (-\det g_i)^{\f 12}\rd_\bt \varphi_i)$ is in divergence form, it follows that $\Box_{g_i} \varphi_i \rightharpoonup   \Box_{g_0} \varphi_0$ weakly. Moreover, the quadratic semilinear terms satisfies the null condition, and thus passes to the weak limit using an argument similar to \eqref{eq:null.CC.Q0}. Therefore, $(\psi_0,\varpi_0)$ satisfy the same system of equations.

Let us now consider the third equation in \eqref{eq:U1vac}:
\begin{equation}\label{ric}\mathrm{Ric}_{\alp\bt}(g_i)= 2\partial_\alp \psi_i \partial_\bt \psi_i + \f 12 e^{-4\psi_i} \rd_\alp \varpi_i \rd_\bt \varpi_i.
\end{equation}
We claim that the left-hand side converges weakly to $\mathrm{Ric}_{\alp\bt}(g_0)$.  To see this, we refer the reader to the expression of the Ricci tensor in \eqref{elliptic.1}--\eqref{elliptic.4} and
\begin{align}
	\label{Rij} \mathrm{Ric}_{jb} = &\delta_{jb}\left(-\Delta \gamma-\frac{1}{2N}\Delta n\right)
	-\frac{1}{n}(\partial_t-\beta^k\partial_k)H_{jb}-2e^{-2\gamma} H_b{ }^{\ell} H_{j\ell} \\
	\notag &+\frac{2}{n}\partial_{(j|} \beta^k H_{k|b)}-\frac{1}{n}\left( \partial_b \partial_j n -\frac{1}{2}\delta_{jb}\Delta n -\left(2\delta_{(b}^k\partial_{j)} \gamma -\delta_{jb} \de^{\ell k}\partial_{\ell} \gamma\right)\partial_k n \right).
\end{align}
The key is to note that any quadratic term in the derivatives of the metric components must either have a factor of $H$ or a factor of $\nabla \gamma$. Now even though we a priori only assume boundedness of all first derivatives, both $H_i$ and $\nabla \gamma_i$ in fact converge \emph{strongly} in $L^2$, which therefore results in $\mathrm{Ric}_{\alp\bt}(g_i) \rightharpoonup \mathrm{Ric}_{\alp\bt}(g_0)$ weakly. To see the strong convergence of $H_i$ and $\nabla \gamma_i$, we use (1) the spatial elliptic equations and (2) the equations involving $\partial_t$ that they satisfy to show that $H_i,\nabla_i \gamma$ has uniform $W^{1,\frac{p_0}{2}}(\m R^{2+1})$ bounds. By compactness of the embedding $W^{1,\frac{p_0}{2}}(\m R^{2+1}) \hookrightarrow L^2(\m R^{2+1})$, the desired strong convergence follows.


On the other hand, the weak limit of the right-hand side of \eqref{ric} can be expressed in term of the microlocal defect measures given in Proposition~\ref{prop:nu}.  Therefore, defining
\begin{equation}\label{def:dnu}
	\ud\nu\doteq 2\ud \nu^\psi + \f 12 e^{-4\psi_0} \, \ud \nu^\varpi,
\end{equation}
it follows that the last equation in \eqref{eq:U1vac.vlasov} is satisfied for every vector field $Y\in C^\infty_c(\Omega)$.

Finally, the fact that the measure $\ud \nu$ is supported on the set $\{(x,\xi) \in S^*\mathcal M: g^{-1}_0(\xi,\xi) = 0\}$ can easily be obtained from Theorem \ref{thm:localization} and the fact that $\Box_{g_0}(\psi_i-\psi_0)$ and $\Box_{g_0}(\varpi_i-\varpi_0)$ are compact in $H^{-1}$.

\subsection{The transport equation for the microlocal defect measure}\label{secpart2}

The main difficulty for Theorem~\ref{thburnett} is therefore to establish the transport equation \eqref{eq:transport.def}. As mentioned in Section~\ref{secmes}, microlocal defect measures arising from solutions to linear equations satisfy the massless Vlasov equation (see for instance \cite{Francfort, FrancfortMurat, Tartar}). We review this fact in Section~\ref{sec:Mink.warmup}, and then explain in Sections~\ref{sec:Q0.trilinear}--\ref{sec:wave.map.structure} why the transport equation still holds in our nonlinear setting.

\subsubsection{A Minkowskian warm-up}\label{sec:Mink.warmup}
We start with the simplest possible case. Denote the Minkowski metric by $m$ and suppose the linear wave equation $\Box_m \phi_i = f_i$ holds with $\partial \phi_i \rightharpoonup 0$ weakly in $L^2$ and $f_i \to 0$ in $L^2$ norm. We argue as in Proposition~\ref{prop:nu} that (after passing to a subsequence which is not relabelled) there exists a non-negative Radon measure $\nu$ such that the following holds for $A$, $a$ as in Proposition~\ref{prop:nu}:
$$\lim_{i\to \infty} \langle \rd_\alp \phi_i, A \rd_\bt\phi_i \rangle_{L^2(\mathbb R^{2+1}, m)} = \int_{S^*\mathbb R^{2+1}} a \xi_\alp \xi_\bt \,\ud \nu.$$
Moreover, arguing with Theorem \ref{thm:localization} as in Section~\ref{secpart1}, $\nu$ is supported on the zero mass shell.

We now derive an analogue of the transport equation \eqref{eq:transport.def} adapted to this setting. For a pseudo-differential operator $A$ of order $0$ with real principal symbol $a$, define
$$\mathbb T^A_{\mu\nu}[\phi_i, \phi_i] \doteq \rd_{(\mu|} \phi_i \rd_{|\nu)} A \phi_i +  \f 12 m_{\mu\nu} m^{\alp\bt} \rd_{(\alp|} \phi_i \rd_{|\bt)} A \phi_i. $$
It then follows that
\begin{equation}\label{eq:for.Mink.Stokes}
\nabla^\mu (\mathbb T^A_{\mu\nu}[\phi_i, \phi_i] (\rd_t)^\nu) = \f 12 \Box_m \phi_i \rd_t A\phi_i + \f 12 \rd_t \phi_i A \Box_m \phi_i + \f 12 \rd_t\phi_i [\Box_m, A] \phi_i,
\end{equation}
We integrate \eqref{eq:for.Mink.Stokes} and pass to the $i\to \infty$. Note that the left-hand side of \eqref{eq:for.Mink.Stokes}, being an exact divergence, is integrated away, while the first two terms on the right-hand side of \eqref{eq:for.Mink.Stokes} vanish in the $i\to \infty$ limit since $f_i \to 0$ in the $L^2$ norm. Thus, the only contribution comes from the third term on the right-hand side of \eqref{eq:for.Mink.Stokes}. Noting that the principal symbol of $[\Box_m, A]$ is proportional to $\{m^{\alp\bt}\xi_\alp\xi_\bt, a\}$ (see, e.g., \cite[Theorem 2, p.237]{Stein}), we obtain
\begin{equation}\label{eq:stupid.model.almost}
\xi_t \{m^{\alp\bt}\xi_\alp\xi_\bt, a\} \, \ud \nu = 0.
\end{equation}
Finally, noting that since $\xi_t \neq 0$ on the zero mass shell (and hence on the support of $\nu$), we see that the transport equation \eqref{eq:transport.def} holds for $g = m$ and for any $\widetilde{a}$ after using \eqref{eq:stupid.model.almost} $a = \f{\widetilde{a}}{\xi_t}$.

While in the above derivative, we used the Minkowski metric for simplicity, the \emph{same} argument works for a sequence of solutions to the linear wave equation with any \emph{fixed} metric after suitably collecting the extra terms (see \cite{Francfort, FrancfortMurat} for the stationary case, and \cite{GuerraTeixeira, HL.Burnett} for the general case).

Therefore, the main difficulty comes from the fact that $(\psi_i, \varpi_i)$ do not satisfy linear wave equations on fixed background. Instead, they satisfy \eqref{eq:psi.om.nonlinear} with both quasilinear terms from the metric and the semilinear terms  $\f 12 e^{-4\psi_i} g^{-1}(\ud \varpi_i, \ud \varpi_i)$, $4 g_i^{-1} (\ud \varpi_i, \ud \psi_i)$. As we will explain in the next three subsubsections, even though these extra nonlinear terms arise in the derivation of the transport equation \eqref{eq:transport.def}, the precise structure of these terms allows for \emph{compactness by compensation} so that they in fact do not contribute.


\subsubsection{Trilinear compensated compactness for three waves}\label{sec:Q0.trilinear}
The following structure is at the heart of handling the semilinear terms in the derivation of the transport equation \eqref{eq:transport.def}. In particular, it relies on the nonlinear terms in \eqref{eq:psi.om.nonlinear} being null forms of the type $Q_0(\phi^{(1)}, \phi^{(2)}) = g_0^{\alp\bt} \rd_\alp \phi^{(1)}_i \rd_\bt \phi^{(2)}_i$. Consider three sequences of functions $\phi^{(j)}_i$, with $j=1, 2, 3$ such that 
\begin{equation}\label{eq:trilinear.wave.a.priori.bound}
 \| \phi^{(j)}_i\|_{L^3} \to 0, \quad \| \rd \phi^{(j)}_i\|_{L^3} ,\, \|\Box_{g_0} \phi_i^{(j)} \|_{L^3}\lesssim 1.
\end{equation}
Fix a pseudo-differential operator $A$ of order $0$. A priori, the bound on $\| \rd \phi^{(j)}_i\|_{L^3}$ only shows that the term $\langle A (g_0^{\alp\bt} \rd_\alp \phi^{(1)}_i \rd_\bt \phi^{(2)}_i) ,\partial_t \phi^{(3)}_i \rangle_{L^2(\RR^2, g_0)}$ is $O(1)$. However, since $ g_0^{\alp\bt} \rd_\alp \phi^{(1)}_i \rd_\bt \phi^{(2)}_i = \f 12 \Box_{g_0}(\phi^{(1)}_i \phi^{(2)}_i) - \f 12\phi^{(1)}_i \Box_{g_0} \phi^{(2)}_i -  \f 12\phi^{(2)}_i \Box_{g_0} \phi^{(1)}_i$,
integrating by parts and using \eqref{eq:trilinear.wave.a.priori.bound} imply that the term in fact tends to $0$:
\begin{align*}
	\langle A (g_0^{\alp\bt} \rd_\alp \phi^{(1)}_i \rd_\bt \phi^{(2)}_i)),\partial_t \phi^{(3)}_i \rangle_{L^2(\RR^2, g_0)} =&\:  \f 12 \langle A\Box_{g_0}(\phi^{(1)}_i\phi^{(2)}_i)  ,\partial_t \phi^{(3)}_i \rangle_{L^2(\RR^2, g_0)} +o(1) \\
	=&\:  - \f 12 \langle A\partial_t (\phi^{(1)}_i\phi^{(2)}_i)  ,\Box_{g_0} \phi^{(3)}_i \rangle_{L^2(\RR^2, g_0)} + o(1) = o(1).
\end{align*}
As pointed out in \cite{GuerraTeixeira}, this simple observation can also be viewed as a consequence of the div-curl lemma in \cite{Murat.compensation, Tartar.compensation}.

\subsubsection{Elliptic-wave trilinear compensated compactness}\label{sec:elliptic.wave}
Another type of terms arising in \eqref{eq:psi.om.nonlinear} are the quasilinear terms $(\Box_{g_i} - \Box_{g_0})\psi_i$ and $(\Box_{g_i} - \Box_{g_0})\varpi_i$. In particular, in the derivation of the transport equation, we need to show the vanishing of some trilinear terms of the form
\begin{equation}\label{eq:difficult.elliptic.wave}
\langle [\rd A, g_i-g_0] \partial (\psi_i - \psi_0),\partial_t (\psi_i - \psi_0) \rangle_{L^2(\RR^{2+1},g_0)}.
\end{equation}

It is important to observe a few consequences of the elliptic gauge condition.
\begin{itemize}
\item Since the metric components $(\gamma_i, \bt_i, n_i)$ in \eqref{g.form.0} all satisfy elliptic equations (see \eqref{elliptic.1}--\eqref{elliptic.4}), their spatial derivatives $(\nabla \gamma_i, \nabla \bt_i, \nabla n_i)$ converge to $(\nabla \gamma_0, \nabla \bt_0, \nabla n_0)$ \emph{locally uniformly}, say
\begin{equation}\label{eq:spatial.improvement}
|\nabla (\gamma_i - \gamma_0, \bt_i - \bt_0, n_i - n_0)| \ls \lambda_i^{\f 12}.
\end{equation}
\item Using the equations, in fact $\rd_t \gamma_i$, $\rd_t \bt_i$ are also better and converge to their limits in \emph{norm}. In particular, $(\gamma_i, \bt_i)$ in fact converge to their limits in $C^1$.
\end{itemize}

Using these observations, if we have, say, $\gamma_i -\gamma$ in \eqref{eq:difficult.elliptic.wave}, then we can use Calder\'on commutator estimate (see, e.g., \cite[Corollary, p.309]{Stein}), \eqref{eq:burnett.assumption} and the $C^1$ convergence of $\gamma_i$ to bound
\begin{equation}
|\langle [\rd A, \gamma_i-\gamma_0] \partial (\psi_i - \psi_0),\partial_t (\psi_i - \psi_0) \rangle_{L^2(\RR^{2+1},g_0)}| \ls \|\rd (\psi_i - \psi_0) \|_{L^2}^2 \| \gamma_i-\gamma_0\|_{C^1} \to 0.
\end{equation}

The difficulty thus involves $n_i - n_0$, which does not converge in $C^1$. We consider the following model term:
\begin{equation}\label{comN}
 \langle [A,n_i-n_0] \Delta \phi_i , \partial_t \phi_i \rangle.
\end{equation}
The exact form of the term here will be important. To further simplify that exposition, let us assume that (1) $A$ is simply a Fourier multiplier with symbol $a(x, \xi) = m(\xi)$ independent of $x$, and (2) $\Box_{m} \phi_i$ (instead of $\Box_{g_0}\phi_i$ with a variable coefficient wave operator) obeys better bound $|\Box_{m} \phi_i| \ls 1$. Since $\phi_i$ is real-valued, we can also assume that $m$ is even.

Denoting $\xi = (\xi_t, \underline{\xi})$, $\eta = (\eta_t, \underline{\eta})$, we use Plancherel's identity to rewrite
\begin{equation}\label{eq:intro.hard.yet.again}
	\begin{split}
		\mbox{\eqref{comN}}=	&\: \int_{\mathbb R^{2+1}} \partial_t \psi_i \de^{j\ell} \{ (n_i-n_0) A \rd^2_{j\ell} \psi_i - A [(n_i - n_0) \rd^2_{j\ell} \psi_i]\} \,\ud x \\
		= &\: \frac{i}{2}\int_{\m R^{2+1}\times \m R^{2+1} }(\xi_t|\underline{\eta}|^2+ \eta_t|\underline{\xi}|^2)\widehat{(n_i-n_0)}(\eta-\xi)\widehat{\psi_i}(-\eta)\widehat{\psi_i}(\xi)(m(\xi)-m(\eta))\, \ud\xi \,\ud\eta.
	\end{split}
\end{equation}

Roughly speaking $ (\xi_t|\underline{\eta}|^2+ \eta_t|\underline{\xi}|^2)$ corresponds to three derivatives, and hence contributes to $O(\lambda_i^{-3})$ in size. This is sufficient to show that \eqref{comN} is bounded (by \eqref{eq:burnett.assumption}).  To deduce that \eqref{comN} actually tends to $0$, observe that
\begin{itemize}
	\item our main enemy is when $n_i-n_0$ has high $t$-frequency, i.e., $|\eta_t -\xi_t|$ is large (by \eqref{eq:spatial.improvement}), and
	\item we can gain with factors of $\underline{\xi} - \underline{\eta}$ (corresponding to spatial derivatives of $n_i -n_0$) or $\xi_t^2 - |\underline{\xi}|^2$ or $\eta_t^2 - |\underline{\eta}|^2$ (corresponding to $\Box_{g_0}$ acting on $\psi_i$).
\end{itemize}
Now the Fourier multiplier in \eqref{eq:intro.hard.yet.again} can be expressed as
$$ |\underline{\eta}|^2(\xi_t+\eta_t)=\eta_t(\underline{\xi}+\underline{\eta})(\underline{\xi}-\underline{\eta}) + |\underline{\eta}|^2 \frac{\xi_t^2-|\underline{\xi}|^2}{\xi_t-\eta_t}
+|\underline{\eta}|^2 \frac{|\underline{\eta}|^2-\eta_t^2}{\xi_t-\eta_t}+|\underline{\eta}|^2 \frac{(\underline{\xi}+\underline{\eta})\cdot (\underline{\xi} - \underline{\eta})}{\xi_t-\eta_t}.$$
When $\xi_t - \eta_t$ is large, we use the gain in $\xi_t^2-|\underline{\xi}|^2$, $|\underline{\eta}|^2-\eta_t^2$ or $(\underline{\xi} - \underline{\eta})$ to conclude that this term behaves better than expected.

Finally, to remove the assumption that $A$ and the wave operator are both constant-coefficient (in $x$), we ``freeze coefficients'' by localizing in sufficiently small balls.

We mention that in a subsequent work by Guerra--Teixeira da Costa \cite{GuerraTeixeira}, they introduced a slightly different argument to deal with these quasilinear term. Noting again that the time-frequency dominated part poses the most serious difficulty, they introduce frequency localization, take an inverse $\rd_t$ derivative, and then integrate by parts to reveal the cancellation. In particular, their argument avoids the computation in frequency space above.

\subsubsection{The wave map structure}\label{sec:wave.map.structure}
Finally, notice that in \eqref{eq:psi.om.nonlinear}, $\varpi_i$ appears in the $\Box_{g_i} \psi_i$ equation and vice versa. As a result, in the derivation of the transport equation, we need to handle the limit of a term of the form $\la \rd_t(\varpi_i -\varphi_0), g_0^{-1}(\ud \varpi_0, \ud (\psi_i - \psi_0) \ra_{L^2(\RR^{2+1},g_0)}$. Such a term would correspond to a ``cross'' microlocal defect measure between $\varpi_i -\varpi_0$ and $\psi_i - \psi_0$. However, because of the wave map structure, the total contribution of this type of terms would cancel in the derivation of the transport equation for $\nu$!

It turns out that the same phenomenon will play a role in the construction in Section~\ref{vlasov}; see Section~\ref{sec:parametrix.vlasov}. 


\section{Construction of small-amplitude high-frequency spacetimes in $\mathbb U(1)$ symmetry: multiple null dusts}\label{sec:U1.backward}

In this section, we discuss our work \cite{HL.HF} in which we construct high-frequency vacuum spacetimes in an elliptic gauge under $\m U(1)$ symmetry (see Section~\ref{sec:U1.gauge}) which converge in the limit to solutions to the Einstein--null dust system (see Remark~\ref{rmk:null.dust.as.vlasov}) with a finite but arbitrary number of families of dust.

Our construction can be viewed as a multiphase geometric optics construction. In particular, in our constructed high-frequency vacuum solutions, one sees that high-frequency waves propagating in different directions only interact with each other very weakly. Notice that in general multiphase geometric optics can be very complicated, with various possible phenomena of resonance and high-order harmonics creation; see for instance \cite{Met.book}. In our case, however, we have very good control of resonance and high-order harmonics. In fact, the nonlinear structure is sufficiently favorable that after suitable modifications of the methods, we can consider the limit as the number of phases goes to infinity; see~Section~\ref{vlasov}.

\subsection{Construction of high-frequency space-times: the case of null dusts}\label{dust}
The main result in \cite{HL.HF} approximates any small, regular and local in time solution to Einstein-null dust system in polarized $\m U(1)$ symmetry 
\begin{equation}\label{back}
	\left\{\begin{array}{l}
		\mathrm{Ric}_{\mu \nu}(g)= 2\partial_\mu \psi \partial_\nu \psi + \sum_{{\bA}} ({F}_{{\bA}})^2\partial_\mu u_{{\bA}} \partial_\nu u_{{\bA}},\\
		\Box_{g}\psi= 0,\\
		2(g^{-1})^{\alpha \beta}\partial_{\alpha} u_{{\bA}} \partial_{\beta} {F}_{{\bA}} + (\Box_{g} u_{{\bA}}) {F}_{{\bA}} = 0,\quad \forall \bA, \\
		(g^{-1})^{\alpha \beta}\partial_\alpha u_{{\bA}} \partial_\beta u_{{\bA}}=0,\quad \forall \bA,
	\end{array}
	\right.
\end{equation}
by a one parameter family of solutions to Einstein vacuum equations in polarized $\m U(1)$ symmetry
\begin{equation}\label{eq:U1vac:pol}
	\left\{
	\begin{array}{l}
		\mathrm{Ric}_{\alp\bt}(g)= 2\partial_\alp \psi \partial_\bt \psi, \\
		\Box_g \psi  = 0.	
	\end{array}
	\right.
\end{equation}

The sum in \eqref{back} on $\bA$ can be over an arbitrarily large but fixed finite set, which represents a finite number of families of null dusts propagating in different directions.



Before we state our main theorem, we note that in the small-data regime in $\mathbb U(1)$ symmetry, the constraint equations can be solved by specifying what we call \emph{admissible free initial data} \cite{Huneau.constraint, HL.elliptic} consisting of $(\dot{\psi}\doteq  \f{e^{2\gamma}}{n}(\rd_t \psi-\bt^i \rd_i \psi),\psi,\breve F_\bA\doteq F_{\bA} e^{\f\gamma 2}, u_\bA)\big|_{\{0\}\times \mathbb R^2}$ under an admissibility condition (see \cite{HL.elliptic} for details). Here, we implicitly assume that the initial hypersurface is maximal (i.e., with zero mean curvature). The following is the main theorem in \cite{HL.HF}, which is given in terms of admissible free initial data:
\begin{thm}[H.--L., Theorems~1.1, 4.2 in \cite{HL.HF}]\label{main.thm.2}
	Suppose $(\dot{\psi},\nabla \psi,\breve F_\bA, u_\bA)\big|_{\{0\}\times \mathbb R^2}$ is an admissible free initial data set satisfying the following:
	\begin{itemize}
		\item The level sets of $u_\bA$ are sufficiently close to planes and $u_\bA\big|_{\{0\}\times \mathbb R^2}$ is angularly separated, i.e., $\exists \eta' \in (0,1)$ such that 
		\begin{equation}\label{eq:ang.sep.def}
		\f{\delta^{ij}(\rd_i u_{\bA_1})(\rd_j u_{\bA_2})}{|\nabla u_{\bA_1}||\nabla u_{\bA_2}|}(t,x)<1-\eta', \quad \forall (t,x)\in I\times\mathbb R^2,\quad\forall \bA_1\neq \bA_2.
		\end{equation}
		\item  $\dot{\psi},\nabla \psi,\breve F_\bA$ are compactly supported and sufficiently small.
		\item A genericity condition holds for the initial data.
	\end{itemize}
	Then, 
	\begin{enumerate}
	\item a unique solution $(g_0,\psi_0, (F_0)_{\bA}, (u_0)_{\bA})$ to \eqref{back} arising from the given admissible free initial data set exists on a time interval $[0,1]$ in the elliptic gauge of Section~\ref{secell}, and 
	\item there exists a one-parameter family of solutions $(g_{\lambda},\psi_{\lambda})$ to \eqref{eq:U1vac:pol} for $\lambda \in (0,\lambda_0)$ (for some $\lambda_0\in \m R$ sufficiently small), which are all defined on the time interval $[0,1]$ in the elliptic gauge of Section~\ref{secell}, such that
	\begin{equation}\label{eq:main.thm.2.bd.1}
	(g_{\lambda},\psi_{\lambda}) \rightarrow (g_0,\psi_0)\quad \hbox{in $L^\infty_{\mathrm{loc}}$ and weakly in $H^1$}
	\end{equation}
	and 
	\begin{equation}\label{eq:main.thm.2.bd.2}
	\rd g_{\lambda}, \rd \psi_{\lambda}\in L^\infty_{\mathrm{loc}} \quad \hbox{uniformly in $\lambda$}.
	\end{equation}
	\end{enumerate}
\end{thm}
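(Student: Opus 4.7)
The plan is two-stage: first build the null-dust solution $(g_0,\psi_0,(F_0)_\bA,(u_0)_\bA)$, then approximate it by vacuum solutions constructed from a multiphase geometric optics ansatz.

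For the null-dust step, the coupled structure of \eqref{back} is: eikonal equations $(g_0^{-1})^{\alp\bt}\rd_\alp u_\bA \rd_\bt u_\bA = 0$, linear transport for each $F_\bA$ along the null generators of $u_\bA$, a linear wave equation for $\psi_0$, and the elliptic system \eqref{elliptic.1}--\eqref{elliptic.4} for $g_0$ with source given by the right-hand side of \eqref{back}. I would close this by iteration (fix a metric, solve eikonal/transport/wave problems on it, then update the metric elliptically), using smallness of the admissible free data and the angular separation \eqref{eq:ang.sep.def} (which keeps the phases uniformly null and transverse to the maximal foliation) to obtain a contraction in a Sobolev norm adapted to the elliptic gauge, giving existence and uniqueness on $[0,1]$.

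For the vacuum approximants I would use the multiphase WKB ansatz
\begin{equation*}
\psi_\lambda = \psi_0 + \lambda \sum_\bA F_{\bA,\lambda}(t,x)\,\co{u_{\bA,\lambda}(t,x)/\lambda} + \lambda^2 \mathcal R^\psi_\lambda,
\end{equation*}
with $u_{\bA,\lambda}\to u_\bA$ and $F_{\bA,\lambda}\to F_\bA$ as $\lambda\to 0$, and recover $g_\lambda$ from $\psi_\lambda$ via \eqref{elliptic.1}--\eqref{elliptic.4}. Expanding $\Box_{g_\lambda}\psi_\lambda=0$ in powers of $\lambda$, the $O(\lambda^{-1})$ terms force the eikonal equations for $u_{\bA,\lambda}$ and the $O(1)$ terms the transport equations for $F_{\bA,\lambda}$, both relative to $g_\lambda$. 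The nonlinear source $2\rd_\mu\psi_\lambda \rd_\nu\psi_\lambda$ decomposes into a non-oscillatory part $\sum_\bA F_{\bA,\lambda}^2\, \rd_\mu u_{\bA,\lambda}\, \rd_\nu u_{\bA,\lambda}$, which reproduces exactly the dust contribution in \eqref{back}, and oscillatory pieces at frequencies $2u_{\bA,\lambda}/\lambda$ and $(u_{\bA,\lambda}\pm u_{\bB,\lambda})/\lambda$ for $\bA\neq \bB$. Crucially, \eqref{eq:ang.sep.def} forces $\rd(u_{\bA,\lambda}\pm u_{\bB,\lambda})$ to be non-degenerate, so the elliptic-gauge metric corrections sourced by these oscillatory terms gain a power of $\lambda$ via integration by parts against the elliptic operators of Lemma~\ref{lem:elliptic}. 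To close, I would introduce higher-order correctors (subleading amplitudes, phase corrections, elliptic metric correctors) so that substituting the full ansatz into the vacuum equations leaves a residual of size $O(\lambda^N)$ in an appropriate norm, then solve for $\mathcal R^\psi_\lambda$ and the corresponding metric remainder by a contraction argument in a function space uniformly bounded in $\lambda$ with respect to $H^1\cap W^{1,\infty}$. The genericity assumption enters here to prevent secular growth: it ensures that the combinations $\sum_i k_i u_{\bA_i}$ produced by nonlinear interactions do not degenerate, so the transport/elliptic problems defining the correctors admit $\lambda$-independent bounds, while the polarization assumption removes the twist potential $\varpi$ and hence an entire layer of coupling and associated harmonic interactions.

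The main obstacle is this last step: simultaneously controlling the oscillatory parts of the metric in $W^{1,\infty}$ and the quasilinear backreaction on the high-frequency waves. Each amplitude $F_{\bA,\lambda}$ is transported along null geodesics of $g_\lambda$, which itself depends on $\psi_\lambda$ and hence on all the $F_{\bB,\lambda}$; closing this coupling uniformly in $\lambda$ requires splitting each metric coefficient into a slow (non-oscillatory) piece controlled by the smoothing of Lemma~\ref{lem:elliptic} and fast pieces controlled by non-stationary-phase gains from angular separation, while carefully tracking how the fast pieces feed back into the transport equations. Assuming this closure, uniform bounds give \eqref{eq:main.thm.2.bd.2}, the slow parts converge to $(g_0,\psi_0)$ in $L^\infty_{\mathrm{loc}}$, and the fast parts generate exactly the null-dust stress-energy in the weak limit, yielding \eqref{eq:main.thm.2.bd.1}.
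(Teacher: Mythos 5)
Your overall strategy (a multiphase geometric optics ansatz for $\psi_\lambda$ built around the limiting null-dust data, with the metric recovered elliptically and the remainder controlled by a bootstrap) is the right shape, and you correctly identify that a more precise parametrix than the leading term plus an $O(\lambda^2)$ remainder will be needed. But there are two genuine divergences from what the paper does and one concrete error.

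First, the paper fixes the phases $u_\bA$ and amplitudes $F_\bA$ to be exactly the ones from the \emph{limiting} solution of \eqref{back}; they solve the eikonal and transport equations with respect to $g_0$, not $g_\lambda$. The $O(\lambda^{-1})$ off-shell term $\lambda^{-1}g_\lambda^{\alp\bt}\rd_\alp u_\bA \rd_\bt u_\bA$ is then rewritten as $\lambda^{-1}(g_\lambda^{\alp\bt}-g_0^{\alp\bt})\rd_\alp u_\bA\rd_\bt u_\bA$ and absorbed using the improved elliptic estimate $\mfg_\lambda - \mfg_0 = O(\lambda^2)$, which is one reason the parametrix needs to be taken to second order. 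Your proposal instead introduces $\lambda$-dependent $u_{\bA,\lambda}$, $F_{\bA,\lambda}$ solving eikonal/transport equations for $g_\lambda$, which creates a much tighter nonlinear coupling among the phases, the amplitudes and the metric. That route is \emph{not} what the paper uses for this theorem (finite number of dusts); the paper only passes to true eikonal functions for $g_\lambda$ in the later Vlasov case (Section~\ref{vlasov}), where it is forced by the $N\to\infty$ limit, and there it requires additional parametrix decompositions for $u_\bA$ and its Hessian/$\chi_\bA$. So you are implicitly proposing a harder construction than necessary for this statement. Second, you do not mention the rescaling freedom $(F_\bA,u_\bA)\mapsto (a_\bA^{-1}F_\bA,\, a_\bA u_\bA)$, which the paper uses precisely to ensure lower bounds $|\nabla(a_\bA u_\bA\pm a_\bB u_\bB)|\gtrsim 1$ for the interaction phases (including cubic and higher); angular separation alone controls the \emph{direction} of these interaction phases (timelike vs.\ spacelike), but not their gradient magnitudes. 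Third — and this one is a genuine misstatement — the genericity condition in the theorem is \emph{not} about preventing secular growth or non-degeneracy of $\sum_i k_i u_{\bA_i}$. The paper uses it only to guarantee that the constraint equations can be solved with zero mean curvature for the high-frequency \emph{initial data} of $(g_\lambda,\psi_\lambda)$. You should also know that a difficulty emphasized in the paper and absent from your discussion is the anisotropy of the elliptic gauge: $\rd_t$ of the metric coefficients behaves worse than spatial derivatives because the metric solves spatial elliptic equations slice by slice, and closing the estimates requires a specific cancellation between $\rd_t$ of the remainder metric and $\rd_t^2$ of the amplitude correctors.
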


Let us make a few remarks on this theorem.
\begin{itemize}
	\item The existence and uniqueness of the limiting solution $(g_0,\psi_0, (F_0)_{\bA}, (u_0)_{\bA})$ is simply a local well-posedness result. This was proven in \cite{HL.elliptic}.
	\item The angular separation condition is used to control the interaction of null dusts propagating in different directions (see Section~\ref{sec:gain.high.frequency}).
	\item The genericity condition is a technical condition to ensure that we can take initial data for $(g_\lambda,\psi_\lambda)$ satisfying the orthogonality conditions necessary to solve the constraint equations with zero mean curvature.
\end{itemize}


From now on, we denote components of $g_\lambda$ (resp.~$g_0$) in an elliptic gauge schematically by $\mfg_\lambda$ (resp.~$\mfg_0$). The strategy of the proof, as explained in the following subsubsections, is to construct solutions to \eqref{eq:U1vac:pol} in elliptic gauge of the form
\begin{equation}
	\label{ansatz}\psi_{\lambda}= \psi_0+\sum_\bA \lambda F_\bA \cos(\tfrac{u_\bA}{\lambda})+\wht \psi_{\lambda},\quad \mfg_{\lambda}= \mfg_0 + \wht \mfg_{\lambda},
\end{equation}
where we used the shorthand $(F_\bA, u_{\bA})$ to denote $((F_0)_\bA, (u_0)_{\bA})$ as in the limiting solution, and $(\wht \psi_{\lambda}, \wht \mfg_{\lambda})$ are terms which are smaller $\lambda$, which in particular morally\footnote{The estimates for $\mfg$ as stated are not what we proved, since time and spatial derivatives behave differently in an elliptic gauge (see point (2) in Section~\ref{sec:dust.extra.difficulties}) and the lower order derivatives do not decay fast enough near spatial infinity. See \cite{HL.HF} for details.} satisfy\footnote{Moreover, we suppressed the necessary $\lambda$-independent smallness for clarity of exposition. Again, see \cite{HL.HF} for details.}
\begin{equation}\label{eq:norms.for.null.dust}
\sum_{k\leq 3} \lambda^{\max\{k-1,0\}} \| \rd^k \psi_\lambda \|_{L^\i_t L^2_x} \ls \lambda,\quad \sum_{k\leq 5} \lambda^{\max\{k-2,0\}} \| \rd^k \mfg \|_{L^\i_t L^2_x} \ls \lambda^2.
\end{equation}
The local well-posedness result in \cite{HL.elliptic} a priori gives existence of solutions to \eqref{eq:U1vac:pol} of the form \eqref{ansatz} on time intervals which are shrinking as $\lambda \to 0$, but we use a bootstrap argument on the remainder $(\wht \psi_\lambda, \wht g_{\lambda})$ to prove that the solutions remain regular and take the form \eqref{ansatz} up to time $1$.


If we manage to construct solutions to \eqref{eq:U1vac:pol} of the form \eqref{ansatz}, with the error terms satisfying the bounds \eqref{eq:norms.for.null.dust}, then the bounds \eqref{eq:main.thm.2.bd.2} in Theorem \ref{main.thm.2} immediately follow and moreover $(g_\lambda,\psi_\lambda)$ converges to the limiting solution $(g_0,\psi_0)$ in the sense of \eqref{eq:main.thm.2.bd.1}. 

Heuristically, one may think of the construction \eqref{ansatz} as a superposition of Burnett's example in the introduction. Moreover, it can be checked that if \eqref{ansatz} holds, then
\begin{equation}\label{eq:dpsi^2.limit}
\begin{split}
	&\:\partial_\alpha \psi_{\lambda} \partial_\beta \psi_\lambda \\
	=&\:\partial_\alpha \psi_{0} \partial_\beta \psi_0 + \sum_{\bA,\bB} F_{\bA} F_{\bB}  \sin (\tfrac{u_\bA}{\lambda}) \sin (\tfrac{u_\bA}{\lambda} ) \partial_\alpha u_\bA \partial_\beta u_\bB + \partial_{(\alpha|}\psi_0 \sum_{\bA}F_{\bA} \sin(\tfrac{u_\bA}{\lambda} )   \partial_{|\beta)} u_\bA +O_{L^2}(\lambda)\\
	=&\:\partial_\alpha \psi_{0} \partial_\beta \psi_0 + \frac{1}{2}\sum_{\bA} F_{\bA} ^2\left(1+ \cos(\tfrac{2u_\bA}{\lambda}) \right)  \partial_\alpha u_\bA \partial_\beta u_\bA + \sum_{\bA \neq \bB}F_{\bA} F_{\bB}  \cos(\tfrac{u_\bA\pm u_\bB}{\lambda} )\partial_\alpha u_\bA \partial_\beta u_\bB \\
	&\:\qquad \qquad+ \partial_{(\alpha|}\psi_0 \sum_{\bA}F_{\bA} \sin(\tfrac{u_\bA}{\lambda} )   \partial_{|\beta)} u_\bA +O_{L^2}(\lambda)\\
	\rightharpoonup &\:\partial_\alpha \psi_{0} \partial_\beta \psi_0 +\frac{1}{2}\sum_{\bA} F_{\bA} ^2\partial_\alpha u_\bA \partial_\beta u_\bA,
\end{split}
\end{equation}
which is why our constructed solutions to \eqref{eq:U1vac:pol} converge to a solution to \eqref{back}.

\subsubsection{Gaining from high-frequency} \label{sec:gain.high.frequency}

In order to justify \eqref{ansatz} and \eqref{eq:norms.for.null.dust}, it is important that we use the precise form of the data, and not only the bounds that they satisfy, since the regularity of the initial data are below the threshold known for general well-posedness. 

Before we proceed, we briefly review the standard mechanism behind geometric optics construction, which makes use of the high-frequency parameter as a smallness parameter; we refer the reader to \cite{Rauch.book} for a more thorough treatment of linear estimates. Since the system \eqref{eq:U1vac:pol} is a mixed hyperbolic-elliptic system in an elliptic gauge, we consider the model cases for both elliptic and wave equations, for which the behavior is quite different. For the discussion below, fix $\chi \in C^\infty_c$ and consider $\lambda \ll 1$. 
\begin{itemize}
\item When inverting an elliptic operator, say, solving $\Delta \phi = \chi \sin(\tfrac {x^i}\lambda)$, we have $\phi = O(\lambda^2)$. In fact, one can obtain a more precise expansion $\phi = \sum_{j=2}^{J} \lambda^j \phi_j + O(\lambda^{J+1})$, where  $\phi_j$ can be precisely computed and is bounded in, say, a norm $\sum_{k=0}^K \lambda^k \|\rd^k \phi_j \|_{L^p} \ls 1$. 
\item When inverting the wave operator, there is a difference depending on the direction of oscillation of the inhomogeneous terms. This can already be seen on Minkowski spacetime. Consider the Minkowskian problems:
$$\Box \phi = \chi \sin(\tfrac {t+x^i} \lambda),\quad \Box \phi = \chi \sin(\tfrac {t} \lambda), \quad \Box \phi = \chi \sin(\tfrac {x^i} \lambda).$$
It is well-known that in the first case, $\sum_{k=0}^K \lambda^k \| \rd^k \phi \|_{L^p} \ls \lambda$, and the leading order contribution satisfies a transport equation. In contrast, in the second and third cases, the behavior is effectively like in the elliptic case, and $\phi$ obeys the better estimates $\sum_{k=0}^K \lambda^k \| \rd^k \phi \|_{L^p}\ls \lambda^2$.
\end{itemize}
This already suggests that the error terms should be estimated in norms as in \eqref{eq:norms.for.null.dust}, where the bounds worsen by $\lambda^{-1}$ for each derivative.

Moreover, in order to exploit the above phenomena, we will need to arrange the phase appropriately:
\begin{itemize}
	\item First, we need that $u_\bA$, $u_{\bA} \pm u_{\bB}$ etc.~to oscillate with frequency $\sim \lambda^{-1}$ along the spatial directions. To achieve this we exploit a rescaling symmetry for the background: observe that if $(g,\psi,F_{\bA},u_{\bA})$ solves \eqref{back}, then for any set of positive constants $\{a_{\bA}\}_{\bA\in \mathcal A}\in \mathbb R_{>0}^{|\mathcal A|}$, $(g,\psi,F_{\bA}',u_{\bA}')$ is also a solution to \eqref{back} if we define
	$$F_{\bA}'=a_{\bA}^{-1} F_{\bA},\quad u_{\bA}'=a_{\bA} u_{\bA}.$$
	This rescaling allows us to ensure $|\nabla (a_\bA u_\bA \pm a_{\bB }u_{\gra B})|^2 \geq 1$ for phases corresponding to quadratic interaction, and similar lower bounds for higher order interactions.
	\item We will also need that for $\bA\neq \bB$, $u_{\bA} \pm u_{\bB}$ is either oscillating in a timelike or a spacelike direction, in order to capture the improved estimate discussed above. This cannot be arranged by rescaling alone, but is guaranteed by the angular separation assumption \eqref{eq:ang.sep.def}.

\end{itemize}

\subsubsection{First estimates for the metric coefficients}\label{sec:null.dust.metric.1st.est}
Recall that our goal will be to prove the estimates \eqref{eq:norms.for.null.dust}. We first discuss the estimates for $\mfg_\lambda$: we assume the estimates for $\psi_\lambda$ (but not that of $\mfg_\lambda$) in \eqref{ansatz}--\eqref{eq:norms.for.null.dust}, and try to derive suitable bounds for $\mfg_\lambda$. 

According to Lemma~\ref{lem:elliptic}, in the elliptic gauge, the metric components schematically satisfy equations of the form
$$\Delta \mfg_\lambda = \mathfrak L(g_\lambda)\Big[ \rd_\alp \psi_\lambda \rd_\bt \psi_\lambda \Big]+(\rd \mfg_\lambda)^2,\quad \Delta \mfg_0 = \mathfrak L(g_0)\Big[ \rd_\alp \psi_0 \rd_\bt \psi_0 + \f12 \sum_{\bA} F_{\bA} \rd_\alp u_{\bA} \rd_\bt u_{\bA} \Big]+(\rd \mfg_0)^2,$$
where $\mathfrak L(g)[\cdot]$ is some linear function of the unknown, with coefficients depending on $g$.

From the computations of $\partial_\alpha \psi_{\lambda} \partial_\beta \psi_\lambda $ in \eqref{eq:dpsi^2.limit} above, we see that the low-frequency part of $\mathfrak L(g_\lambda)\Big[ \rd_\alp \psi_\lambda \rd_\bt \psi_\lambda \Big]$ cancel exactly $\mathfrak L(g_0)\Big[ \rd_\alp \psi_0 \rd_\bt \psi_0 + \f12 \sum_{\bA} F_{\bA} \rd_\alp u_{\bA} \rd_\bt u_{\bA} \Big]$. Thus, assuming that $\psi_\lambda - \psi_0$ satisfy \eqref{ansatz}--\eqref{eq:norms.for.null.dust}, and introducing suitable bootstrap assumptions for $\mfg$, the equation for $\mfg_\lambda - \mfg_0$ then schematically takes the form
\begin{equation}\label{eq:null.dust.metric.diff}
\begin{split}
	\Delta (\mfg_\lambda-\mfg_0) = &\: \frac{1}{2}\sum_{\bA} F_{\bA} ^2 \cos (\tfrac{2u_\bA}{\lambda}) \partial u_\bA \partial u_\bA + \sum_{\bA \neq \bB}F_{\bA} F_{\bB}  \cos(\tfrac{u_\bA\pm u_\bB}{\lambda})\partial u_\bA \partial u_\bB \\
	&\: + \partial\psi_0 \sum_{\bA}F_{\bA} \sin ( \tfrac{u_\bA}{\lambda})   \partial u_\bA +O(\lambda).
\end{split}
\end{equation}
In this equation, the $O(1)$ terms have explicit expressions, and can be cancelled by introducing an approximate solution of the form
\begin{equation}\label{g1.def}
	\begin{split}
		\mfg_1=&-\f 18\sum_\bA \frac{\lambda^2 F_\bA^2}{|\nabla u_\bA|^2}  (\partial u_\bA)( \partial u_\bA) \cos (\tfrac{2u_\bA}{\lambda}) -\sum_\bA\frac{\lambda^2 F_\bA}{|\nabla u_\bA|^2} (\partial \psi_0 )(\partial u_\bA)  \sin (\tfrac{u_\bA}{\lambda}) \\
		&-\sum_{\bB \neq \bA}\frac{(\mp 1)\cdot \lambda^2 F_\bA F_{\gra B}}{|\nabla (u_\bA \pm u_{\gra B})|^2} (\partial_\mu u_\bA)( \partial_\nu u_{\gra B}) \cos ( \tfrac{ u_\bA \pm u_{\gra B}}{\lambda}).
	\end{split}
\end{equation}
In order to do so, we need lower bounds for $|\nabla u_\bA|^2$ and $|\nabla (u_\bA \pm u_{\gra B})|^2$, which can be arranged as explained in Section~\ref{sec:gain.high.frequency}. We can now write
\begin{equation}\label{eqg}
\Delta (\mfg - \mfg_0 - \mfg_1) =O(\lambda).
\end{equation}
In particular, the explicit form for $\mfg_1$ and standard elliptic estimates give the rough estimate
\begin{equation}\label{eq:null.dust.metric.1st.est}
\mfg - \mfg_0  = O_{H^2}(\lambda).
\end{equation}

\subsubsection{Need for a more precise parametrix}

It turns out that the ansatz \eqref{ansatz} is not precise enough to run our argument. To see this, we derive an equation for the error $\wht \psi_{\lambda}$ by the following computations:
\begin{equation}\label{eq:null.dust.error.terms}
	\begin{split}
	0= \Box_{g_\lambda }\psi_\lambda= &(\Box_{g_{\lambda}}-\Box_{g_0})\psi_0
	-\frac{1}{\lambda}\sum_{\bA }	 g_\lambda^{\alpha \beta}\partial_\alpha u_\bA \partial_\beta u_\bA  F_\bA \cos(\tfrac{u_\bA}{\lambda})\\
	&-\sum_{\bA} (2g^{\alpha\beta}_{\lambda}\partial_\alpha u_\bA \partial_\beta F_\bA + \Box_{g_{\lambda}} u_\bA F_\bA)\sin (\tfrac{u_\bA}{\lambda})
	+\lambda \sum_{\bA }	 \Box_{g_\lambda} F_\bA \cos (\tfrac{u_\bA}{\lambda})
	+\Box_{g_\lambda}\wht \psi_{\lambda},
	\end{split}
\end{equation}
where we used $\Box_{g_0} \psi_0 = 0 = \Box_{g_\lambda }\psi_\lambda$. Equation \eqref{eq:null.dust.error.terms} can be viewed as an equation for $\Box_{g_\lambda}\wht \psi_{\lambda}$, which can be used to estimate $\wht \psi_{\lambda}$.

We first note that the $(\Box_{g_{\lambda}}-\Box_{g_0})\psi_0$ term, the $\sum_{\bA} (2g^{\alpha\beta}_{\lambda}\partial_\alpha u_\bA \partial_\beta F_\bA + \Box_{g_{\lambda}} u_\bA F_\bA)\sin (\tfrac{u_\bA}{\lambda})$ term and the $\lambda \sum_{\bA } \Box_{g_\lambda} F_\bA \cos (\tfrac{u_\bA}{\lambda})$ term all satisfy $\sum_{k=0}^K\lambda^k \|\rd^k \cdot \|_{L^2}\ls \lambda$. For the first term, this is due to  \eqref{eq:null.dust.metric.1st.est}, for the second term, this is due to the transport equation for $F_\bA$ in \eqref{back} (together with \eqref{eq:null.dust.metric.1st.est}); for the final term, this is due to the extra $\lambda$ present. These estimates are already sufficient for proving the wave estimate (i.e., the first bound) in \eqref{eq:norms.for.null.dust}.

Thus the main error term in \eqref{eq:null.dust.error.terms} is the second term on the right-hand side. Recalling that $u_\bA$ satisfies the eikonal equation with respect to the metric $g_0$, i.e., $g_0^{\alp\bt} \rd_\alp u_\bA \rd_\bt u_{\bB} = 0$ (see \eqref{back}), we can write
\begin{equation}\label{eq:hard.term.parametrix}
\frac{1}{\lambda} g_\lambda^{\alpha \beta}\partial_\alpha u_\bA \partial_\beta u_\bA=\frac{1}{\lambda}(g_\lambda^{\alpha \beta}-g_0^{\alpha \beta})\partial_\alpha u_\bA \partial_\beta u_\bA.
\end{equation}
If we only have \eqref{eq:null.dust.metric.1st.est}, then the term \eqref{eq:hard.term.parametrix} is $O_{L^2}(1)$, which only gives the $\| \rd \wht \psi_{\lambda}\|_{L^2} \ls 1$, and is in turn too weak to justify \eqref{eq:null.dust.metric.diff}.

\subsubsection{Improved parametrix}\label{sec:dust.improved.para}
Due to considerations outlined above, we need a more precise parametrix than \eqref{ansatz}. Instead, we further expand $\psi_\lambda$ as follows:
\begin{equation}\label{psi.para}
	\begin{split}
		\psi_\lambda= &\psi_0+\sum_\bA \lambda F_\bA \cos (\tfrac{u_\bA}{\lambda}) +\sum_\bA \lambda^2 \wht F_\bA \sin(\tfrac{u_\bA}{\lambda})\\
		&+\sum_\bA \lambda^2 \wht F_\bA^{(2)}\cos(\tfrac{2u_\bA}{\lambda})+\sum_\bA  \lambda^2\wht F_\bA^{(3)} \sin(\tfrac{3u_\bA}{\lambda }) + \mathcal E_\lambda,
	\end{split}
\end{equation} 
where $\wht F_\bA$, $\wht F_\bA^{(2)}$ and $\wht F_\bA^{(3)}$ are $O(1)$ terms which are defined to satisfy suitable transport equations, and $\mathcal E_\lambda$ is an error term which satisfies the even better estimate $\sum_{k\leq 3} \lambda^{\max\{k-1,0\}} \| \rd^k \mathcal E_\lambda \|_{L^\infty_t L^2_x} \ls \lambda^2$.

The key point here is that \eqref{psi.para} is sufficiently precise to keep track of the $O(\lambda)$ contribution in  $\Delta (\mfg_\lambda - \mfg_0 - \mfg_1)$ in \eqref{eqg}, and show that they are in fact also of high-frequency. As a result, we can obtain a more accurate the approximate solution for $\mfg_\lambda$ by introducing a $\mfg_2$ term (in a similar manner as \eqref{g1.def}) with size $\sum_{k=0}^K \lambda^k \| \rd^k \mfg_2\|_{L^p} \ls \lambda^{3}$ so that $\Delta (\mfg_\lambda - \mfg_0 - \mfg_1 - \mfg_2) = O_{L^2}(\lambda^2)$. As a result, we obtain $\mfg_\lambda - \mfg_0 = O(\lambda^2)$, which is sufficient to handle the difficult term \eqref{eq:hard.term.parametrix}.

Let us note that in order to justify the parametrix in \eqref{psi.para}, we in turn need to be more precise with the $O(\lambda)$ error terms in the equation \eqref{eq:null.dust.error.terms} (i.e., terms such as $(\Box_{g_{\lambda}}-\Box_{g_0})\psi_0$). In particular, we need to define $\wht F_\bA$, $\wht F_\bA^{(2)}$ and $\wht F_\bA^{(3)}$ suitably to cancel with these error terms. 

Finally, notice that the extra terms in \eqref{psi.para} are high frequency terms that only involve the phases $u_\bA$, $2 u_\bA$ and $3 u_\bA$. These terms correspond to either linear terms in the high-frequency source or terms representing (quadratic or cubic) parallel interactions. The reason that we do not need to keep track of high-frequency terms e.g., with a phase $u_{\bA} \pm u_{\bB}$ with $\bA \neq \bB$ is because these phases oscillate either in the timelike or in the spacelike direction, so that the output is better according to considerations in Section~\ref{sec:gain.high.frequency}. In particular, these terms are $O(\lambda^3)$ (instead of $O(\lambda^2)$) and can be considered as part of $\mathcal E_\lambda$.

\subsubsection{Additional technical issues}\label{sec:dust.extra.difficulties}
We end with two more technical difficulties that arise in the proof.
\begin{enumerate}
	\item (Coupling between various terms) It should be emphasized that in the bootstrap argument, not only the main error terms $\mfg_\lambda - \mfg_0 - \mfg_1 - \mfg_2$ and $\mathcal E_\lambda$ couple. In fact, the way the paramatrix is defined dictates that the estimates for these also have to be coupled with that of $\mfg_2$ and $\wht F_{\bA}$. In particular, we need to contend with a potential loss of derivatives. (In fact, for related reasons, it would seem difficult to obtain a parametrix which is more precise than \eqref{psi.para}.)
	\item (Time derivative of the metric coefficients) Another difficulty comes from estimating $\rd_t \mfg$: the $\rd_t$ derivatives of the metric components are worse than the spatial derivatives because the metric components solve spatial elliptic equations on fixed time slices. Here, it is important that the $\rd_t$ derivatives of some specific components behave better (cf.~the structure used in Section~\ref{sec:elliptic.wave}). Moreover, there is an important cancellation coming from the two uncontrollable terms $\rd_t (\mfg_\lambda - \mfg_0 - \mfg_1 - \mfg_2)$ and $\rd_t^2 \wht F_{\bA}$.
\end{enumerate}

\section{Construction of small-amplitude high-frequency spacetimes in $\mathbb U(1)$ symmetry: from null dusts to massless Vlasov}\label{vlasov}
While we constructed a large class of high-frequency vacuum spacetimes which limit to solutions to the Einstein--null dust system (see Section~\ref{sec:U1.backward}), Conjecture~\ref{conj:backward} is not restricted to the Einstein--null dust system. In this section, we discuss ongoing work in which we construct examples where the limiting spacetimes are more general solutions to the Einstein--massless Vlasov system. In particular, in the notations of Definition~\ref{def:the.final.def}, $\nu$ is now allowed to be absolutely continuous with respect to the Lebesgue measure.

Our goal now is to construct, in the $(2+1)$-dimensional space $[0,1]\times \mathbb R^2$, a sequence of solutions $(g_i,\psi_i,\varpi_i)$ to (recall \eqref{eq:U1vac})
\begin{equation}\label{eq:U1vac.again}\left\{
	\begin{array}{l}
		\Box_g \psi + \frac{1}{2}e^{-4\psi}g^{-1}(\ud\varpi,\ud\varpi)  = 0,\\
		\Box_g \varpi -4g^{-1}(\ud\varpi,\ud\psi)=0,\\
		\mathrm{Ric}_{\mu \nu}(g)= 2\partial_\mu \psi \partial_\nu \psi + \frac{1}{2}e^{-4\psi}\partial_\mu \varpi \partial_\nu \varpi
	\end{array}
	\right.
\end{equation}
which converges to $(g_0,\psi_0, \varpi_0)$, where $(g_0,\psi_0, \varpi_0, \{ f(\omega),u(\omega)\}_{\om \in \mathbb S^1})$ is a solution to
\begin{equation}\label{vlasovsys}
	\left\{\begin{array}{l}
		\Box_g \psi + \frac{1}{2}e^{-4\psi}g^{-1}(\ud\varpi,\ud\varpi)  = 0,\\
		\Box_g \varpi -4g^{-1}(\ud\varpi,\ud\psi)=0,\\
		\mathrm{Ric}_{\mu \nu}(g)= 2\partial_\mu \psi \partial_\nu \psi +\frac{1}{2}e^{-4\psi}\partial_\mu \varpi \partial_\nu \varpi+ \int_{\m S^1}f^2(t,x,\omega)\partial_\mu u(t,x,\omega) \partial_\nu u(t,x,\omega)\,\ud m(\omega),\\
		2(g^{-1})^{\alpha \beta}\partial_{\alpha} u \partial_{\beta} f + (\Box_{{g}} u) f = 0 \quad\forall \omega \in \mathbb S^1,\\
		(g^{-1})^{\alpha \beta}\partial_\alpha u \partial_\beta u=0,\quad u \big|_{\{t=0\}} = x\cdot \omega,\, \rd_t u \big|_{\{t=0\}} >0,\quad \forall \om \in \mathbb S^1,
	\end{array}
	\right.
\end{equation}
where $\ud m(\omega)$ is a fixed probability measure on $\m S^1$.
In what follows, we denote $U \doteq (\psi,\varpi)$ and $\langle \partial_\alpha U, \partial_\beta U\rangle \doteq \partial_\alpha \psi \partial_\beta \psi + \frac{1}{4}e^{-4\psi}\partial_\alpha \varpi \partial_\beta \varpi$. 

Here, in \eqref{vlasovsys}, we consider a specific class of solutions to the Einstein--Vlasov system, where the cotangent bundle can be parametrized by $\{\ud u(\om)\}_{\omega \in \mathbb S^1}$ so that the connection to the Einstein--null dust system (see Section~\ref{sec:U1.backward}) is more transparent. In particular, this parametrization allows for Vlasov field which is absolutely continuous with respect to the Lebesgue measure. Notice that the systems \eqref{eq:U1vac.again} and \eqref{vlasovsys} in $(2+1)$-dimensions arise, respectively, as reductions of the Einstein vacuum equations and the Einstein--massless Vlasov system in $(3+1)$ dimensions under $\mathbb U(1)$ symmetry.

We now give a rough statement of our main theorem. As in Theorem~\ref{main.thm.2}, we state the theorem in terms of admissible free initial data, where suitable modifications need to be adapted in this setting. As before, we assume that the initial hypersurface is maximal. 
\begin{thm}[H.--L., to appear]\label{thm:main.vlasov}
	Suppose $(\dot{\psi},\nabla \psi,\dot{\varpi}, \nabla\varpi,\{\breve f(\om)\}_{\om\in \mathbb S^1}, \{u(\om)\}_{\om\in \mathbb S^1}) \big|_{\{0\}\times \mathbb R^2}$ is an admissible free initial data set satisfying the following:
	\begin{itemize}
		\item The level sets of $u(\om)$ are chosen to be planes with angle $\om$ to the $y$-axis in the sense that $\nabla u(\omega) = (\cos\om, \sin\om)$.
		\item	$(\dot{\psi},\nabla \psi,\dot{\varpi}, \nabla\varpi,\{\breve f(\om)\}_{\om\in \mathbb S^1})$ are compactly supported and sufficiently small.	
		\item A genericity condition holds.
	\end{itemize}
	Then 
	\begin{enumerate}
	\item a unique solution $(g_0,\psi_0, \varpi_0, \{f_0(\om)\}_{\om\in\mathbb S^1}, \{u_0(\om)\}_{\om\in\mathbb S^1})$ to \eqref{vlasovsys} arising from the given admissible free initial data set exists on a time interval $[0,1]$ in the elliptic gauge of Section~\ref{secell}, and 
	\item there exists a sequence of solutions $\{(g_{i},\psi_{i},\varpi_{i})\}_{i=1}^\infty$ to \eqref{eq:U1vac.again}, which are all defined on the time interval $[0,1]$ in the elliptic gauge of Section~\ref{secell}, such that
	\begin{equation}\label{eq:main.thm.vlasov.bd.1}
	(g_{i},\psi_{i},\varpi_{i}) \rightarrow (g_0,\psi_0,\varpi_0)\quad \hbox{in $L^\infty_{\mathrm{loc}}$ and weakly in $H^1$}
	\end{equation}
	and 
	\begin{equation}\label{eq:main.thm.vlasov.bd.2}
	\rd g_{i}, \rd \psi_{\lambda}\in L^4_{\mathrm{loc}} \quad \hbox{uniformly in $i\in \mathbb N$}.
	\end{equation}
	\end{enumerate}
\end{thm}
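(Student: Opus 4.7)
The plan is to adapt the multi-phase geometric-optics construction of Section~\ref{sec:U1.backward} by replacing the finite family of dust directions $\{u_\bA\}$ with a fine discretization of the continuum $\{u(\omega)\}_{\omega \in \mathbb S^1}$. Concretely, I would choose sequences $N_i \to \infty$ and $\lambda_i \to 0$ at compatibly balanced rates, a sequence of angular discretizations $\{\omega_j^{(i)}\}_{j=1}^{N_i}$ of $\mathbb S^1$ with quadrature weights $w_j^{(i)}$ approximating $\ud m$, and a vacuum ansatz of the form
\[
\psi_i = \psi_0 + \lambda_i \sum_{j=1}^{N_i} f(\omega_j^{(i)})\sqrt{2 w_j^{(i)}}\,\cos\!\left(\tfrac{u(\omega_j^{(i)})}{\lambda_i}\right) + \wht\psi_i,
\]
with an analogous expansion for $\varpi_i$ and the metric corrected as in \eqref{ansatz}. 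The amplitude scaling $\sqrt{w_j^{(i)}}$ (rather than $w_j^{(i)}$) is dictated by the requirement that the diagonal part of the resulting quadratic Riemann sum should converge to $\sum_j w_j^{(i)} f(\omega_j^{(i)})^2\,\partial u(\omega_j^{(i)})\partial u(\omega_j^{(i)}) \to \int_{\mathbb S^1} f^2\,\partial u\,\partial u\,\ud m$, which is precisely the Vlasov contribution in the third equation of \eqref{vlasovsys}.

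Once the ansatz is posed, I would carry out the parametrix refinement as in \cite{HL.HF}: introduce correction profiles $\wht f(\omega)$, $\wht f^{(2)}(\omega)$, $\wht f^{(3)}(\omega)$ transported along each ray $u(\omega)$, together with metric correctors $\mfg_1$ and $\mfg_2$ analogous to \eqref{g1.def}. The transport equations for the profiles, together with the eikonal equation for $u(\omega)$, ensure that the leading-order high-frequency errors cancel exactly, and the remaining error $(\wht\psi_i, \wht\varpi_i, \mfg_i - \mfg_0 - \mfg_1 - \mfg_2)$ should be controlled by a bootstrap argument delivering estimates uniform in both $\lambda_i$ and $N_i$. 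The transport equation $2 g^{-1}(\ud u, \ud f) + (\Box_g u) f = 0$ of \eqref{vlasovsys} is thereby inherited from the profile equations in the construction.

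The main obstacle, and the central departure from \cite{HL.HF}, is the failure of the angular separation condition \eqref{eq:ang.sep.def}: since $\omega$ ranges over all of $\mathbb S^1$, the small-divisor factor $|\nabla(u(\omega_j) \pm u(\omega_k))|^{-2}$ appearing in the analog of \eqref{g1.def} for nearby pairs deteriorates like $|\omega_j - \omega_k|^{-2}$, which is as large as $N_i^2$ for adjacent directions. To control this I would split the pair sum using a threshold $\delta_i$ with $\lambda_i \ll \delta_i \ll 1$: in the far regime $|\omega_j - \omega_k| > \delta_i$ the estimates of Section~\ref{sec:U1.backward} apply with a loss of $\delta_i^{-C}$, which is absorbable for $\lambda_i$ taken sufficiently small; in the near-diagonal regime, where the phase $(u(\omega_j) - u(\omega_k))/\lambda_i$ oscillates only at the slower scale $\lambda_i/\delta_i$, one would group adjacent directions into single wave packets and exploit the angular smoothness of $(f,u)$ in $\omega$ to obtain cancellation via a stationary-phase-type argument reminiscent of the trilinear structure in Section~\ref{secpart2}. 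The precise tuning of $(N_i, \lambda_i, \delta_i)$ to close this is where I expect the bulk of the technical work to lie. Once uniform estimates are obtained, the convergences \eqref{eq:main.thm.vlasov.bd.1} and \eqref{eq:main.thm.vlasov.bd.2} follow as in \cite{HL.HF}, and the Vlasov structure of the limiting stress-energy tensor follows from the Riemann-sum convergence together with the inherited transport equations for $(f,u)$.
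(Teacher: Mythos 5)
Your high-level plan --- approximating the Vlasov field by $N$ families of null dust and then running a multi-phase geometric-optics construction with $N\to\infty$ and $\lambda\to 0$ coupled --- matches the paper's strategy, and your amplitude scaling $\sqrt{w_j^{(i)}}$ is exactly the paper's $F^{\psi,\varpi}_{\bA}=\sigma_{\bA}^{1/2} f^{\psi,\varpi}(\omega_{\bA})$. You have also correctly identified the failure of angular separation and the resulting small-divisor degeneration as the central new obstacle. However, several key ideas from the paper's proof are missing, and the remedy you propose for the small divisors is not the paper's remedy.

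The paper does not perform a near/far split in $\omega$ with a stationary-phase argument on the near-diagonal pairs. Instead it accepts losses that grow with $N$: the error terms are estimated by $\lambda^2 e^{A(N)t}$ with $A(N)$ polynomial in $N$, and this is compensated by taking $\lambda$ small relative to $N$ (which you do gesture at in your ``far regime''). The real danger is that nonlinear interactions between the $N$ phases could produce errors of order $\lambda^2 e^{2A(N)t}$, which would not be absorbable; the paper's crucial device to prevent this compounding is to let the phases $u_{\bA}$ solve the eikonal equation for the \emph{perturbed} metric $g_\lambda$ rather than for $g_0$. Your ansatz keeps $u(\omega)$ as background eikonal functions, so $g_\lambda^{-1}(\ud u_\bA,\ud u_\bA)$ is only $O(\lambda^2 e^{A(N)t})$ rather than exactly zero, and the associated term (the analogue of \eqref{eq:hard.term.parametrix}) would feed the exponential back into itself. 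Using the perturbed eikonal in turn forces one to control $\chi_\bA=\Box_{g_\lambda}u_\bA$ via the Raychaudhuri equation and to introduce parametrix decompositions for $u_\bA$ and $\chi_\bA$ as well --- none of which appears in your proposal.

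Two further ingredients are missing. First, local existence: since only $\bigl(\sum_\bA |F^{\psi}_\bA|^2+|F^{\varpi}_\bA|^2\bigr)^{1/2}$ is small but not $\sum_\bA(|F^{\psi}_\bA|+|F^{\varpi}_\bA|)$, one cannot expect $\|\partial\psi_\lambda\|_{L^\infty}$ to be small, so the $W^{1,\infty}$-based local existence of \cite{HL.elliptic} is not available. The paper uses almost-orthogonality of the high-frequency phases to get $L^4$ smallness and invokes Touati's $W^{1,4}$-based local existence \cite{Touati.local}; this is precisely why the conclusion \eqref{eq:main.thm.vlasov.bd.2} is stated in $L^4_{\mathrm{loc}}$ rather than $L^\infty_{\mathrm{loc}}$, a feature your proposal does not account for. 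Second, the removal of the polarization assumption is nontrivial: the semilinear wave-map coupling between $\psi$ and $\varpi$ forces a coupled transport system for $(F^\psi_\bA,F^\varpi_\bA)$ as in \eqref{backdeux} (rather than a single transport equation), and one needs the null-form structure of the coupling to ensure that no new harmonics are generated at first order.
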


Notice that Theorem~\ref{thm:main.vlasov} simultaneously generalizes Theorem~\ref{main.thm.2} in two ways: in addition to allowing for more general Vlasov field, we remove the polarization assumption and allow for general $\mathbb U(1)$-symmetric solutions so that we deal with a $(2+1)$-dimensional Einstein--wave map system (see Remark~\ref{rmk:twist}).

Some comments about the basic strategy of the proof of Theorem~\ref{thm:main.vlasov} are in order.
\begin{itemize}
	
	\item The strategy of the proof consists of two steps. First, we approximate the solution to the Einstein--massless Vlasov system by a sequence of solutions to Einstein--null dust system (see Section~\ref{sec:vlasov.by.dust}), where the number of families of dust $\to \infty$. We then use a construction similar to that in Section~\ref{sec:U1.backward} to approximate solutions to the Einstein--null dust system by solutions to \eqref{eq:U1vac} (see Section~\ref{sec:parametrix.vlasov}).
	
	\item However, we emphasize that Theorem~\ref{thm:main.vlasov} does not follow from Theorem~\ref{main.thm.2}. Indeed, in Theorem~\ref{main.thm.2}, as the number of families of dust $N$ increases, we required more stringent smallness assumption. As a result, we cannot directly pass to the $N\to\infty$ limit in Theorem~\ref{main.thm.2}.

	\item To carry out the proof, we track the dependence on $N$ (the number of families of dust) in the argument \cite{HL.HF}. We then need a modification of the argument so that $\ep$, the size of $(g_0,U_0, f, u)$, is independent of $\frac{1}{N}$. However, we will assume that $\lambda$ is small compared to some function of $N$ to obtain extra smallness.
\end{itemize}

\subsection{Approximation of a Vlasov field by null dusts}\label{sec:vlasov.by.dust}
Using that the set of convex combinations of Dirac measures is weak-* dense in the set of all probability measures, we construct a particular  weak-* approximating sequence as follows.
Let $m$ be a given probability measure on $\mathbb S^1 \doteq \mathbb R/(2\pi \mathbb Z)$. For all $N \in \mathbb N$, and $\bA = 0,1,\cdots,N-1$, we can find $N$ separated points $\om_\bA^{(N)} = \f{\bA}{2\pi N} \in \m S^1$, and $N$ coefficients $\sigma_\bA^{(N)} = m\Big([\f{\bA}{2\pi N}, \f{\bA+1}{2\pi N})\Big)$ (with $\sigma_\bA^{(N)} \geq 0$ and $\sum_{\bA=0}^{N-1} \sigma_\bA^{(N)} =1$) such that
\begin{equation}\label{eq:m.limit}
	\sum_{\bA=0}^{N-1} \sigma_\bA^{(N)} \delta_{\omega_{\bA}^N} \overset{\ast}{\rightharpoonup} m,
\end{equation}
in the weak-* topology as $N \to \infty$.
To approach \eqref{vlasovsys} by $N$ dusts, we consider the initial data for \eqref{vlasovsys} $(\psi,\partial_t \psi, F^\psi(\omega),F^\varpi(\omega))$. We then solve the coupled system
\begin{equation}\label{backdeux}
	\left\{\begin{array}{l}
		\mathrm{Ric}_{\mu \nu}(g)= 2 \langle \partial_\mu U, \partial_\nu U \rangle + \sum_{{\bA}} ((F^\psi_{{\bA}})^2+\frac{e^{-4\psi}}{4}(F^\varpi_\bA)^2)\partial_\mu u_{{\bA}} \partial_\nu u_{{\bA}},\\
		\Box_g \psi + \frac{1}{2}e^{-4\psi}g^{-1}(d\varpi,d\varpi)  = 0,\\
		\Box_g \varpi -4g^{-1}(d\varpi,d\psi)=0,\\
		2(g^{-1})^{\alpha \beta}\partial_{\alpha} u_\bA \partial_{\beta} F^{\psi}_\bA + (\Box_{{g}} u_\bA) F^\psi_\bA + e^{-4\psi}(g^{-1})^{\alpha \beta}\partial_\alpha \varpi \partial_\beta u_\bA F^\varpi_\bA = 0,\\
		2(g^{-1})^{\alpha \beta}\partial_{\alpha} u_\bA \partial_{\beta} F^{\varpi}_\bA + (\Box_{{g}} u_\bA) F^\varpi_\bA -4(g^{-1})^{\alpha \beta}\partial_\alpha \varpi \partial_\beta u_\bA F^\psi_\bA -4(g^{-1})^{\alpha \beta}\partial_\alpha \psi \partial_\beta u_\bA F^\varpi_\bA=0,\\
		(g^{-1})^{\alpha \beta}\partial_\alpha u_{{\bA}} \partial_\beta u_{{\bA}}=0.
	\end{array}
	\right.
\end{equation}
where we have dropped the subscript $N$ and denoted $u_{{\bA}} =u(\omega_{\bA})$ and $F^{\psi,\varpi}_{{\bA}}=\sigma_{\bA}^{\f 12} f^{\psi,\varpi}(\omega_{\bA})$. 

Notice that in \eqref{backdeux}, instead of having a single transport equation for the null dust as in \eqref{back}, we have introduced a decomposition of the density of the null dust as $F^2_\bA= (F^\psi_{{\bA}})^2+\frac{e^{-4\psi}}{4}(F^\varpi_\bA)^2$, and the two transport equations for $F^\psi_{{\bA}}$ and $F^\varpi_\bA$ in \eqref{backdeux} implies the following transport equation:
$$2(g^{-1})^{\alpha \beta}\partial_{\alpha} u_\bA \partial_{\beta} F_\bA + (\Box_{{g}} u_\bA) F_\bA=0.$$
The importance of this decomposition will become clear in the next section.

We consider solutions to \eqref{backdeux} with high regularity. Using a compactness argument, we can show the convergence of a subsequence towards a solution of \eqref{vlasovsys} as $N \to \infty$. It thus remains to approximate solutions to \eqref{backdeux} by solutions to \eqref{eq:U1vac.again}

\subsection{Parametrix for $\psi$ and $\varpi$}\label{sec:parametrix.vlasov}
In order to approach a solution of \eqref{backdeux} by solutions to \eqref{eq:U1vac.again}, we show the existence of solutions to \eqref{eq:U1vac.again} of the form
\begin{equation}\label{ansatz2}
	\psi_\lambda = \psi_0 + \sum_\bA\lambda  F^\psi_{\bA}\cos\left(\frac{ u_{\bA}}{\lambda}\right)+ \wht \psi,\quad \varpi_\lambda = \varpi_0 + \sum_\bA\lambda F^\varpi_{\bA}\cos\left(\frac{u_{\bA}}{\lambda}\right)+ \wht \varpi,\quad \mfg_\lambda=\mfg_0+\wht \mfg_\lambda,
\end{equation}
where $\wht \psi,\, \wht \varpi,\, \wht \mfg_\lambda $ are $O(\lambda^2)$.

The construction is similar to the one explained in Section~\ref{sec:U1.backward}. In particular, we need a more precise parametrix than \eqref{ansatz2} in order to capture nonlinear interactions in parallel directions (see Section~\ref{sec:dust.improved.para}). Here, we point out a few new difficulties corresponding to the $N\to \infty$ limit.
\begin{itemize}
	\item There is already a difficulty to obtain a local existence result, even on a time interval depending on $\lambda$, for initial data consistent with \eqref{ansatz2}. Indeed, due to the globality (in space) of the elliptic estimates for the metric components, local existence requires a smallness assumption, and a straightforward extension\footnote{We recall that strictly speaking \cite{HL.elliptic} only applies to the polarized case, though the proof carries over to the non-polarized case.} of the results in \cite{HL.elliptic} would require smallness for $\|\partial \psi\|_{L^\infty}$ and $\|\partial \varpi\|_{L^\infty}$. On the other hand, for the main high-frequency term in $\sum_\bA  F^\psi_{\bA} \partial u_\bA \cos\left(\frac{ u_{\bA}}{\lambda}\right)$, since we only have smallness for $\Big(\sum_\bA  |F^\psi_{\bA}|^2 + |F^\varpi_{\bA}|^2\Big)^{\f 12}$ but not for $\sum_\bA \Big( |F^\psi_{\bA}|^2 + |F^\varpi_{\bA}| \Big)$, it is unclear that $\|\partial \psi\|_{L^\infty}$ and $\|\partial \varpi\|_{L^\infty}$ are small. Nonetheless, almost orthogonality of the high-frequency phases allows one to conclude that $\|\partial \psi\|_{L^4}$ is small when $\lambda$ is sufficiently large with respect to $N$ (compare \eqref{eq:main.thm.vlasov.bd.2} with \eqref{eq:main.thm.2.bd.2}). Because of this, we need to use the more recent improved local existence result by Touati \cite{Touati.local} (see Remark~\ref{rmk.touati.local}).	
	\item A lot of terms in our construction can only be estimated with large constants that grow with $N$. As a result, the corresponding $O(\lambda^2)$ terms in Section~\ref{sec:U1.backward} are now only required to obey an $N$-dependent bound $\lambda^2 e^{A(N)t}$, where $A(N)$ grows polynomially in $N$. 
	\item A priori, one potential danger would be that error terms of size $\lambda^2 e^{2A(N)t}$ arise from nonlinear interactions, i.e., the $N$ dependence becomes worse for nonlinear interactions. It turns out that this does not occur since the extra exponential growth in $N$ can be compensated by additional factors of $\lambda$. However, in order to achieve this, we need to work with phases $u_{\bA}$ in the parametrix that solve the eikonal equation for the perturbed metric $g_\lambda$ instead of that for the background metric $g_0$. 
	\item In order to control the $u_{\bA}$ satisfying the true eikonal equation, it will be useful to control the corresponding null second fundamental form $\chi_{\bA} \doteq \Box_{g_\lambda} u_{\bA}$ using the Raychaudhuri equation. To carry out the estimates, a parametrix decomposition has to be introduced, both for $u_{\bA}$ and $\chi_{\bA}$.
	\end{itemize}
Finally, there is also an additional difficulty coming from consider the general (instead of polarized) $\mathbb U(1)$ symmetry:
	\begin{itemize}
	\item The presence of the semilinear nonlinear terms on the right-hand side in the wave equations for $\psi$ and $\varpi$ requires us to modify our ansatz. The transport equations satisfied by $F_\bA^\psi$ and $F_\bA^\varpi$ in \eqref{backdeux} are exactly designed so that the zeroth order terms (in powers of $\lambda$) to vanish. Similar adaptations are also necessary at higher order.
	Importantly, thanks to the null form, no harmonics are generated at the first order. 
\end{itemize}

\section{High-frequency spacetimes in generalized wave coordinates}\label{sec:wave.coordinates}

In this section, we consider high-frequency spacetimes in generalized wave coordinates. 

\begin{definition}
Let $n \geq 2$. We say that $(\mathcal M^{n+1}, g)$ satisfies the \textbf{generalized wave coordinates condition} if
\begin{equation}\label{eq:gen.wave.coord}
g^{\mu\nu} \Gamma_{\mu\nu}^\alp = H^\alp,\quad \forall \alp.
\end{equation}
In the special case that $H^\alp \equiv 0$, we call \eqref{eq:gen.wave.coord} the \textbf{wave coordinates condition}.
\end{definition}

The Einstein vacuum equations in generalized wave coordinates can be written
\begin{equation}\label{gw}\Box_g g_{\mu \nu}= Z_{\mu \nu}(\partial g, \partial g) + g_{(\mu|\rho}\partial_{|\nu)}H^\rho,
	\end{equation}
and therefore takes the form of a system of quasilinear wave equations.

It is well-known, particularly due to the work of Lindblad--Rodnianski \cite{LinRod.WN, LinRod} on the global stability of Minkowski spacetime in wave coordinates, that the nonlinearity has a special \emph{weak null} structure, which is weaker than the classical null condition, but is still much more special than generic quadratic derivative nonlinearities. This particular structure --- both the presence of a weak null structure and the failure of the classical null condition --- is important for Burnett's conjecture.

\subsection{Burnett's conjecture in generalized wave coordinates}\label{sec:wave.Burnett}

In order to simplify the exposition, we focus on the case where the wave coordinate condition holds, i.e., when $H^\alp \equiv 0$ $\forall \alp$ in \eqref{eq:gen.wave.coord}. The theorem easily generalizes to the case where $H_i^\alp \to H_0^\alp$ is a suitably strong (but still weaker than $C^1$) topology; see \cite{HL.wave} for details.

\begin{theorem}[H.--L. \cite{HL.wave}]\label{thm:Burnett.wave}
Burnett's conjecture (with high frequency condition \eqref{eq:intro.HF.def} for $K = 2$) holds if we assume in addition that the wave coordinate condition holds for $g_0$ and for $g_i$ for all $i\geq 1$.
\end{theorem}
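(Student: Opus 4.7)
The plan is to adapt the microlocal-defect-measure strategy of Section~\ref{sec:U1.forward} to wave coordinates, using the Lindblad--Rodnianski weak null structure in place of the elliptic gauge. The condition \eqref{eq:intro.HF.def} with $K=2$ gives $g_i \to g_0$ uniformly together with $\partial(g_i-g_0)_{\mu\nu}\rightharpoonup 0$ weakly in $L^2_{\mathrm{loc}}$. After passing to a subsequence, Theorem~\ref{thm:existenceMDM} applied component-wise produces a matrix-valued non-negative Radon measure $\mu^{\mu\nu,\rho\sigma}$ on $S^*\mathcal M$ with
\begin{equation*}
\lim_{i\to\infty}\langle \partial_\alpha(g_i-g_0)_{\mu\nu},\,A\,\partial_\beta(g_i-g_0)_{\rho\sigma}\rangle_{L^2}=\int_{S^*\mathcal M}a(x,\xi)\,\xi_\alpha\xi_\beta\,\ud\mu^{\mu\nu,\rho\sigma}
\end{equation*}
for every $A=\mathrm{Op}(a)$ with principal symbol homogeneous of degree $0$ in $\xi$ and compactly supported in $x$. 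Rewriting $g_i^{\alpha\beta}\partial^2_{\alpha\beta}g_i = \widetilde{\Box}_{g_0}g_i + (g_i-g_0)^{\alpha\beta}\partial^2_{\alpha\beta}g_i$ and reorganizing $(g_i-g_0)\partial^2 g_i = \partial[(g_i-g_0)\partial g_i]-\partial(g_i-g_0)\cdot\partial g_i$, then using the uniform smallness $|g_i-g_0|\lesssim\lambda_i$, shows that $\widetilde{\Box}_{g_0}(g_i-g_0)$ is compact in $H^{-1}_{\mathrm{loc}}$; Theorem~\ref{thm:localization} then confines $\mu$ to the zero mass shell $\{g_0^{-1}(\xi,\xi)=0\}$.

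Next, pass to the weak limit in the reduced Einstein equations. In wave coordinates $\mathrm{Ric}(g)_{\mu\nu}=-\tfrac12 g^{\alpha\beta}\partial^2_{\alpha\beta}g_{\mu\nu}+P_{\mu\nu}(\partial g,\partial g)+Q_{\mu\nu}(\partial g,\partial g)$, where $Q$ is a sum of classical $Q_0$ and $Q_{\alpha\beta}$ null forms and $P$ is the weak-null remainder which, by Lindblad--Rodnianski, consists of terms of the schematic shape $g_0^{\bullet\bullet}g_0^{\bullet\bullet}\partial_\mu g_{\bullet\bullet}\partial_\nu g_{\bullet\bullet}$. The $Q$ terms pass to their $g_0$-counterparts by \eqref{eq:null.CC.Qab}--\eqref{eq:null.CC.Q0}. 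Each $P$ term produces a defect of the form $\xi_\mu\xi_\nu$ times a contraction of $\mu$, while the quasilinear defect computed above, after using that $\mu$ lives on the null cone of $g_0$, likewise reduces to the same $\xi_\mu\xi_\nu$ shape. Combining these defects and invoking the algebraic content of the weak null structure, one obtains a non-negative scalar measure $\nu$ on the null cone of $g_0$ such that
\begin{equation*}
\mathrm{Ric}(g_0)_{\mu\nu}=\int_{S^*\mathcal M}\xi_\mu\xi_\nu\,\ud\nu,
\end{equation*}
which is precisely the stress-energy of a measure-valued massless Vlasov field.

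To finish, one derives the transport equation \eqref{eq:transport.def} for $\nu$. Following the Minkowskian template of Section~\ref{sec:Mink.warmup}, build the energy-momentum-like tensor $\mathbb T^A_{\mu\nu}[(g_i-g_0)_{\bullet\bullet},(g_i-g_0)_{\bullet\bullet}]$ associated to a zeroth-order pseudo-differential operator $A=\mathrm{Op}(a)$, integrate its divergence identity against a fixed vector field $Y$, and use $\mathrm{Ric}(g_i)=0$ to trade the wave part of $(g_i-g_0)_{\mu\nu}$ for nonlinear quantities. In the limit $i\to\infty$, the exact divergence vanishes, the commutator $[\widetilde{\Box}_{g_0},A]$ contributes $\int\xi_t\{g_0^{-1}(\xi,\xi),a\}\,\ud\nu$, and the remaining trilinear source terms must be shown to vanish: classical null forms by the trilinear compensated compactness of Section~\ref{sec:Q0.trilinear}; the weak-null $P$ terms by the analogous trilinear argument combined with the same algebraic identity that gave non-negativity of $\nu$; and the quasilinear commutator $[A,g_i-g_0]\partial^2(g_i-g_0)$ by a Calder\'on-type bound in the spirit of Section~\ref{sec:elliptic.wave}. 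The standard reparametrization $a\leadsto\widetilde a/\xi_t$ then yields \eqref{eq:transport.def} for arbitrary test symbols.

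The principal obstacle is the quasilinear commutator in the final step. Unlike the $\mathbb U(1)$ elliptic gauge of Section~\ref{secell}, no component of $g_i$ is known to converge more strongly than uniformly, so the strong $C^1$-convergence of $(\gamma_i,\beta_i)$ used in Section~\ref{sec:elliptic.wave} is unavailable. The resolution relies on a phase-space decomposition analogous to \eqref{eq:intro.hard.yet.again}: on the region where the frequency of $g_i-g_0$ dominates, the uniform bound $|g_i-g_0|\lesssim\lambda_i$ supplies the needed smallness after redistributing derivatives; on the region where the frequency of $\partial^2(g_i-g_0)$ dominates, one uses the vacuum equation to replace $\widetilde{\Box}_{g_0}(g_i-g_0)$ by $L^\infty$-bounded nonlinear quantities, restoring the balance. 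The $K=2$ exponent in \eqref{eq:intro.HF.def} is exactly the threshold making this argument close.
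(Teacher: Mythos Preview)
Your overall architecture matches the paper's, but several of the ``remaining trilinear source terms must be shown to vanish'' steps are genuine gaps rather than routine verifications.

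First, the trilinear compensated compactness of Section~\ref{sec:Q0.trilinear} handles only $Q_0$; it relies on the identity $Q_0(\phi,\psi)=\tfrac12\Box(\phi\psi)-\tfrac12\phi\Box\psi-\tfrac12\psi\Box\phi$, which has no analogue for $Q_{\alpha\beta}$. In the paper the $Q_{\alpha\beta}$ terms are dealt with by the trilinear normal form estimates of Ionescu--Pausader, a different mechanism that you do not invoke.

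Second, your treatment of the $P$ terms (``analogous trilinear argument combined with the same algebraic identity that gave non-negativity of $\nu$'') misses the actual mechanism. The $P$ contribution to the transport equation does \emph{not} vanish by non-negativity; one must integrate by parts to swap $\partial_t$ and $\partial_\alpha$, producing a $Q_{t\alpha}$ null form, and then use the wave coordinate condition $g_0^{\alpha\alpha'}\partial_\alpha h_{\alpha'\beta'}=\tfrac12 g_0^{\alpha\alpha'}\partial_{\beta'}h_{\alpha\alpha'}+\ldots$ to convert the remaining factor into a $Q_0$. This hidden trilinear null structure is the key observation, and it is specific to the precise index structure of $P$.

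Third, you omit the linear term $L_{\mu\nu}(g_0)(\partial h_i)$ in Lemma~\ref{lem:wave.h.equation}. When plugged into the energy identity it does \emph{not} vanish: it produces exactly the first term on the right of the propagation equation \eqref{eq:wave.Burnett.main}, and an algebraic cancellation between the two is what makes the argument close. Without tracking $L$, your claim that ``the commutator $[\widetilde\Box_{g_0},A]$ contributes $\int\xi_t\{g_0^{-1}(\xi,\xi),a\}\,\ud\nu$'' is incomplete, because the contraction defining $\nu$ from $\mu^{\mu\nu,\rho\sigma}$ has $x$-dependent coefficients whose derivatives enter the Poisson bracket.

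Finally, your phase-space decomposition for the quasilinear commutator is not the one that works. The relevant trichotomy is on the frequency of $h_i$ alone: low frequency (Calder\'on commutator gains a derivative), high frequency away from the $g_0$-light cone (use $\Box_{g_0}h_i$ bounds, elliptic there), and high frequency near the cone. In the last region one writes the frequency-localized $h_i=\partial_t\mathfrak k_i$ and integrates by parts; the wave coordinate condition again converts the resulting expression into null forms. Your proposed dichotomy ``frequency of $g_i-g_0$ dominates vs.\ frequency of $\partial^2(g_i-g_0)$ dominates'' does not isolate the dangerous near-cone region, and ``redistributing derivatives'' alone does not suffice there.
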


As for Burnett's conjecture in elliptic gauge under $\m U(1)$ symmetry (see Section~\ref{sec:U1.forward}), the proof of the theorem gives a precise description of the massless Vlasov field in the limit, which is related to a suitably defined microlocal defect measure. Also as in Section~\ref{sec:U1.forward}, the key to Theorem~\ref{thm:Burnett.wave} is the precise structure of the linear and nonlinear terms. From the work of Lindblad--Rodnianski \cite{LinRod.WN}, it is known that the nonlinear terms in the Ricci curvature tensor in wave coordinates do not satisfy the classical null condition. In fact, the terms which fail the null condition can be identified:
$$\mathrm{Ric}_{\mu\nu}(g) = - \f 12 \widetilde{\Box}_g g_{\mu\nu} + \f 12 P_{\mu\nu}(g)(\rd g, \rd g) + \mbox{terms satisfying null condition},$$
where
\begin{equation}\label{eq:tBox.def}
\widetilde{\Box}_g q_{\mu\nu} \doteq g^{\alp\bt} \rd^2_{\alp\bt} q_{\mu\nu},
\end{equation}
and 
\begin{equation}\label{eq:P.def}
P_{\mu\nu}(g)(\rd p, \rd q) \doteq \frac 14 g^{\alp\alp'}\rd_\mu p_{\alp \alp'} g^{\bt \bt'}\rd_\nu q_{\bt \bt'} - \frac 12 g^{\alp\alp'}\rd_\mu p_{\alp \bt} g^{\bt \bt'}\rd_\nu q_{\alp' \bt'}.
\end{equation}

In order to prove Theorem~\ref{thm:Burnett.wave}, we need the linear and nonlinear structures of $\mathrm{Ric}_{\mu\nu}(g_0) - \mathrm{Ric}_{\mu\nu}(g_i)$:
\begin{lemma}\label{lem:wave.h.equation}
Define $h_i = g_i - g_0$. Assume that
\begin{enumerate}
\item the wave coordinates condition holds $g_i$ and $g_0$,
\item $g_i, g_0, g_i^{-1}, g_0^{-1}$ and their first derivatives are uniformly bounded, and
\item $g_i^{-1} - g_0^{-1} = o_{i\to \infty}(1)$.
\end{enumerate}

Then 
\begin{equation}\label{eq:Ric.form}
\begin{split}
 \mathrm{Ric}_{\mu\nu}(g_0)   
= &\: \mathrm{Ric}_{\mu\nu}(g_i) + \f 12 \widetilde{\Box}_{g_0} (h_i)_{\mu\nu} - \f 12 g_0^{\alp\alp'} g_0^{\bt\bt'} (h_i)_{\alp'\bt'} \rd^2_{\alp\bt} (h_i)_{\mu\nu} - \f 12 L_{\mu\nu}(g_0)(\rd h_i) \\
&\: - \f 12 P_{\mu\nu}(g_0)(\rd h_i, \rd h_i) 
+ \mbox{quadratic terms in $\rd h_i$ satisfying null condition} + o_{i\to \infty}(1),
\end{split}
\end{equation}
where $\tBox_{g_0}$ and $P_{\mu\nu}$ are as in \eqref{eq:tBox.def} and \eqref{eq:P.def}, respectively, and the linear term $L_{\mu\nu}$ is given by 
\begin{equation}\label{eq:L.def}
L_{\mu\nu}(g_0)(\rd h) \doteq 4 g_0^{\sigma\rho}\Gamma_{\rho}{}^{\alp}{}_{(\mu|}(g_0)\rd_{\sigma} h_{|\nu)\alp} + D_{(\mu|}^{\alp\sigma} (g_0)\rd_{|\nu)} h_{\alp\sigma}
\end{equation}
and 
\begin{equation}
D_{\mu}^{\alp\sigma}(g_0) \doteq g_0^{\alp\bt} g_0^{\sigma\rho} (2 \rd_\rho (g_0)_{\bt \mu} - \rd_{\mu} (g_0)_{\bt\rho}).
\end{equation}
\end{lemma}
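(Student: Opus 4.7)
Plan. The starting point is the Lindblad--Rodnianski decomposition of the Ricci tensor in wave coordinates~\cite{LinRod.WN, LinRod}: any metric $g$ satisfying $g^{\mu\nu}\Gamma^\alpha_{\mu\nu}(g)=0$ obeys
\[
\mathrm{Ric}_{\mu\nu}(g) = -\tfrac{1}{2}\tBox_g g_{\mu\nu} + \tfrac{1}{2}P_{\mu\nu}(g)(\rd g,\rd g) + N_{\mu\nu}(g)(\rd g,\rd g),
\]
with $P$ as in \eqref{eq:P.def} and $N_{\mu\nu}$ a specific combination of the classical null forms $Q_0$ and $Q_{\alpha\beta}$ applied to pairs of metric components. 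I would apply this identity to both $g_i$ and $g_0$, subtract, and then substitute $g_i=g_0+h_i$ together with the Neumann expansion $g_i^{\alpha\beta}=g_0^{\alpha\beta}-g_0^{\alpha\gamma}(h_i)_{\gamma\delta}g_0^{\delta\beta}+O(|h_i|^2)$, matching pieces against \eqref{eq:Ric.form} term by term.

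The quasilinear part expands as
\[
\tfrac{1}{2}\bigl[\tBox_{g_i}(g_i) - \tBox_{g_0}(g_0)\bigr]_{\mu\nu} = \tfrac{1}{2}(g_i^{\alpha\beta}-g_0^{\alpha\beta})\rd^2_{\alpha\beta}(g_0)_{\mu\nu} + \tfrac{1}{2}g_i^{\alpha\beta}\rd^2_{\alpha\beta}(h_i)_{\mu\nu}.
\]
Inserting the Neumann series and using hypothesis (3) together with the smoothness of $g_0$ yields precisely $\tfrac{1}{2}\tBox_{g_0}(h_i)_{\mu\nu}$ and $-\tfrac{1}{2}g_0^{\alpha\alpha'}g_0^{\beta\beta'}(h_i)_{\alpha'\beta'}\rd^2_{\alpha\beta}(h_i)_{\mu\nu}$ as in \eqref{eq:Ric.form}. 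The $O(|h_i|^2\,|\rd^2 h_i|)$ remainder from the Neumann series is absorbed in $o_{i\to\infty}(1)$ via the $K=2$ high-frequency bounds $|h_i|=O(\lambda_i)$, $|\rd^2 h_i|=O(\lambda_i^{-1})$, which give $|h_i|^2|\rd^2 h_i|=O(\lambda_i)$. For the semilinear part, I would further expand $\rd g_i=\rd g_0+\rd h_i$ inside $P$ and $N$; since $P(g_i)-P(g_0)$ and $N(g_i)-N(g_0)$ are controlled by $g_i^{-1}-g_0^{-1}=o(1)$ while all first derivatives are uniformly bounded by (2), one may freely replace $g_i$ by $g_0$ in the coefficient slots modulo $o(1)$. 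The bilinear expansion then produces the term $-\tfrac{1}{2}P_{\mu\nu}(g_0)(\rd h_i,\rd h_i)$, quadratic null-form contributions $-N_{\mu\nu}(g_0)(\rd h_i,\rd h_i)$ absorbed in the ``null forms quadratic in $\rd h_i$'' remainder of \eqref{eq:Ric.form}, and mixed terms linear in $\rd h_i$ with $\rd g_0$ coefficients.

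The decisive step is to show that the sum of the mixed linear-in-$\rd h_i$ terms equals $-\tfrac{1}{2}L_{\mu\nu}(g_0)(\rd h_i)$ for $L$ defined in \eqref{eq:L.def}. Each such term has the schematic form $g_0^{-1}\cdot\rd g_0\cdot\rd h_i$. Using the wave coordinates identity $g_0^{\mu\nu}\Gamma^\alpha_{\mu\nu}(g_0)=0$, one can trade the trace contraction $g_0^{\alpha\alpha'}\rd_\mu(g_0)_{\alpha\alpha'}$ for $g_0^{\alpha\alpha'}\rd_{\alpha'}(g_0)_{\mu\alpha}$, and reorganize the remaining contractions into the two families appearing in \eqref{eq:L.def}: a Christoffel piece $g_0^{\sigma\rho}\Gamma_\rho{}^\alpha{}_{(\mu|}(g_0)\rd_\sigma h_{|\nu)\alpha}$ coming essentially from $\rd g_0$ contracted against the one raised index of $\rd h_i$ in $N$, and a divergence piece $D^{\alpha\sigma}_{(\mu|}(g_0)\rd_{|\nu)}h_{\alpha\sigma}$ coming from the $\rd_\nu$-slot of $P_{\mu\nu}$ after the wave coordinates identity. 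The main obstacle is precisely this bookkeeping: identifying every linear-in-$\rd h_i$ contribution from $P(g_0)(\rd g_0,\rd h_i)$, $P(g_0)(\rd h_i,\rd g_0)$, and from each $Q_0$, $Q_{\alpha\beta}$ null form packaged inside $N$, and verifying that they collapse to the compact form \eqref{eq:L.def} after a single application of the wave coordinates condition for $g_0$.
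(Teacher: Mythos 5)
The paper does not reproduce the proof of this lemma (it refers the reader to \cite{HL.wave}), so I will assess the proposal on its own terms. Your overall route is the natural one and is surely the one followed in the reference: apply the Lindblad--Rodnianski identity
$$\mathrm{Ric}_{\mu\nu}(g)=-\tfrac12\tBox_g g_{\mu\nu}+\tfrac12 P_{\mu\nu}(g)(\rd g,\rd g)+N_{\mu\nu}(g)(\rd g,\rd g)$$
(valid exactly in wave coordinates) to both $g_i$ and $g_0$, subtract, insert $g_i=g_0+h_i$ and the expansion of $g_i^{-1}$, and collect. Your derivation of $\tfrac12\tBox_{g_0}h_i-\tfrac12 g_0^{\alp\alp'}g_0^{\bt\bt'}(h_i)_{\alp'\bt'}\rd^2_{\alp\bt}(h_i)_{\mu\nu}$ from the quasilinear part is correct; so is the observation that coefficient differences $g_i^{-1}-g_0^{-1}=o(1)$ can be absorbed when paired against uniformly bounded first derivatives, and that the bilinear expansion of $P(g_0)(\rd g_i,\rd g_i)$ yields the term $-\tfrac12 P_{\mu\nu}(g_0)(\rd h_i,\rd h_i)$, a batch of null forms in $\rd h_i$, and cross terms linear in $\rd h_i$.

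Two caveats. First, you invoke the $K=2$ Burnett bounds to dispose of the $O(|h_i|^2|\rd^2 h_i|)$ Neumann remainder and the $o(1)\cdot h_i\cdot\rd^2 h_i$ error, but these bounds are not among the stated hypotheses of the lemma (which control only first derivatives). Since the lemma is only ever applied inside Theorem~\ref{thm:Burnett.wave}, where $K=2$ is assumed, this is harmless, but it should be flagged explicitly rather than absorbed silently; otherwise the reader cannot tell whether the $o(1)$ is claimed under the hypotheses as stated. Second, and more substantively: the entire content of the lemma beyond a routine expansion is the assertion that the cross terms linear in $\rd h_i$ --- coming from $P_{\mu\nu}(g_0)(\rd g_0,\rd h_i)$, $P_{\mu\nu}(g_0)(\rd h_i,\rd g_0)$, and the linear-in-$\rd h_i$ parts of $N_{\mu\nu}(g_0)$ --- collapse, after using $g_0^{\mu\nu}\Gamma^\alp_{\mu\nu}(g_0)=0$, to exactly $-\tfrac12 L_{\mu\nu}(g_0)(\rd h_i)$ with $L$ as in \eqref{eq:L.def}. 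You correctly identify this as the decisive step and correctly describe the trade $g_0^{\alp\alp'}\rd_\mu(g_0)_{\alp\alp'}\leftrightarrow 2g_0^{\alp\alp'}\rd_{\alp'}(g_0)_{\mu\alp}$ furnished by the wave gauge for $g_0$, but you supply no computation confirming the coefficient $4$ on the Christoffel term, the precise combination $2\rd_\rho(g_0)_{\bt\mu}-\rd_\mu(g_0)_{\bt\rho}$ in $D_\mu^{\alp\sigma}$, or the $(\mu\nu)$-symmetrizations. These normalizations are exactly what the lemma asserts and cannot be taken on faith; in particular, the exact cancellation against the first term on the right of \eqref{eq:wave.Burnett.main} in the downstream proof of Theorem~\ref{thm:wave.Burnett.key} depends on them. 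As written, this is a correct plan with the crux left unverified.
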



Using the structure of the equation in Lemma~\ref{lem:wave.h.equation}, it is not difficult to see that the limiting Ricci curvature tensor must take the following form:
\begin{proposition}\label{prop:wave.Ricci}
$\mathrm{Ric}_{\mu\nu}(g_0)$ is given by 
$$\int_{\mathcal M} \psi \mathrm{Ric}_{\mu\nu}(g_0) \, \mathrm{dVol}_g = \int_{\mathcal S^* \mathcal M} \xi_\mu\xi_\nu \psi\, \ud \mu,\quad \forall \psi \in C^\infty_c(\mathcal M),$$
where 
\begin{equation}\label{def:wave.coord.mu}
\mu = g_0^{\alp\rho} g_0^{\bt\sigma} (\f 14 \mu_{\rho\bt\alp\sigma} - \f 18 \mu_{\rho\alp\bt\sigma}),
\end{equation}
and $\mu_{\alp\bt\rho\sigma}$ are the microlocal defect measure defined (similarly as Proposition~\ref{prop:nu}) so that (after passing to a subsequence)
$$\la \rd_\gamma (h_i)_{\alp\bt}, A \rd_\de (h_i)_{\rho\sigma} \ra_{L^2} \to \int_{\mathcal S^* \RR^{3+1}} a \, \ud \mu_{\alp\bt\rho\sigma}$$
for any zeroth order pseudodifferential operator $A$ with principal symbol $a$.
\end{proposition}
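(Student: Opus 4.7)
My plan is to pass to the distributional limit in the identity \eqref{eq:Ric.form} of Lemma~\ref{lem:wave.h.equation} applied with $\mathrm{Ric}(g_i)=0$ (the $g_i$ being vacuum), tested against an arbitrary $\psi\in C^\infty_c(\mathcal M)$. The strategy is to show that the only surviving contribution on the right-hand side is $-\tfrac{1}{2}P_{\mu\nu}(g_0)(\rd h_i,\rd h_i)$, whose limit, read off from \eqref{eq:P.def} via the defining property of $\mu_{\alp\bt\rho\sigma}$, will produce exactly $\xi_\mu\xi_\nu\,\mu$ for $\mu$ as in \eqref{def:wave.coord.mu}. The hypothesis \eqref{eq:intro.HF.def} with $K=2$ supplies uniform $C^0$ convergence $h_i\to 0$ and uniform $L^\infty$ bounds on $\rd h_i$, so in particular $\rd h_i\rightharpoonup 0$ weakly in $L^2_{\mathrm{loc}}$.

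The easy terms on the right of \eqref{eq:Ric.form} vanish as follows. The term $\tfrac{1}{2}\tBox_{g_0}(h_i)_{\mu\nu}$ vanishes after two integrations by parts onto $\psi$ (and the $g_0$-volume form), using $h_i\to 0$ uniformly. The linear first-order term $-\tfrac{1}{2}L_{\mu\nu}(g_0)(\rd h_i)$ vanishes because its coefficients are smooth functions of $g_0$ and $\rd h_i\rightharpoonup 0$. The quadratic terms satisfying the classical null condition vanish by the compensated-compactness identities \eqref{eq:null.CC.Qab}--\eqref{eq:null.CC.Q0}, rewriting each such term as either $\tfrac12\tBox_{g_0}(h_i\,h_i)$-type contributions or exact divergences, and exploiting $h_i\to 0$ uniformly together with the uniform bound on $\tBox_{g_0}h_i$ (which is itself a consequence of \eqref{eq:Ric.form}).

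The heart of the proof is the quasilinear term $-\tfrac{1}{2}g_0^{\alp\alp'}g_0^{\bt\bt'}(h_i)_{\alp'\bt'}\rd^2_{\alp\bt}(h_i)_{\mu\nu}$, which after one integration by parts against $\psi$ reduces (modulo a vanishing piece where the derivative falls on $\psi g_0^{\alp\alp'}g_0^{\bt\bt'}$) to the distributional limit of
\[
\tfrac{1}{2}\int\psi\,g_0^{\alp\alp'}g_0^{\bt\bt'}\,\rd_\alp (h_i)_{\alp'\bt'}\,\rd_\bt (h_i)_{\mu\nu}\,\mathrm{dVol}_{g_0}\;\longrightarrow\;\tfrac{1}{2}\int\psi\,\xi^{\alp'}\xi^{\bt'}\,\ud\mu_{\alp'\bt'\mu\nu}
\]
with indices raised by $g_0$. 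The main obstacle is showing that this apparent contribution in fact vanishes, which I would do via two applications of Theorem~\ref{thm:localization}. Subtracting the wave coordinate conditions $g_i^{\mu\nu}\Gamma^\rho_{\mu\nu}(g_i)=0=g_0^{\mu\nu}\Gamma^\rho_{\mu\nu}(g_0)$ and keeping only leading order produces the linearised constraint
\[
g_0^{\lambda\tau}\rd_\lambda (h_i)_{\tau\bt}-\tfrac{1}{2}g_0^{\lambda\tau}\rd_\bt (h_i)_{\lambda\tau}=O_{L^\infty}(\lambda_i),
\]
so Theorem~\ref{thm:localization}, in its vector-valued version applied to this first-order operator, yields $\xi^\tau\,\ud\mu_{\tau\bt,\rho\sigma}=\tfrac{1}{2}\xi_\bt\,g_0^{\lambda\tau}\,\ud\mu_{\lambda\tau,\rho\sigma}$. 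A second application, now to $\tBox_{g_0}h_i$ (uniformly bounded by \eqref{eq:Ric.form}), yields the mass-shell relation $|\xi|^2_{g_0}\,\ud\mu_{\alp\bt,\rho\sigma}=0$. Chaining the two constraints,
\[
\xi^{\alp'}\xi^{\bt'}\,\ud\mu_{\alp'\bt'\mu\nu}\;=\;\xi^{\bt'}\cdot\tfrac{1}{2}\xi_{\bt'}\,g_0^{\lambda\tau}\,\ud\mu_{\lambda\tau,\mu\nu}\;=\;\tfrac{1}{2}|\xi|^2_{g_0}\,g_0^{\lambda\tau}\,\ud\mu_{\lambda\tau,\mu\nu}\;=\;0,
\]
so the quasilinear term contributes zero.

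With every other contribution eliminated, it remains to unfold $-\tfrac{1}{2}P_{\mu\nu}(g_0)(\rd h_i,\rd h_i)$ using \eqref{eq:P.def} with $p=q=h_i$: each of the two terms is $\rd_\mu(h_i)_{\cdot\cdot}\,\rd_\nu(h_i)_{\cdot\cdot}$ multiplied by smooth $g_0$-dependent coefficients (which act as $0$-th order pseudodifferential operators). The defining property of $\mu_{\alp\bt\rho\sigma}$, with derivative indices $\gamma=\mu,\de=\nu$, then gives
\[
\int_{\mathcal M}\psi\Bigl(-\tfrac{1}{2}P_{\mu\nu}(g_0)(\rd h_i,\rd h_i)\Bigr)\mathrm{dVol}_{g_0}\;\longrightarrow\;\int_{\mathcal S^*\mathcal M}\psi\,\xi_\mu\xi_\nu\Bigl[\tfrac{1}{4}g_0^{\alp\alp'}g_0^{\bt\bt'}\,\ud\mu_{\alp\bt\alp'\bt'}-\tfrac{1}{8}g_0^{\alp\alp'}g_0^{\bt\bt'}\,\ud\mu_{\alp\alp'\bt\bt'}\Bigr].
\]
Relabelling contracted indices ($\alp\to\rho$, $\alp'\to\alp$, $\bt'\to\sigma$) and using the symmetry $g_0^{\alp\rho}=g_0^{\rho\alp}$ identifies the bracket with $\mu$ as defined in \eqref{def:wave.coord.mu}, finishing the proof.
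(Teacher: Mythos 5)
Your proof is correct and follows the same overall blueprint as the paper's: pass to the weak limit term by term in \eqref{eq:Ric.form} with $\mathrm{Ric}(g_i)=0$, kill the linear and null-condition terms, kill the quasilinear term, and read the $P$-term off as the microlocal defect measure \eqref{def:wave.coord.mu}. The only genuine deviation is in the quasilinear step. After the initial integration by parts, the paper applies the linearized wave coordinate condition to rewrite the resulting quadratic term $g_0^{\alp\alp'}g_0^{\bt\bt'}\rd_\alp(h_i)_{\alp'\bt'}\rd_\bt(h_i)_{\mu\nu}$ as an exact $Q_0$ null form $\tfrac12 g_0^{\alp\alp'}g_0^{\bt\bt'}\rd_{\bt'}(h_i)_{\alp\alp'}\rd_\bt(h_i)_{\mu\nu}+o(1)$, which then vanishes weakly by the standard compensated compactness identity \eqref{eq:null.CC.Q0}. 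You instead pass to the microlocal defect measure limit first, and then enforce two localization constraints --- the linearized wave coordinate relation $\xi^\tau\,\ud\mu_{\tau\bt\rho\sigma}=\tfrac12\xi_\bt\,g_0^{\lambda\tau}\,\ud\mu_{\lambda\tau\rho\sigma}$ and the mass-shell relation $|\xi|^2_{g_0}\,\ud\mu=0$ --- and chain them to kill the contraction $\xi^{\alp'}\xi^{\bt'}\,\ud\mu_{\alp'\bt'\mu\nu}$. The two arguments are using the same underlying ingredients (wave coordinate condition, boundedness of $\tBox_{g_0}h_i$, and the null structure of $Q_0$, which is exactly the mass-shell relation in the MDM language), so they are logically equivalent; the paper's formulation is slightly more elementary since it avoids invoking the localization machinery a second time at the level of the measure, while yours makes the role of the constraints on $\ud\mu_{\alp\bt\rho\sigma}$ more visible. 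One small remark: the wave-coordinate constraint on $\ud\mu$ is most cleanly derived by simply pairing the linearized constraint with $A\rd_\de(h_i)_{\rho\sigma}$ and passing to the limit directly (rather than by invoking a vector-valued version of Theorem~\ref{thm:localization}); the conclusion is the same.
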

\begin{proof}
This proposition amounts to computing the weak limit of the right-hand side of \eqref{eq:Ric.form}. First, notice that by assumption, $\mathrm{Ric}_{\mu\nu}(g_i) = 0$. Next, note that a similar argument as in \eqref{eq:null.CC.Qab} and \eqref{eq:null.CC.Q0} shows that the terms satisfying the null condition do not contribute to the limit. It thus suffices to consider terms written out in \eqref{eq:Ric.form}.

For the four remaining terms, notice that $\f 12 \widetilde{\Box}_{g_0} (h_i)_{\mu\nu}$ and $\f 12 L_{\mu\nu}(g_0)(\rd h_i)$ are both linear in $h_i$ and/or its derivatives, and thus have weak limit $ = 0$.

For the quasilinear term, we observe that 
\begin{equation}\label{eq:wave.Burnett.quasilinear}
\begin{split}
&\:  g_0^{\alp\alp'} g_0^{\bt\bt'} (h_i)_{\alp'\bt'} \rd^2_{\alp\bt} (h_i)_{\mu\nu} \\
= &\: \rd_\alp \Big(g_0^{\alp\alp'} g_0^{\bt\bt'} (h_i)_{\alp'\bt'}  \rd_{\bt} (h_i)_{\mu\nu}\Big) + g_0^{\alp\alp'} g_0^{\bt\bt'} \rd_{\alp} (h_i)_{\alp'\bt'} \rd_\bt (h_i)_{\mu\nu} + o(1),
\end{split}
\end{equation}
where we have used \eqref{eq:intro.HF.def}. Now the first term in \eqref{eq:wave.Burnett.quasilinear} is a total derivative of an $o(1)$ term, which tends to $0$ weakly. For the second term, we note that the wave coordinate condition allows us to rewrite it as 
$$\f 12 g_0^{\alp\alp'} g_0^{\bt\bt'} \rd_{\bt'} (h_i)_{\alp\alp'} \rd_\bt (h_i)_{\mu\nu} + o(1),$$
so that the first term contains an exact null form which therefore has a vanishing weak limit!

It thus follows that the only possibly non-zero limit comes from the term $\f 12 P_{\mu\nu}(g_0)(\rd h_i, \rd h_i)$ in \eqref{eq:Ric.form}. The definition of $\mu$ exactly captures the contribution of this term. \qedhere
\end{proof}

Just as in Theorem~\ref{thburnett}, the most difficult part of Burnett's conjecture is therefore the following transport equation for $\mu$:
\begin{theorem}\label{thm:wave.Burnett.key}
The following holds for $\mu$ defined in \eqref{def:wave.coord.mu}:
$$\int_{\mathcal S^* \mathcal M} \{g_0^{\mu\nu} \xi_\mu\xi_\nu, \widetilde{a} \} \, \ud \mu = 0, \forall \widetilde{a}$$
\end{theorem}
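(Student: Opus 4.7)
The plan is to mirror the derivation of Section~\ref{secpart2}, setting up a bilinear energy--momentum identity for $h_i$ with a pseudo-differential weight, and then verifying term-by-term that the nonlinear right-hand side of the $h_i$-equation from Lemma~\ref{lem:wave.h.equation} produces no contribution in the limit $i \to \infty$, so that the only surviving piece is the one coming from the commutator $[\tBox_{g_0},A]$, whose principal symbol is the Poisson bracket $\{g_0^{\mu\nu}\xi_\mu\xi_\nu, a\}$. Concretely, for a real-valued zeroth-order symbol $a(x,\xi)$ compactly supported in $x$ and homogeneous of degree $0$ in $\xi$, and for each index pattern $(\alpha,\beta,\rho,\sigma)$, I would introduce the bilinear current
$$\mathbb{T}^A_{\gamma\delta}[h_i]_{\alpha\beta,\rho\sigma} = \rd_{(\gamma|}(h_i)_{\alpha\beta}\,\rd_{|\delta)}A(h_i)_{\rho\sigma} - \tfrac{1}{2}(g_0)_{\gamma\delta}\,g_0^{\mu\nu}\rd_\mu(h_i)_{\alpha\beta}\,\rd_\nu A(h_i)_{\rho\sigma},$$
contract its $g_0$-divergence against a timelike vector field $Y$, and integrate over $\mathcal M$. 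After integration by parts, this yields (i) a commutator contribution whose limit is $\int \xi_\delta Y^\delta \{g_0^{\mu\nu}\xi_\mu\xi_\nu, a\}\, \ud\mu_{\alpha\beta\rho\sigma}$, (ii) pairings of $\tBox_{g_0}(h_i)_{\alpha\beta}$ and $\tBox_{g_0}(h_i)_{\rho\sigma}$ against $A\rd h_i$, and (iii) geometric error terms involving derivatives of $g_0$ but not of $a$, which in the limit produce multiples of $\mu_{\alpha\beta\rho\sigma}$ itself. Specializing $a$ to $\widetilde a/(\xi_\delta Y^\delta)$ as in Section~\ref{sec:Mink.warmup}, and then forming the combination dictated by \eqref{def:wave.coord.mu}, the commutator piece yields exactly the desired $\int\{g_0^{\mu\nu}\xi_\mu\xi_\nu,\widetilde a\}\, \ud\mu$.

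The substantive task is then to show that the right-hand side $R_i$ of $\tBox_{g_0}(h_i) = R_i$, paired against $A\rd h_i$, tends to zero. Several pieces are handled by established techniques: the $i$-independent source $\mathrm{Ric}_{\mu\nu}(g_0)$ vanishes against $\partial h_i \rightharpoonup 0$; the classical null-form quadratic contributions vanish by the trilinear compensated compactness of Section~\ref{sec:Q0.trilinear} (compare \eqref{eq:null.CC.Qab}--\eqref{eq:null.CC.Q0}); and the linear term $L_{\mu\nu}(g_0)(\rd h_i)$, being bilinear in $\rd h_i$ after pairing, produces symbols paired against $\mu_{\alpha\beta\rho\sigma}$ that are absorbed into the geometric error (iii) above. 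The quasilinear term $g_0^{\alpha\alpha'}g_0^{\beta\beta'}(h_i)_{\alpha'\beta'}\rd^2_{\alpha\beta}(h_i)_{\mu\nu}$ is treated in the style of Section~\ref{sec:elliptic.wave}, exploiting the uniform smallness $|h_i|\to 0$ together with Calder\'on commutator estimates, handling the dangerous time-frequency regime by direct Fourier-space cancellations using that $\tBox_{g_0}(h_i)$ is bounded in $L^2$ by the equation itself.

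The genuine new obstacle, absent in the $\m U(1)$ analysis, is the semilinear term $P_{\mu\nu}(g_0)(\rd h_i,\rd h_i)$: it fails the classical null condition, and is in fact the source of the nontrivial measure $\mu$ via Proposition~\ref{prop:wave.Ricci}. My plan is to exploit the wave coordinate condition imposed on \emph{both} $g_i$ and $g_0$: expanding $g_i^{\mu\nu}\Gamma^\alpha_{\mu\nu}(g_i) - g_0^{\mu\nu}\Gamma^\alpha_{\mu\nu}(g_0)=0$ to leading order in $h_i$ yields the linearized constraint $\rd^\mu (h_i)_{\mu\nu} - \tfrac{1}{2}\rd_\nu(g_0^{\alpha\beta}(h_i)_{\alpha\beta}) = o_{L^2}(1)$. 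By Theorem~\ref{thm:localization} this passes to an algebraic identity for the microlocal defect measures of the form $\xi^\mu \mu_{\mu\nu\rho\sigma} = \tfrac{1}{2}\xi_\nu g_0^{\alpha\beta}\mu_{\alpha\beta\rho\sigma}$ (and symmetrically in $(\rho,\sigma)$). A direct computation using the particular contraction pattern of \eqref{eq:P.def}, together with the support of $\mu_{\alpha\beta\rho\sigma}$ on the $g_0$-null cone $\{g_0^{\mu\nu}\xi_\mu\xi_\nu = 0\}$, should show that the limit of $\langle P_{\mu\nu}(g_0)(\rd h_i,\rd h_i), A\rd h_{i,\rho\sigma}\rangle$ is annihilated modulo these constraints once the indices are contracted according to \eqref{def:wave.coord.mu}. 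Organizing this algebraic cancellation -- which is the wave-coordinate analogue of the wave-map cancellation of Section~\ref{sec:wave.map.structure}, and is essentially Lindblad--Rodnianski's weak null structure re-expressed at the level of microlocal defect measures -- is the hard part of the proof.
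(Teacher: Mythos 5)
Your overall framework -- setting up a bilinear current weighted by a zeroth-order pseudodifferential operator, passing to microlocal defect measures, and reducing the transport equation to showing that the pairing of the right-hand side of the $h_i$-equation against $A\rd h_i$ vanishes in the limit -- matches the paper's derivation leading to \eqref{eq:wave.Burnett.main}. Your identification of the exact cancellation between the linear term $L_{\mu\nu}(g_0)(\rd h_i)$ and the geometric error piece is also essentially what the paper does. The genuine gap is in your treatment of the $P_{\mu\nu}$ term, which you correctly flag as the central difficulty. The quantity $\langle P_{\mu\nu}(g_0)(\rd h_i,\rd h_i), A\rd_t(h_i)_{\alpha'\beta'}\rangle$ is \emph{trilinear} in the weakly convergent derivatives $\rd h_i$, and such a limit is \emph{not} a functional of the bilinear microlocal defect measures $\mu_{\alpha\beta\rho\sigma}$. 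The polarization constraint $\xi^\mu\mu_{\mu\nu\rho\sigma}=\tfrac12\xi_\nu g_0^{\alpha\beta}\mu_{\alpha\beta\rho\sigma}$ you derive from the linearized wave coordinate condition is a valid and useful identity (it is used implicitly in the paper), but it constrains bilinear limits and cannot by itself annihilate a genuinely trilinear contribution; in general, trilinear correlations of weakly null sequences carry strictly more information than their $H$-measures. Any argument that tries to ``re-express'' the $P$ contribution at the level of $\mu$ must first reduce the trilinear form to a bilinear one, and that reduction is precisely the step you are missing.

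What the paper actually does for $P$ is a direct integration-by-parts manipulation on the \emph{trilinear} expression: pass $A$ to the adjoint side, swap $\rd_t$ and $\rd_\alpha$ at the cost of a $Q_{t\alpha}$ null form, and then invoke the linearized wave coordinate condition on the extracted divergence $g_0^{\alpha\alpha'}\rd_\alpha A^*(h_i)_{\alpha'\beta'}$ to produce a $Q_0$ null form. This exposes a hidden null structure in the trilinear term, after which the $Q_0$ pieces vanish by the trilinear compensated compactness of Section~\ref{sec:Q0.trilinear}, while the $Q_{\alpha\beta}$ pieces require the trilinear normal form estimates of Ionescu--Pasauder \cite{IP} -- a tool your proposal does not invoke but which is essential here, since the simple $\Box$-in-divergence-form argument only handles $Q_0$. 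Similarly, the quasilinear term is treated not by a single Calder\'on commutator step plus Fourier cancellations, but by a frequency decomposition that isolates the dangerous regime (high frequency near the $g_0$-light cone), writes the frequency-localized piece as a total $\rd_t$-derivative, and then repeats the integration-by-parts-plus-wave-coordinate maneuver to reveal $Q_0$ and $Q_{t\alpha}$ null forms; the $K=2$ assumption is needed to quantify the gains in this decomposition. So while your overall skeleton and your insight about which terms are dangerous are right, the mechanism you propose for the decisive $P$ term -- algebraic cancellation at the level of bilinear measures -- would not close, and must be replaced by the trilinear null-structure revelation and the accompanying trilinear compensated compactness machinery.
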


Combining Proposition~\ref{prop:wave.Ricci} and Theorem~\ref{thm:wave.Burnett.key} then yields Theorem~\ref{thm:Burnett.wave} (cf.~the $\mathbb U(1)$-symmetric case in Definition~\ref{def:the.final.def}).

It must be emphasized that the proof of Theorem~\ref{thm:wave.Burnett.key} strongly relies on the fact we are considering $\mu$ in Theorem~\ref{thm:wave.Burnett.key}. The analogous statement is not expected to hold for each individual $\mu_{\alp\bt\rho\sigma}$ as we have used cancellations coming from the combination in \eqref{def:wave.coord.mu}; cf.~Section~\ref{sec:wave.map.structure}.

The starting point of the proof of Theorem~\ref{thm:wave.Burnett.key} is an ``energy estimate'' type computation in the spirit of Section~\ref{sec:Mink.warmup}, which gives the following propagation equation for $\mu$:
\begin{equation}\label{eq:wave.Burnett.main}
\begin{split}
\int_{\calS^*\RR^{3+1}} \{g_0^{\mu\nu} \xi_\mu\xi_\nu, \widetilde{a}(x,\xi) \}  \, \ud \mu = &\: \f 14\int_{\calS^*\RR^{3+1}} g_0^{\mu\nu} \xi_\nu \widetilde{a}(x,\xi)  \rd_{x^\mu} (2g_0^{\alp\alp'} g_0^{\bt\bt'} - g_0^{\alp\bt} g_0^{\alp'\bt'}) \, \ud \mu_{\alp\bt\alp'\bt'} \\
&\: + \f 12 \lim_{i\to \infty}  \la (2 g_0^{\alp\alp'} g_0^{\bt\bt'} - g_0^{\alp\bt} g_0^{\alp'\bt'}) \rd_{t} (h_i)_{\alp\bt}, A\widetilde{\Box}_{g_0} (h_i)_{\alp'\bt'} \ra.
\end{split}
\end{equation}

The proof of Theorem~\ref{thm:wave.Burnett.key} now boils down to showing that the right-hand side of \eqref{eq:wave.Burnett.main} $ \equiv 0$. This is again a compensated compactness type argument: we plug in the equation for $\widetilde{\Box}_{g_0} (h_i)_{\alp'\bt'} $ from Lemma~\ref{lem:wave.h.equation} (with $\mathrm{Ric}(g_i) = 0$, $\forall i \geq 1$) and investigate each term.
\begin{enumerate}
\item Since $\mathrm{Ric}(g_0)$ is smooth and $i$-independent, its contribution $=0$ using that $\rd_t(h_i)_{\alp\bt} \rightharpoonup 0$ weakly.
\item The linear term $L_{\mu\nu}$ in \eqref{eq:Ric.form} can be computed exactly using the microlocal defect measures. It turns out that there is an  algebraic cancellation where the contribution from $L_{\mu\nu}$ cancels \emph{exactly} the first term on the right-hand side of \eqref{eq:wave.Burnett.main}.
\item For the terms involving null forms, we show that none of them contribute to the limit. For $Q_0$, this can be achieved by a simple integration by parts argument as in Section~\ref{sec:Q0.trilinear}. For $Q_{\alp\bt}$, we use the trilinear normal form estimates of Ionescu--Pasauder \cite{IP} (introduced in their proof of the stability of Minkowski spacetime for the Einstein--Klein--Gordon system). 
\item For the term $P$ which fails the classical null condition, there is a hidden null structure in the \emph{trilinear} term which can be revealed after integration by parts:
\begin{equation*}
\begin{split}
&\: \la g_0^{\alp\alp'} g_0^{\bt\bt'} \rd_t (h_i)_{\alp'\bt'}, A ( g_0^{\rho\rho'} g_0^{\sigma\sigma'} \rd_\alp (h_i)_{\rho\sigma} \rd_\bt (h_i)_{\rho'\sigma'} )\ra \\
= &\: \la g_0^{\alp\alp'} g_0^{\bt\bt'} (\rd_t A^* (h_i)_{\alp'\bt'}),  g_0^{\rho\rho'} g_0^{\sigma\sigma'} \rd_\alp (h_i)_{\rho\sigma} \rd_\bt (h_i)_{\rho'\sigma'} \ra + o(1) \\
= &\: \la g_0^{\alp\alp'} g_0^{\bt\bt'} (\rd_\alp A^* (h_i)_{\alp'\bt'}),  g_0^{\rho\rho'} g_0^{\sigma\sigma'} \rd_t (h_i)_{\rho\sigma} \rd_\bt (h_i)_{\rho'\sigma'} \ra \\
&\: - \la g_0^{\alp\alp'} g_0^{\bt\bt'} Q_{t\alp}(A^* (h_i)_{\alp'\bt'},(h_i)_{\rho\sigma}) ,  g_0^{\rho\rho'} g_0^{\sigma\sigma'} \rd_\bt (h_i)_{\rho'\sigma'} \ra + o(1) \\
= &\: \f 12\la g_0^{\alp\alp'} Q_0 (A^* (h_i)_{\alp\alp'}, (h_i)_{\rho'\sigma'}),  g_0^{\rho\rho'} g_0^{\sigma\sigma'} \rd_t (h_i)_{\rho\sigma} \ra \\
&\: - \la g_0^{\alp\alp'} g_0^{\bt\bt'} Q_{t\alp}(A^* (h_i)_{\alp'\bt'},(h_i)_{\rho\sigma}) ,  g_0^{\rho\rho'} g_0^{\sigma\sigma'} \rd_\bt (h_i)_{\rho'\sigma'} \ra + o(1),
\end{split}
\end{equation*}
where in the second step, we swapped the $\rd_t$ and $\rd_\alp$ derivative at the expense of a null form; in the third step, we used the wave coordinate condition. Finally, since all these terms consist of null forms, they vanish in the limit.
\item Finally, for the quasilinear term $g_0^{\mu\mu'} g_0^{\nu\nu'}(h_i)_{\mu\nu} \rd^2_{\mu'\nu'} (h_i)_{\alp'\bt'}$, we first observe that the main difficulty arises when $(h_i)_{\mu\nu}$ has high frequency and that the frequency lives near the light cone of $g_0$:
\begin{itemize}
\item If $(h_i)_{\mu\nu}$ has low frequency, then we integrate by parts and use Calder\'on commutator estimates to force a derivative to act on $(h_i)_{\mu\nu}$. Since $(h_i)_{\mu\nu}$ has low frequency, this leads to an improvement.
\item If $(h_i)_{\mu\nu}$ has high frequency, but the frequency lives away from the light cone, then we use that we have estimates for $\Box_{g_0} h_i$ (because of the Einstein equation) and that $\Box_{g_0}$ is elliptic for such a frequency regime.
\end{itemize}
We thus concentrate to the high-frequency regime near the light cone. We can write the frequency localized part as a total $\rd_t$ derivative, i.e., $(h_i)^{\mathrm{freq.~loc.}}_{\mu\nu} = \rd_t (\mathfrak k_i)_{\mu\nu}$, where $\sum_{j\leq 1}\lambda^{j}\| \rd^j (\mathfrak k_i)_{\mu\nu} \|_{L^2} \ls \lambda^{1+b}$ and $\|\Box_{g_0} \mathfrak k_i\|_{L^2} \ls \lambda^{b}$ for some $b \in (\f 12, 1)$. Integrating by parts, we obtain
\begin{equation}
\begin{split}
&\: \la g_0^{\alp\alp'} g_0^{\bt\bt'}  \rd_\gamma (h_i)_{\alp\bt}, A (g_0^{\mu\mu'} g_0^{\nu\nu'}\rd_t (\mathfrak k_i)_{\mu\nu} \rd^2_{\mu'\nu'} (h_i)_{\alp'\bt'} )\ra \\
= &\: \la g_0^{\alp\alp'} g_0^{\bt\bt'}  \rd_\gamma (h_i)_{\alp\bt}, A (g_0^{\mu\mu'} g_0^{\nu\nu'} \rd_{\mu'} (\mathfrak k_i)_{\mu\nu} \rd^2_{t\nu'} (h_i)_{\alp'\bt'} )\ra \\
&\:+  \la g_0^{\alp\alp'} g_0^{\bt\bt'}  \rd_\gamma (h_i)_{\alp\bt}, A (g_0^{\mu\mu'} g_0^{\nu\nu'} Q_{t\mu'}((\mathfrak k_i)_{\mu\nu}, \rd_{\nu'} (h_i)_{\alp'\bt'}) )\ra \\
= &\: \f 12 \la g_0^{\alp\alp'} g_0^{\bt\bt'} \rd_\gamma (h_i)_{\alp\bt}, A (g_0^{\mu\mu'} g_0^{\nu\nu'} \rd_{\nu} (\mathfrak k_i)_{\mu\mu'} \rd^2_{t\nu'} (h_i)_{\alp'\bt'} )\ra \\
&\:+  \la g_0^{\alp\alp'} g_0^{\bt\bt'} \rd_\gamma (h_i)_{\alp\bt}, A (g_0^{\mu\mu'} g_0^{\nu\nu'}  Q_{t\mu'}((\mathfrak k_i)_{\mu\nu}, \rd_{\nu'} (h_i)_{\alp'\bt'}) )\ra + o(1)\\
= &\: \f 12 \la g_0^{\mu\mu'} g_0^{\alp\alp'} g_0^{\bt\bt'}  \rd_\gamma A^*(h_i)_{\alp\bt},  Q_0( (\mathfrak k_i)_{\mu\mu'}, \rd_{t} (h_i)_{\alp'\bt'}) \ra \\
&\:+  \la g_0^{\mu\mu'} g_0^{\nu\nu'} g_0^{\alp\alp'} g_0^{\bt\bt'} \rd_\gamma A^*(h_i)_{\alp\bt},  Q_{t\mu'}((\mathfrak k_i)_{\mu\nu}, \rd_{\nu'} (h_i)_{\alp'\bt'}) \ra + o(1),
\end{split}
\end{equation}
where in the first step we exchanged $\rd_t$ and $\rd_{\mu'}$ at the expense of a null form, in the second step we used that the wave coordinate condition for $h$ implies a good bound for $H(g_0)(\rd \mathfrak k)$\, and in the third step we noted that the commutation of $A^*$ with $g_0^{\mu\mu'} g_0^{\alp\alp'} g_0^{\bt\bt'}  \rd_\gamma$ and $g_0^{\mu\mu'} g_0^{\nu\nu'} g_0^{\alp\alp'} g_0^{\bt\bt'} \rd_\gamma$ are in $\Psi^{-1}$. As before, we have thus obtained null forms in every term.
\end{enumerate}

See \cite{HL.wave} for details of the proof.

\subsection{The geometric optics approximation with one phase}\label{sec:touati}
In \cite{Touati2}, Touati considered the geometric optics solutions with one phase in generalized wave coordinates in $(3+1)$-dimensions. This could be thought of as an analogue of the results in Section~\ref{sec:U1.backward}, without any symmetry assumptions, but restricted only to one phase.

More precisely, Touati showed in \cite{Touati2} the existence of solutions to Einstein vacuum equations in generalized wave coordinates \eqref{gw} of the schematic form
$$g_\lambda = g_0 + \lambda g^{(1)} \left(\frac{u_0}{\lambda}\right) + \widetilde{g}_\lambda,$$
where $g_0$ is a metric in wave coordinates and is a small-data solution to Einstein-null dust equation:
\begin{equation}
	\left\{\begin{array}{l}
		\mathrm{Ric}_{\mu \nu}= F_0^2\partial_\mu u_0 \partial_\nu u_0\\
		g_0^{-1}(du_0,du_0)=0\\
		2 g_0^{\alpha \beta}\partial_\alpha u_0 \partial_\beta F_0 +\Box_{g_0} u_0 = 0.
\end{array}
\right.
\end{equation}
The tensor $g^{(1)} \left(\frac{u_0}{\lambda}\right) $ is given by $\cos\left(\frac{u_0}{\lambda}\right)F_{\mu \nu}^{(1)}$with 
\begin{align}
	\label{pol} g_0^{\mu \nu}(\partial_\mu u_0 F^{(1)}_{\sigma \nu}-\frac{1}{2}\partial_\sigma u_0 F^{(1)}_{\mu \nu})=&\: 0,\\
	\frac{1}{8}|F^{(1)}|^2_{g_0}-\frac{1}{16}(\mathrm{tr}_{g_0}F^{(1)})^2= &\: F_0^2, \\
	\label{transport}2\partial^\alpha u_0 D_\alpha F^{(1)}_{\mu \nu}+(\Box_{g_0} u_0)F_{\mu \nu}^{(1)}= &\:0.
	\end{align}
This ansatz correspond to the one which was formally computed by Choquet-Bruhat in \cite{CB.HF}. 
The proof of Touati is based on the following elements.
\begin{itemize}
	\item As in \cite{HL.HF}, one needs a more precise expansion than is given above, and there is a need to go up to second order in the ansatz.
	\item The proof uses generalized wave coordinates, where $H^\alp$ in \eqref{eq:gen.wave.coord} is $O(\lambda)$ small but not zero. The choice of  relies on the polarization condition \eqref{pol} and is used to absorb the unwanted harmonics. The fact that this is possible is in some sense a similar aspect as the elliptic equations satisfied by the metric coefficients in $\m U(1)$ symmetry. 
	\item There is an apparent loss of derivative in the construction of the high-frequency ansatz, due to the quasilinear nature of Einstein equations, which was dealt with using a clever frequency cutoff.
	\item It can already be seen in the first order ansatz that $F^{(1)}$ should satisfy both a polarization condition  \eqref{pol} and a transport equation \eqref{transport}. The fact that they are compatible is a computation, already present in \cite{CB.HF}. Similar conditions are present for the second order ansatz. The compatibility of these conditions, intractable by computation, is proved using the Bianchi identities.
\item The high-frequency ansatz, with terms satisfying transport equations, and the used of generalized wave coordinate condition require a special construction of the initial data. This is done in \cite{Touaticontraintes}.
\end{itemize}

\subsection{Superposition of high frequency waves}\label{sec:touati.new}
This geometric optic construction have been extended by Touati in the very recent work \cite{touati2024reverse} to construct exact solutions to Einstein vacuum equations which can be written as a superposition of high-frequency waves
$$g_\lambda = g_0 + \lambda\sum_\bA \cos\left(\frac{u_\bA}{\lambda}\right)F_\bA^{(1)}+ \widetilde{g}_\lambda,$$
 This is an analogue of the result in \cite{HL.HF}, but now without any symmetry assumptions. 
The fact that high frequency waves propagating in different null direction produce only a quadratic error in $\lambda$ is not a priori straightforward and is again a manifestation of the structure of the nonlinearity in Einstein equations.

\section{Future directions and open problems}\label{sec:future}

\subsection{Alternative characterizations of limit spacetimes}

Burnett's conjecture (Conjecture~\ref{conj:forward}) is phrased in terms of a massless Vlasov field that is yet to be determined. In practice, in all the known results that we have surveyed, there is a natural candidate for the massless Vlasov field, which is a microlocal defect measure associated with the failure of convergence. Nevertheless, it could be useful to find a characterization of the limit spacetime in terms of the metric itself.

\begin{problem}
Given a smooth metric $g$ with an Einstein tensor $G(g) \doteq \mathrm{Ric}(g) - \f12 Rg$ which is non-negative definite (i.e., $G(X,X) \geq 0$ for any vector $X$) and trace-free. Find suitable criteria to determine whether $g$ can be viewed as a solution to the Einstein--massless Vlasov system after introducing a massless Vlasov field.
\end{problem}

\subsection{Analogue of Burnett's conjecture in lower regularity}

In \eqref{eq:intro.HF.def}, uniform pointwise estimates are assumed for the derivative of the metric, and even though second derivatives are allowed to grow, they are assumed to grow in a particular fashion with good pointwise control. However, the question in Conjecture~\ref{conj:forward} already makes sense for $g_i \to g$ in $C^0$ and weakly in $H^1$. It is of interest to understand whether the Burnett conjecture continues to hold with the weaker notion of convergence.
\begin{problem}\label{prob:low.regularity}
Does the Burnett conjecture still hold if we only assume $g_i \to g$ in $C^0$ and weakly in $H^1$?
\end{problem}

The question in Problem~\ref{prob:low.regularity} has an affirmative answer in the setting of angularly regular spacetimes; see Section~\ref{sec:angularly.regular}. However, other results (Theorems~\ref{thburnett} and \ref{thm:Burnett.wave}) rely on stronger assumptions.
Weak convergence in $H^1$ can be thought of as natural for two reasons: (1) it is the weakest known regularity for a notion of weak solution to make sense \cite{GerochTraschen}, and (2) weak convergence in $H^{1+\epsilon}$ for $\epsilon>0$ immediately implies that the limit is also vacuum (and hence the Burnett conjecture becomes trivial); see Section~\ref{sec:intro.backward}. 

Concerning Problem~\ref{prob:low.regularity}, one can in fact already ask a simpler question in the setting of wave maps.
\begin{problem}
Let $d=2,3$ and $(\mathcal N, h)$ be a Riemannian manifold. Suppose $\{\Phi_i\}_{i=1}^\infty$ is a sequence of smooth wave maps $\Phi_i: \mathbb R^{d+1} \to \mathcal N$ such that $\Phi_i$ converges to a limiting smooth map $\Phi: \mathbb R^{d+1} \to \mathcal N$ in $C^0$ and weakly in $H^1$. Does a suitably-defined microlocal defect measure characterizing the convergence satisfies the massless Vlasov equation?
If the answer is negative, it would also be of interest to understand the exact regularity threshold for the failure.
\end{problem}

\subsection{Issue of gauge} In all the formulations above, a specific \emph{gauge} is fixed. It is of interest to understand whether any of the results can be formulated in a gauge-independent manner. A simpler question would be to understand whether the conclusion of Burnett's conjecture still holds after ``high-frequency gauge transformation.''
\begin{problem}
Is the Burnett conjecture gauge dependent? In particular, given a sequence of solution $(\mathcal M,g_i)$ to the Einstein vacuum equations, and assume that $g_i \to g_0$ according to \eqref{eq:intro.HF.def} for some $g_0$ satisfying the Einstein--massless Vlasov system (with a suitable Vlasov field). Introduce a sequence of new coordinates $\{y^\alp_i\}_{i=1}^\infty$ such that with respect to the original coordinates $\{x^\bt\}$ the following bounds are satisfied:
$$ \Big| \f{\rd y^\alp_i}{\rd x^\bt} - \de^\alp_\bt \Big| \ls \lambda_i,\quad \Big| \f{\rd^2 y^\alp_i}{\rd x^\bt\rd x^{\bt'}} \Big| \ls 1.$$
Is the limit in the new coordinates still a solution to the Einstein--massless Vlasov system?
\end{problem}

\subsection{Global solutions}

The constructions concerning Conjecture~\ref{conj:backward} in this survey are all \emph{local-in-time} results. (The only known global constructions require $\mathbb T^2$ symmetry, see Section~\ref{sec:related.examples}.) In view of the known results on the stability of Minkowski spacetime both in vacuum\footnote{Related to the constructions in $\mathbb U(1)$, we also note that the stability of Minkowski spacetime in vacuum is known under $\mathbb U(1)$ symmetry \cite{Huneau.stability}, despite the data being not asymptotically flat when viewed as data in $3+1$ dimensions.} \cite{CK, LinRod} and for the Einstein--massless Vlasov system \cite{BFJST, mT2017}, one may expect it to be possible to construct global examples, at least in a neighborhood of Minkowski spacetime.

\begin{problem}
Construct a family of \textbf{global} (in the sense of geodesically complete) vacuum spacetimes in a neighborhood of Minkowski spacetime with high-frequency oscillations so that the limit corresponds to a global spacetime satisfying the Einstein--massless Vlasov system.
\end{problem}

Perhaps the simplest global constructions could come from outgoing high-frequency pulses constructed in a similar manner as in the semi-linear problem considered in \cite{Touati3}. Near null infinity, one may consider these outgoing high-frequency pulses in a double null coordinate gauge; for this, the ideas in \cite{Ang20} on semi-global impulsive gravitational waves may be relevant.

\subsection{Large solutions}

The only construction that we have which allows for large data is in the angularly regular setting of Section~\ref{sec:angularly.regular}. In particular, the constructions discussed in Sections~\ref{sec:U1.backward}, \ref{vlasov} and \ref{sec:touati} all used that the solutions are close to Minkowski. Notice that in the $\m U(1)$ setting, because of our use of an elliptic gauge, smallness is required even for local existence of \emph{smooth} solution. It would be interesting to carry out these constructions without smallness assumptions:
\begin{problem}
Construct local high-frequency solutions as in Sections~\ref{sec:U1.backward}, \ref{vlasov} and \ref{sec:touati} but such that the limiting solution is \textbf{far away from Minkowski spacetime}.
\end{problem}

\subsection{Geometric optics for infinite number of families of null dusts without symmetry}

A natural problem that arises from Touati's work \cite{touati2024reverse} (see Section~\ref{sec:touati.new}) is to extend the results in \cite{touati2024reverse} to geometric optics solutions with an infinite number of phases. This can be viewed as an analogue of the results in Section~\ref{vlasov} so that the limiting spacetime has a Vlasov field which is absolutely continuous with respect to the Lebesgue measure, but in $(3+1)$-dimensions without symmetry assumptions.
\begin{problem}
Construct geometric optics solutions to the Einstein vacuum equations (in $(3+1)$ dimensions without any symmetry) with infinitely many phases so that the high-frequency limit corresponds to solutions to the Einstein--massless Vlasov system where the Vlasov field is absolutely continuous with respect to the Lebesgue measure.
\end{problem}


\subsection{Geometric optics beyond caustics}

All the constructions so far rely on geometric optics type constructions where that the null hypersurfaces remain well-controlled. It would be very interesting to go beyond this and to study geometric optics beyond caustics in the nonlinear setting:
\begin{problem}
Construct high-frequency geometric optics type solutions to the Einstein vacuum equations \textbf{beyond caustics}.
\end{problem}
See \cite{Duistermaat, Hormander71, Ludwig, Maslov} for this type of constructions for linear equations. The corresponding nonlinear theory is much less developed, and appears quite far to be applicable to the Einstein equations, but we refer the reader to \cite{Carles, HunKel87, JMR96} for some related results.

\subsection{Burnett's conjecture from the initial data point of view} While Burnett's conjecture is primarily about the \emph{dynamics} of the Einstein equations, one can also consider the question on a fixed spacelike hypersurface, and ask about the behavior of high-frequency limits of solutions to the constraint equations. This could be slightly simpler since the constraint equations can be thought of as being elliptic.

More precisely, consider a sequence $\{ (\hat{g}_i, \hat{k}_i) \}_{i=1}^\infty$ (where $\hat{g}_i$ are Riemannian metrics and $\hat{k}_i$ are symmetric covariant $2$-tensors) satisfying the vacuum constraint equations
\begin{equation}\label{eq:constraints}
R(\hat{g}_i) - |\hat{k}_i|^2_{\hat{g}_i} + \mathrm{tr}_{\hat{g}_i} \hat{k}_i = 0\quad \hat{\nabla}_j \hat{k}_\ell{}^j - \hat{\nabla}_\ell \mathrm{tr}_{\hat{g_i}} \hat{k}_i = 0,
\end{equation}
where $R$ denotes the scalar curvature, $\hat{\nabla}$ denotes the Levi-Civita connection of $\hat{g}_i$ and indices are raised with respect to $\hat{g}_i$.
We would like to understand the following problem:
\begin{problem}
Classify all limits of suitable ``high-frequency solutions'' $(\hat{g}_i, \hat{k}_i)$ to \eqref{eq:constraints}.
\end{problem}

Already one can ask the question when $\hat{k}_i \equiv 0$ for every $i \in \mathbb N$. In this case, the constraint equations \eqref{eq:constraints} reduce to simply $R(\hat{g}_i) = 0$. It is known by the works of Gromov \cite{Gro.Burnett} and Bamler \cite{Bam.Burnett} that even if $\hat{g}_i$ only has a $C^0$ limit $\hat{g}_0$, the limit must satisfy $R(\hat{g}_0) \geq 0$. This is consistent with Conjecture~\ref{conj:forward} as the limit must satisfy the weak energy condition. In the spirit of Conjecture~\ref{conj:backward}, one may also ask whether all non-negative scalar curvature metrics arise as $C^0$ limits of scalar-flat metrics. For this, we refer the reader to a related result of Lohkamp \cite[Theorem~B]{Lohkamp} which shows that the set of metrics with non-positive scalar curvature is $C^0$-dense in the set of all metrics. This may motivative the following conjecture:
\begin{conjecture}
Let $\mathcal M$ be a manifold of dimension $\geq 3$. Suppose $\hat{g}_0$ is a smooth metric on $\mathcal M$ with $R(\hat{g}_0) \geq 0$. Then there exists a sequence of smooth metrics $\{\hat{g}_i\}_{i=1}^\infty$ on $\mathcal M$ with $R(\hat{g}_i) = 0$ that converge to $\hat{g}_0$ in $C^0_{\mathrm{loc}}$.
\end{conjecture}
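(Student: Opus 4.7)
The plan is to adapt the high-frequency geometric optics construction developed in Sections~\ref{sec:U1.backward}--\ref{vlasov} to the elliptic scalar curvature equation. The guiding heuristic is that, just as high-frequency oscillations of a vacuum Lorentzian metric produce an effective massless Vlasov stress-energy in the limit, high-frequency oscillations of a scalar-flat Riemannian metric produce an effective contribution to the scalar curvature of the limit metric; a direct coordinate computation shows that in Riemannian signature this effective contribution has precisely the sign compatible with the hypothesis $R(\hat g_0)\ge 0$.

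Concretely, I would begin from the ansatz
\[
\hat g_\lambda^{\mathrm{app}} = \hat g_0 + \lambda \sum_{\bA} A_{\bA}(x)\, T_{\bA}(x)\, \cos\!\left(\tfrac{\phi_{\bA}(x)}{\lambda}\right) + \lambda^2 K_\lambda(x),
\]
with symmetric $2$-tensors $T_{\bA}$ and phase functions $\phi_{\bA}$. Expanding $R(\hat g_\lambda^{\mathrm{app}})$ in powers of $\lambda$, the $O(\lambda^{-1})$ oscillating term is killed by imposing the pointwise polarization condition
$T_{\bA}^{ij}\,\partial_i\phi_{\bA}\,\partial_j\phi_{\bA} = |\nabla\phi_{\bA}|^2\,\mathrm{tr}_{\hat g_0} T_{\bA}$,
which is a single linear constraint on each $T_{\bA}$ leaving ample freedom. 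The $O(1)$ oscillating parts (including second harmonics $\cos(2\phi_{\bA}/\lambda)$ and cross-phases $(\phi_{\bA}\pm\phi_{\bB})/\lambda$) are cancelled by the correction $K_\lambda$, exactly in the spirit of Section~\ref{sec:dust.improved.para}. The $O(1)$ non-oscillating part is the averaged quadratic term $\sum_{\bA} A_{\bA}^2\,\mathcal Q[T_{\bA},\phi_{\bA}]$, where a direct computation (readily verified on a flat background) shows that $-\mathcal Q$ is a positive-definite quadratic form in the polarization tensors. The matching condition $\sum_{\bA} A_{\bA}^2\,\mathcal Q[T_{\bA},\phi_{\bA}] = -R(\hat g_0)$ can therefore be solved pointwise precisely because $R(\hat g_0)\ge 0$. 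Iterating these corrections to higher order yields $R(\hat g_\lambda^{\mathrm{app}}) = O(\lambda^k)$ for any preassigned $k\ge 1$, while keeping $\|\hat g_\lambda^{\mathrm{app}} - \hat g_0\|_{C^0} = O(\lambda)$.

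The final step is a conformal correction: set $\hat g_\lambda = (1+w_\lambda)^{4/(n-2)}\hat g_\lambda^{\mathrm{app}}$, where $w_\lambda$ solves the linear elliptic equation
\[
-\tfrac{4(n-1)}{n-2}\Delta_{\hat g_\lambda^{\mathrm{app}}} w_\lambda + R(\hat g_\lambda^{\mathrm{app}})\, w_\lambda = -R(\hat g_\lambda^{\mathrm{app}}),
\]
so that $R(\hat g_\lambda) = 0$ exactly. Because the right-hand side is $O(\lambda^k)$ in $L^\infty$ and the operator is a small perturbation of the Laplacian, standard elliptic theory and Sobolev embedding deliver $\|w_\lambda\|_{C^0_{\mathrm{loc}}}\to 0$; on non-compact $\mathcal M$ one works on a compact exhaustion with Dirichlet data and extracts a sequence by diagonal argument. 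The main obstacle is twofold. First, the matching condition $\sum_{\bA} A_{\bA}^2\,\mathcal Q = -R(\hat g_0)$ must be solved smoothly and globally, including across zeros of $R(\hat g_0)$ where the ansatz must degenerate gracefully; handling this flexibly may well require replacing the finite sum by an integrated family of phases, exactly as the passage from Section~\ref{sec:U1.backward} to Section~\ref{vlasov}. Second, the conformal-Laplacian correction must be controlled uniformly in $C^0_{\mathrm{loc}}$ where $R(\hat g_\lambda^{\mathrm{app}})$ changes sign and near the boundary of the compact exhaustion; this is where one expects the most delicate analysis, analogous to the bootstrap arguments controlling the remainder $\widehat{g}_\lambda - g_0 - g_1 - g_2$ in Section~\ref{sec:dust.extra.difficulties}.
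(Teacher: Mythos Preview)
This statement is posed in the paper as an open \emph{conjecture} in the section on future directions; the paper offers no proof, only motivation via the Gromov--Bamler $C^0$-rigidity of $R\ge 0$ and an analogy with Lohkamp's density theorem. There is thus no proof in the paper to compare your proposal against --- you are attacking an open problem.

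Your heuristic is in the right spirit, and the sign of the averaged quadratic term is consistent with Gromov--Bamler. But there is a genuine gap in the conformal-correction step when $\mathcal M$ is closed. You want $(1+w_\lambda)$ in the kernel of the conformal Laplacian $L_{\hat g_\lambda^{\mathrm{app}}}=-\tfrac{4(n-1)}{n-2}\Delta+R(\hat g_\lambda^{\mathrm{app}})$. Since $R(\hat g_\lambda^{\mathrm{app}})=O(\lambda^k)$, this operator is an $O(\lambda^k)$ perturbation of $-c\Delta$, whose first eigenvalue on a closed manifold is zero with constant eigenfunction; hence the first eigenvalue of $L_{\hat g_\lambda^{\mathrm{app}}}$ is itself only $O(\lambda^k)$, and inverting against an $O(\lambda^k)$ right-hand side yields $w_\lambda=O(1)$, not $o(1)$. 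Put differently, a conformal class contains a scalar-flat representative only if its Yamabe constant is exactly zero --- a codimension-one condition your ansatz does not arrange. For $\mathcal M=\mathbb S^n$ with $\hat g_0$ round, scalar-flat metrics do exist (Kazdan--Warner), but none lie in the conformal class of any metric $C^0$-close to $\hat g_0$, so your perturbative conformal scheme cannot reach them. Your compact-exhaustion remark covers only the noncompact case; the closed case needs a genuinely different correction mechanism.

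Two smaller points: your claim that $-\mathcal Q$ is positive-\emph{definite} cannot be right as stated, since pure-gauge perturbations $T=\mathcal L_X\hat g_0$ lie in its kernel; at best it is positive semi-definite on transverse-traceless tensors, which is what you actually need. And the matching $\sum_\bA A_\bA^2\,\mathcal Q=-R(\hat g_0)$ forces $A_\bA\sim\sqrt{R(\hat g_0)}$, which fails to be smooth at generic zeros of $R(\hat g_0)$; passing to an integrated family of phases does not obviously cure this.
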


\subsection{From Einstein--massless Vlasov back to Einstein vacuum}

One interesting prospect of the questions surrounding the Burnett conjecture is the possibility of understanding the vacuum equations using
\begin{itemize}
\item information about the Einstein--massless Vlasov system, and
\item the bridge between the Einstein vacuum system and the Einstein--massless Vlasov system via the Burnett conjecture.
\end{itemize}
This is particularly useful when the situation is considerably simpler with a Vlasov field, possibly because one can impose spherical symmetry or because some explicit computations can be done. We give a few examples below: Section~\ref{sec:trapped.surfaces} is an example that is already discussed in \cite{LR.HF}, while Sections~\ref{sec:ads} and \ref{sec:geons} are some possible future directions.

\subsubsection{Formation of trapped surfaces}\label{sec:trapped.surfaces} The celebrated incompleteness theorem of Penrose \cite{Penrose.2} shows that the presence of trapped surfaces, together with suitable energy conditions of the matter field and non-compactness of initial data, must imply that the maximal Cauchy development of the initial data is geodesically incomplete. However, the theorem of Penrose does not address the question of the \textbf{dynamical formation} of trapped surfaces. This was particularly difficult in the vacuum case, where spherical symmetry cannot be imposed (because of Birkhoff's theorem), and one is forced to deal with a long-time, large-data regime for the Einstein vacuum equations outside symmetry. This problem was finally resolved in the monumental work of Christodoulou \cite{Chr}, in which he constructed spacetimes in which a ``short pulse'' of gravitational waves focus to create a trapped surface. Here, the short pulse data is concentrated on a short length scale of size $\de$. In the proof, Christodoulou made use of the short length scale to propagate a hierarchy of $\de$-dependent estimates  so that the estimates can be closed despite the solution being in a large-data regime.

As was pointed out in \cite{LR.HF}, in the $\de\to 0$ limit, the construction of Christodoulou coincides with the dynamical trapped surface formation scenario with a null dust shell. While the limiting procedure requires hard analysis to justify (and needs the estimates in \cite{Chr}), the trapped surface formation mechanism in the null dust shell case is much easier to understand, and provides a simpler conceptual model for Christodoulou's construction.

%

\subsubsection{Instability of anti-de Sitter spacetime}\label{sec:ads}

The anti-de Sitter spacetime is a vacuum solution to the Einstein equations with a negative cosmological constant. It is conjectured \cite{eguchiha} to be unstable under reflective boundary conditions, and this has been studied heuristically and numerically \cite{bizon2011weakly}. (Note, however, that it is expected to be globally nonlinear asymptotically stable under \emph{dissipative} boundary condition \cite{dissipative}.)

The instability of the anti-de Sitter spacetime for the Einstein vacuum equations is still at present out of reach. However, in a recent breakthrough \cite{Mos23}, Moschidis proved that the anti-de Sitter spacetime is unstable to trapped surface formation for the Einstein--massless Vlasov system. (See also \cite{Mos20} for results on the null dust model with an inner mirror). 

Given the relation between the Einstein vacuum equations and the Einstein--massless Vlasov system as proposed by Burnett's conjectures, it is natural to ask to what extent \cite{Mos23} sheds light on the instability problem for the anti-de Sitter spacetime in vacuum. (Notice that there is an obvious issue to directly apply the results \cite{Mos23} here, namely that the proof in \cite{Mos23} relies on a well-posedness result in a very weak topology, which is only expected to hold in spherical symmetry.)

%

\subsubsection{Gravitational geons and static solutions to the Einstein--massless Vlasov system}\label{sec:geons}

Another class of interesting solutions to the Einstein--massless Vlasov system are \emph{static} solutions which neither disperse nor collapse into a black hole. These solutions have been constructed in spherical symmetry in \cite{AFT}. On the other hand, this kind of static solutions are not expected to exist in vacuum. In view of Burnett's conjectures, it is of interest to construct high-frequency vacuum solutions that weakly approximate these static solutions, at least for a long time. This is related to the gravitational geons of Brill--Hartle \cite{BrillHartle}, where high-frequency gravitational waves propagate in a confined region on a background geometry, which is created by the effective stress-energy-momentum of the waves themselves. See also \cite{AndBrill, Wheeler}.

\subsection{Non-vacuum solutions}
So far we have only discussed the high-frequency limit of vacuum solutions. It is natural to study some analogue of Burnett's question when matter fields are present:
\begin{problem}\label{prob:non.vacuum}
For suitable physical matter models, characterize the high-frequency limits of solutions to the Einstein--matter system.
\end{problem}

There are a few sub-problems that concerning Problem~\ref{prob:non.vacuum}. One may first want to study the high-frequency limits of solutions to the Einstein--massless Vlasov system, as they naturally arise as limits of vacuum solutions.
\begin{problem}
Do high-frequency limits of solutions to the Einstein--massless Vlasov system necessarily solve the Einstein--massless Vlasov system?
\end{problem}
Put differently, is the set of solutions to the Einstein--massless Vlasov system weakly closed? This seems natural to expect if indeed solutions to the Einstein--massless Vlasov system exhaust all possible weak limits of vacuum solutions as suggested by Conjecture~\ref{conj:forward} and Conjecture~\ref{conj:backward}.

As for coupling with other matter fields, perhaps one could distinguish between matter fields which propagate at the speed of light (such as Maxwell field, scalar field, etc.) and those which propagate at a slower speed (such as Euler, massive Vlasov, etc.). 

For matter fields which propagate at the speed of light, one should in principle be able to use the techniques introduced in the works surveyed above to determine the equations for the limiting solutions. At least in the setting of angularly regular spacetimes in Section~\ref{sec:angularly.regular}, the low-regularity existence result in Theorem~\ref{thm:LR2} holds also more generally for the Einstein--Maxwell system or the Einstein--scalar field system with exactly the same proof. This should allow one to extract a limit and to analyze the limiting spacetime. More generally, one can study the following problem:
\begin{problem}
Characterize the high-frequency limits of solutions to the Einstein--Maxwell system or the Einstein--scalar field system.
\end{problem}

The case when the Einstein equations is coupled with Euler matter or massive Vlasov matter may even be more interesting. In fact, even the question concerning weak limits of solutions to the (non-relativistic) Euler equations (without coupling to Einstein) has attracted a lot of interest in connection to turbulence and the Onsager conjecture \cite{DLeSz1, DLeSz2, Isett.book, Isett.Onsager}. 
\begin{problem}
Characterize the high-frequency limits of solutions to the Einstein--Euler system or the Einstein--massive Vlasov system.
\end{problem}

\subsection{Semi-classical limits for the Einstein--Klein Gordon system}

Beyond the limit \eqref{eq:intro.HF.def}, one can study other forms of high-frequency limits. One possible example would be to consider the Einstein--Klein--Gordon system, and consider high-frequency limits simultaneously with the semi-classical limits of the Klein--Gordon equation, i.e.,
$$\Box_g \phi + \hbar^{-2} \phi = 0 \quad \hbox{as $\hbar \to 0$}.$$

\begin{problem} 
Characterize the high-frequency limits of solutions to Einstein--Klein--Gordon equations in the semi-classical regime.
\end{problem}

\bibliographystyle{DLplain}
\bibliography{HFlimit}

\end{document}